\newtheorem{theorem}{Theorem}[section]
\newtheorem{lemma}[theorem]{Lemma}
\newtheorem{claim}[theorem]{Claim}
\newtheorem{definition}[theorem]{Definition}
\newcommand{\sq}{\hbox{\rlap{$\sqcap$}$\sqcup$}}
\newcommand{\qed}{\hspace*{\fill}\sq}
\newenvironment{proof}{\noindent {\bf Proof.}\ }{\qed\par\vskip 4mm\par}
\def\argmin{{\rm argmin}}
\def\argmax{{\rm argmax}}
\begin{document}

\date{}
\begin{titlepage}

\title{CONE-DHT: A distributed self-stabilizing algorithm for a heterogeneous storage system}

\author{Sebastian Kniesburges \\
   University of Paderborn \\
   seppel@upb.de \\
   \and
   Andreas Koutsopoulos \\
   University of Paderborn  \\
   koutsopo@mail.upb.de\\
   \and
   Christian Scheideler\\
   University of Paderborn  \\
   scheideler@mail.upb.de
   }

\maketitle \thispagestyle{empty}


\begin{abstract}
We consider the problem of managing a dynamic heterogeneous storage system in
a distributed way so that the amount of data assigned to a host in that system
is related to its capacity. Two central problems have to be solved for this:
(1) organizing the hosts in an overlay network with low degree and diameter so
that one can efficiently check the correct distribution of the data and route
between any two hosts, and (2) distributing the data among the hosts so that
the distribution respects the capacities of the hosts and can easily be
adapted as the set of hosts or their capacities change. We present distributed
protocols for these problems that are self-stabilizing and that do not need
any global knowledge about the system such as the number of nodes or the
overall capacity of the system. Prior to this work no solution was known
satisfying these properties.
\end{abstract}

\bigskip


\end{titlepage}

\section{Introduction}

In this paper we consider the problem of designing distributed protocols for a
dynamic heterogeneous storage system. Many solutions for distributed storage
systems have already been proposed in the literature. In the peer-to-peer
area, distributed hash tables (DHTs) have been the most popular choice. In a
DHT, data elements are mapped to hosts with the help of a hash function, and
the hosts are organized in an overlay network that is often of hypercubic
nature so that messages can be quickly exchanged between any two hosts. To be
able to react to dynamics in the set of hosts and their capacities, a
distributed storage system should support, on top of the usual data
operations, operations to join the system, to leave the system, and to change
the capacity of a host in the desired way. We present self-stabilizing
protocols that can handle all of these operations in an efficient way.

\subsection{Heterogeneous storage systems}

Many data management strategies have already been proposed for distributed
storage systems. If all hosts have the same capacity, then a well-known
approach called {\em consistent hashing} can be used to manage the data
\cite{conhash}. In consistent hashing, the data elements are hashed to points
in $[0,1)$, and the hosts are mapped to disjoint intervals in $[0,1)$, and a
host stores all data elements that are hashed to points in its interval. An
alternative strategy is to hash data elements and hosts to pseudo-random bit
strings and to store (indexing information about) a data element at the host
with longest prefix match \cite{Plaxton}. These strategies have been realized
in various DHTs including CAN \cite{RF+01:can}, Pastry \cite{RD01:pastry} and
Chord \cite{SM+01:chord}. However, all of these approaches assume hosts of
uniform capacity, despite the fact that in P2P systems the peers can be highly
heterogeneous.

In a heterogeneous setting, each host (or node) $u$ has its specific capacity
$c(u)$ and the goal considered in this paper is to distribute the data among
the nodes so that node $u$ stores a fraction of $\frac{c(u)}{\sum_{\forall
v}{c(v)}}$ of the data. The simplest solution would be to reduce the
heterogeneous to the homogeneous case by splitting a host of $k$ times the
base capacity (e.g., the minimum capacity of a host) into $k$ many virtual
hosts. Such a solution is not useful in general because the number of virtual
hosts would heavily depend on the capacity distribution, which can create a
large management overhead at the hosts. Nevertheless, the concept of virtual
hosts has been explored before (e.g., \cite{godfrey,rao, skewccc+}). In
\cite{godfrey} the main idea is not to place the virtual hosts belonging to a
real host randomly in the identifier space but in a restricted range to
achieve a low degree in the overlay network. However, they need an estimation
of the network size and a classification of nodes with high, average, and low
capacity. A similar approach is presented in \cite{skewccc+}.
Rao et al. \cite{rao} proposed some schemes also based on virtual servers, where the data is moved from heavy nodes to light nodes to balance the load after the data assignment, so and data movement is induced even without joining or leaving nodes.
In \cite{supernode} the authors organize the nodes into clusters, where a
super node (i.e., a node with large capacity) is supervising a cluster of
nodes with small capacities. Giakkoupis et al. \cite{Giakkoupis} present an
approach which focuses on homogeneous networks but also works for
heterogeneous one. However, updates can be costly.

Several solutions have been proposed in the literature that can manage
heterogeneous storage systems in a centralized way, i.e. they consider data
placement strategies for heterogeneous disks that are managed by a single
server ~\cite{rio,slicing,spread,diskarray,broadscale,redShare} or assume a
central server that handles the mapping of data elements to a set of hosts
~\cite{cone,brink2,brink}. We will only focus on the most relevant ones for
our approach. In \cite{brink} Brinkmann et al. introduced several criteria a
placement scheme needs to fulfill, like a faithful distribution, efficient
localization, and fast adaptation. They introduce two different data placement
strategies named SHARE and SIEVE that fulfill their criteria. To apply their
approach, the number of nodes and the overall capacity of the system must be
known. In \cite{redShare} redundancy is added to the SHARE strategy to allow a
fair and redundant data distribution, i.e. several copies of a data element
are stored such that no two copies are stored on the same host. Another
solution to handle redundancy in heterogeneous systems is proposed in
\cite{spread}, but also here the number of nodes and the overall capacity of
the system must be known. The only solution proposed so far where this is not
the case is the approach by Schindelhauer and Schomaker \cite{cone}, which we
call {\em cone hashing}. Their basic idea is to assign a distance function to
each host that scales with the capacity of the host. A data element is then
assigned to the host of minimum distance with respect to these distance
functions. We will extend their construction into a self-stabilizing DHT with
low degree and diameter that does not need any global information and that can
handle all operations in a stable system efficiently with high probability
(w.h.p.)\footnote{ I.e., a probability of $1-n^{-c}$ for any constant $c>0$}.

\subsection{Self-Stabilization}

A central aspect of our self-stabilizing DHT is a self-stabilizing overlay
network that can be used to efficiently check the correct distribution of the
data among the hosts and that also allows efficient routing. There is a large
body of literature on how to efficiently maintain overlay networks, e.g.,
\cite{AS03:skip,AS04:hyperring,BK+04:Pagoda,RD01:pastry,HJ+03:skipnet,KSW05:self-repair,MNR02:viceroy,NW07:continous-discrete,RF+01:can,
SM+01:chord}. While many results are already known on how to keep an overlay
network in a legal state, far less is known about self-stabilizing overlay
networks. A self-stabilizing overlay network is a network that can recover its
topology from an arbitrary weakly connected state. The idea of
self-stabilization in distributed computing was introduced in a classical paper
by E.W. Dijkstra in 1974 \cite{D74:self-stab} in which he looked at the
problem of self-stabilization in a token ring. In order to recover certain
network topologies from any weakly connected state, researchers have started
with simple line and ring networks (e.g.
\cite{CF05:stab-ring,SR05:ring,ORS07:lin}). Over the years more and more
network topologies were considered
\cite{DK08,JRS+09:delaunay,JR+09:skip+,DT09,DT10,smallworld}. In
\cite{KKS11} the authors present a self-stabilizing algorithm for the Chord
DHT \cite{SM+01:chord}, which solves the uniform case, but the problem of
managing heterogeneous hosts in a DHT was left open, which is addressed in
this paper. To the best of our knowledge this is the first self-stabilizing
approach for a distributed heterogeneous storage system.

In this paper we present a self-stabilizing overlay network for a distributed
heterogeneous storage system based on the data assignment presented in
\cite{cone}.

\subsection{Model}

\subsubsection{Network model}

We assume an asynchronous message passing model for the CONE-DHT which is
related to the model presented in \cite{NNS11:corona} by Nor et al. The
overlay network consists of a static set $V$ of $n$ nodes or hosts. We further
assume \emph{fixed identifiers} (ids) for each node. These identifiers are
\emph{immutable} in the computation, we only allow identifiers to be compared,
stored and sent. In our model the identifiers are used as addresses, such that
by knowing the identifier of a node another node can send messages to this
node. The identifiers form a unique order. The communication between nodes is
realized by passing messages through channels. A node $v$ can send a message
to $u$ through the channel $Ch_{v,u}$. We denote the channel $Ch_u$ as the
union of all channels $Ch_{v,u}$. We assume that the capacity of a channel is
unbounded and no messages are lost. Furthermore we assume that for a
transmission pair $(v,u)$ the messages sent by $v$ are received by $u$ in the
same order as they are sent, i.e. $Ch_{v,u}$ is a FIFO channel. Note that this
does not imply any order between messages from different sending nodes. For
the channel we assume \emph{eventual delivery} meaning that if there is a
state in the computation where there is a message in the channel $Ch_u$ there
also is a later state where the message is not in the channel, but was
received by the process. We distinguish between the \emph{node state}, that is
given by the set of identifiers stored in the internal variables $u$ can communicate with, and the \emph{channel state}, that is
given by all identifiers contained in messages in a channel $Ch_u$. We
model the network by a directed graph $G=\left(V,E\right)$. The set of edges
$E$ describes the possible communication pairs. $E$ consists of two subsets:
the \emph{explicit edges} $E_e=\left\{\left(u,v\right): v \text{ is in u's
node state}\right\}$ and the \emph{implicit edges}
$E_i=\left\{\left(u,v\right): v \text{ is in u's channel state}\right\}$, i.e.
$E= E_e \cup E_i$. Moreover we define $G_e=\left(V,E_e\right)$.

\subsubsection{Computational Model}
An action has the form $<guard>\rightarrow <command>$. \emph{guard} is a
predicate that can be true or false. \emph{command} is a sequence of
statements that may perform computations or send messages to other nodes. We
introduce one special guard predicate $\tau$ called the \emph{timer
predicate}, which is periodically true; i.e. according to an internal clock $\tau$ becomes true after a number of clock cycles and is false  the other times,  and allows the nodes to perform
periodical actions. A second predicate is true if a message is received by a
node.
The \emph{program state} is defined by the node states and the channel
states of all nodes, i.e. the assignment of values to every variable of each
node and messages to every channel. We call the combination of the node
states of all nodes the \emph{node state of the system} and the combination
of the channel states of all nodes is called the \emph{channel state of the
system}.
An action is enabled in some state if its guard is true and disabled
otherwise.
A \emph{computation} is a sequence of states such that for each state $s_i$
the next state $s_{i+1}$ is reached by executing an enabled action in $s_i$.
By this definition, actions can not overlap and are executed atomically
giving a sequential order of the executions of actions. For the execution of
actions we assume \emph{weak fairness} meaning that if an action is enabled
in all but finitely many states of the computation then this action is
executed infinitely often.

We state the following requirements on our solution: \emph{Fair load
balancing}: every node with x\% of the available capacity gets x\% of the
data. \emph{Space efficiency}: Each node stores at most\\ $\mathcal
O(|\text{data assigned to the node}| + \log n)$ information. \emph{Routing
efficiency}: There is a routing strategy that allows efficient routing in at
most $\mathcal O(\log n)$ hops. \emph{Low degree}: The degree of each node is
limited by $\mathcal O(\log n)$. Furthermore we require an algorithm that
builds the target network topology in a \emph{self-stabilizing} manner, i.e.,
any weakly connected network $G=(V,E)$ is eventually transformed into a
network so that a (specified) subset of the explicit edges forms the target
network topology ({\em convergence}) and remains stable as long as no node
joins or leaves ({\em closure}).

\subsection{Our contribution}

We present a self-stabilizing algorithm that organizes a set of heterogeneous
nodes in an overlay network such that each data element can be efficiently
assigned to the node responsible for it. We use the scheme described in
\cite{cone} (which gives us good load balancing) as our data management scheme
and present a distributed protocol for the overlay network, which is efficient
in terms of message complexity and information storage and moreover works in a
self-stabilizing manner. The overlay network efficiently supports the basic
operations of a heterogeneous storage system, such as the joining or leaving
of a node, changing the capacity of a node, as well as searching, deleting and
inserting a data element. In fact we show the following main result:

\begin{theorem}\label{theo:main}
There is a self-stabilizing algorithm for maintaining a heterogeneous storage
system that achieves fair load-balancing, space efficiency and routing
efficiency, while each node has a degree of $\mathcal O(\log n)$ w.h.p. The
data operations can be handled in $O(\log n)$ time in a stable system, and if
a node joins or leaves a stable system or changes its capacity, it takes at
most $\mathcal O(\log^2 n)$ structural changes, i.e., edges that are created
or deleted, until the system stabilizes again.
\end{theorem}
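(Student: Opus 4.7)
The plan is to decompose the theorem into its static and dynamic parts and attack each separately. First I would fix the target topology: on top of the cone-hashing data assignment of \cite{cone}, organize the nodes into a doubly linked list sorted by the order induced by cone hashing, and augment each node with $O(\log n)$ long-range pointers spaced roughly geometrically in the identifier space (in the style of Chord fingers or a skip list). The fair load-balancing claim then inherits directly from the analysis in \cite{cone}, since the data assignment rule is identical. The space-efficiency bound reduces to the degree bound, and routing efficiency reduces to showing that greedy routing along the long-range pointers makes geometric progress in the cone order.

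Next I would prove the stable-state structural properties. The key observation is that because each node is mapped by an independent hash function (scaled by its capacity), the cone-induced order of the nodes is sufficiently ``random'' that a standard Chernoff argument yields: every node has $O(\log n)$ incoming long-range pointers w.h.p., and every greedy routing step crosses a constant fraction of the remaining nodes w.h.p., giving diameter $O(\log n)$. Together with the two list pointers this gives degree and routing time $O(\log n)$, hence also the $O(|\text{data}|+\log n)$ storage bound.

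For the self-stabilizing part I would follow a layered linearization argument in the spirit of \cite{ORS07:lin,KKS11}. Convergence is proved in two phases: (i) from any weakly connected $G_e$, the base list in cone order is reconstructed, and (ii) on top of a correct list the long-range pointers are rebuilt. Phase (i) is handled by exhibiting a potential (for instance the total number of ``out-of-order'' node pairs still reachable through the union $E_e \cup E_i$) that strictly decreases whenever a node forwards an identifier to the neighbor closer to it in cone order and prunes its own entry; weak fairness together with the timer predicate $\tau$ and FIFO, lossless channels guarantees that no identifier is ever permanently lost, so the potential must reach zero. Phase (ii) is then a straightforward construction action triggered periodically by $\tau$. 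Closure follows by checking that in the target configuration every topology-changing guard is false.

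Finally, for the $O(\log^2 n)$ bound on structural changes per join, leave, or capacity change, I would argue that such an event shifts the cone-boundaries of only $O(\log n)$ nearby nodes w.h.p.\ (which follows from the same concentration behaviour used for the degree bound), and each affected node holds $O(\log n)$ long-range pointers that may require updating, giving $O(\log^2 n)$ in total. The main obstacle I expect is the convergence argument in phase (i): designing the potential so that it is monotone despite adversarial initial channel content, and in particular ruling out oscillation in which the same edge is repeatedly created and destroyed. This will likely require a carefully layered invariant enforcing that the linearization stabilizes strictly before long-range pointers begin to settle, so that the two layers do not interfere during stabilization.
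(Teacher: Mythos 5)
There is a genuine gap, and it lies at the interface between your overlay and the cone data assignment. You replace the paper's CONE graph by a sorted list augmented with geometrically spaced Chord/skip-list fingers, and you claim fair load balancing ``inherits directly'' because the assignment rule is the same. But the assignment rule alone is useless unless the overlay lets a node actually locate the responsible node. Under cone hashing the node responsible for a point $x$ in the interval $I_u$ supervised by $u$ is the node $v$ minimizing $C_v(x)$, and this $v$ can be far from $u$ in hash order: it is determined by the capacity landscape (the non-dominated nodes around $u$), not by hash distance. The paper's topology is chosen precisely so that Theorem~\ref{theo:responsibility} holds: the sets $S^+(u) \cup P^+(u) \cup S^-(u) \cup P^-(u)$ contain every node $v$ with $R(v)\cap I_u\neq\emptyset$, so the supervising node can complete a search/insert/delete and can locally verify the data distribution (the check-interval/update-interval mechanism). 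Generic fingers give you no such guarantee, and your proposal contains no substitute mechanism; with your topology greedy routing reaches the predecessor of $pos$, and then the lookup is stuck. Any fix that has each node learn the non-dominated nodes over its interval essentially reconstructs the $S^+/P^+$ lists, i.e., the CONE graph itself --- which is why the paper's degree bound is not a routine Chernoff bound but a negative-binomial argument over the chain of ``next larger-capacity'' nodes (Lemma~\ref{log}), a structure your plan never builds.

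Two smaller problems. First, you flag yourself that the phase-(i) potential (number of out-of-order reachable pairs) may not be monotone under adversarial initial channel content, and indeed it is not obviously so; the paper avoids a global potential entirely and instead proves convergence of the linearization by shrinking the \emph{range} of an undirected connecting path (Lemmas~\ref{lem:Forward} and~\ref{lem:Mirror} feeding the proof of Theorem~\ref{theo:phase2}), with connectivity preserved by the rule that identifiers are never dropped, only delegated. You would need to either adopt such an argument or exhibit a concretely monotone potential; as stated this step is a hope, not a proof. Second, your $\mathcal O(\log^2 n)$ dynamic bound rests on the claim that a join shifts the cone boundaries of $O(\log n)$ \emph{nearby} nodes; in the paper the affected nodes are not characterized by proximity but by membership in the new node's lists (they are the $O(\log n)$ w.h.p.\ nodes that must store $u$, by Theorem~\ref{theo:degree}), each losing or gaining $O(\log n)$ list entries. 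With your finger topology the analogous accounting would have to be redone from scratch, and the ``nearby'' heuristic is not the right invariant.
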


\subsection{Structure of the paper}

The paper is structured as follows: 
In Section ~\ref{cone-dht} we describe our target network and its properties.
In Section ~\ref{alg} we present our self-stabilizing protocol and prove that
it is correct. Finally, in Section ~\ref{ops} we describe the functionality of
the basic network operations.


\section{The \emph{CONE}-DHT}\label{cone-dht}

\subsection{The original CONE-Hashing}

Before we present our solution, we
first give some more details on the original CONE-Hashing \cite{cone} our
approach is based on. In \cite{cone} the authors present a centralized
solution for a heterogeneous storage system in which the nodes are of
different capacities. We denote the  capacity of a node $u$ as $c(u)$. We use
a hash function $h:V\mapsto [0,1)$ that assigns to each node a hash value. A
data element of the data set $D$ is also hashed by a hash function $g: D
\mapsto [0,1)$. W.l.o.g. we assume that all hash values and capacities are
distinct.  According to \cite{cone} each node has a capacity function
$C_u(g(x))$, which determines which data is assigned to the node. A node is
\emph{responsible} for those elements $d$ with $C_u(g(d))=\min_{v
\in V}\{C_v(g(d))\}$, i.e. $d$ is assigned to $u$. We denote by $R(u)=\{x \in
[0,1): C_u(x)=\min_{v \in V}\{C_v(x)\} \}$ the \emph{responsibility range} of
$u$  (see Figure ~\ref{example}). Note that $R(u)$ can consist of several intervals in $[0,1)$.
In the original paper \cite{cone}, the authors considered two special cases
of capacity functions, one of linear form
$C^{lin}_u(x)=\frac{1}{c(u)}|x-h(u)|$ and of logarithmic form
$C^{log}_u(x)=\frac{1}{c(u)}(-log(|1-(x-h(u))|)$. For these capacity
functions the following results were shown by the authors \cite{cone}:

\begin{theorem}\label{theo:coneOriginal1}
A data element $d$ is assigned to a node $u$ with probability
$\frac{c(u)}{\sum_{v \in V}{c(v)}-c(u)}$ for linear capacity functions
$C^{lin}_u(x)$ and with probability $\frac{c(u)}{\sum_{v\in V}{c(v)}}$ for logarithmic capacity functions $C^{log}_u(x)$. Thus in expectation fair load balancing can be achieved by using a logarithmic
capacity function $C^{log}_u(x)$.
\end{theorem}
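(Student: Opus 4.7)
The plan is to treat both capacity functions as random variables induced by the random hash value $h(u)$ (equivalently by the random $g(d)$, since only the difference $x - h(u)$ enters the formulas), identify their distributions, and then reduce the problem to computing which of several independent random variables attains the minimum. Throughout I would interpret $|x - h(u)|$ and $x - h(u)$ circularly (modulo $1$), so that the resulting distance is distributed identically for every $u$, independent of the fixed argument $x$.

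For the logarithmic case, write $Z_u = (x - h(u)) \bmod 1$; by assumption this is uniform on $[0,1)$ and independent across $u$. A short calculation gives
\[
\Pr\left[-\log(1 - Z_u) > t\right] = \Pr[Z_u > 1 - e^{-t}] = e^{-t} \quad \text{for } t \geq 0,
\]
so $-\log(1 - Z_u)$ is standard exponential and therefore $C^{log}_u(x) = -\log(1 - Z_u)/c(u)$ is exponentially distributed with rate $c(u)$, independently across $u$. The claimed probability $c(u)/\sum_{v} c(v)$ is then the classical identity for the minimum of independent exponentials, which follows from
\[
\Pr[u = \argmin_v C^{log}_v(x)] = \int_0^\infty c(u) e^{-c(u)t} \prod_{v \neq u} e^{-c(v)t}\, dt = \frac{c(u)}{\sum_v c(v)}.
\]
Fair load balancing in expectation is then immediate: conditioning on $g(d)$, the probability that $d$ is stored at $u$ is exactly $c(u)/\sum_v c(v)$, so the expected fraction of data at $u$ equals its capacity share.

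For the linear case, the same circular interpretation makes $|x - h(u)|$ uniform on $[0, 1/2]$, hence $C^{lin}_u(x)$ is uniform on $[0, 1/(2c(u))]$ and independent across $u$. Conditioning on $C^{lin}_u(x) = t$ and integrating against the density $2c(u)$ yields
\[
\Pr[u \text{ wins}] = \int_0^{1/(2c(u))} 2 c(u) \prod_{v \neq u} \max\bigl(1 - 2 t c(v),\, 0\bigr)\, dt,
\]
and using $1 - a \approx e^{-a}$ (valid when no individual $c(v)$ dominates the sum) this is approximately $\int_0^\infty 2 c(u) \exp(-2t \sum_{v \neq u} c(v))\, dt = c(u)/\sum_{v\neq u} c(v) = c(u)/(\sum_v c(v) - c(u))$, matching the stated expression.

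The main obstacle I expect is the linear case: unlike the logarithmic case, the resulting integral does not collapse to the advertised formula exactly, so one has to argue carefully that the approximation $\prod_v (1 - 2 t c(v)) \approx \exp(-2t \sum_v c(v))$ is the right order of accuracy under the regime of interest (many nodes, or no single capacity close to the total). Showing that the logarithmic case \emph{exactly} yields fair load balancing, while the linear case only does so approximately with the $-c(u)$ correction, is precisely why the paper singles out $C^{log}_u$ as the preferred capacity function.
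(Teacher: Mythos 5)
The first thing to say is that the paper contains no proof of this theorem at all: it is imported verbatim from the cited work of Schindelhauer and Schomaker \cite{cone} (``the following results were shown by the authors''), so there is no in-paper argument to compare yours against, and your attempt has to be judged on its own merits. Your logarithmic case is correct, rigorous, and is surely the intended argument: with the circular reading $Z_u=(x-h(u)) \bmod 1$, the variables $-\log(1-Z_u)$ are independent standard exponentials, so $C^{log}_u(x)$ is exponential with rate $c(u)$, and the minimum-of-independent-exponentials identity gives exactly $c(u)/\sum_{v}c(v)$; ties occur with probability zero, and the fairness conclusion follows by conditioning on $g(d)$ as you do.

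On the linear case, your instinct is right, and you could have pushed it one step further: the advertised formula $c(u)/\bigl(\sum_{v}c(v)-c(u)\bigr)$ is not merely hard to extract exactly from your integral --- it is provably \emph{not} the exact assignment probability in the model you set up. With $n$ equal capacities your integral evaluates exactly to $1/n$, while the formula gives $1/(n-1)$; more blatantly, the formula's values sum to $\sum_u c(u)/\bigl(C-c(u)\bigr)>1$ with $C=\sum_v c(v)$ (for two nodes, $c_1/c_2+c_2/c_1\geq 2$), which no collection of disjoint-event probabilities can satisfy. So the theorem's linear statement is itself a first-order approximation, valid when every $c(v)$ is small relative to $C$, and your computation --- replacing each survival factor $1-2tc(v)$ by $e^{-2tc(v)}$ and extending the integration range --- is precisely the heuristic that produces the stated expression. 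Since $\log(1-2tc(v))=-2tc(v)+O(t^2c(v)^2)$ and the integral is dominated by $t=O\bigl(1/(C-c(u))\bigr)$, the relative error is $O\bigl(\sum_v c(v)^2/C^2\bigr)$, so to make your write-up complete you should state the linear claim as holding up to a $1+O\bigl(\sum_v c(v)^2/C^2\bigr)$ factor rather than exactly. Declining to force an ``exact'' proof of a formula that cannot be exact, while proving the logarithmic case exactly, is the right call and matches why the paper singles out $C^{log}_u$ for fair load balancing.
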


The CONE-Hashing supports the following operations for a data element $d$ or
a node $v$:

\begin{itemize}\itemsep0.1pt
        \item \emph{Search(d)}: Returns the node $u$ such that $g(d) \in
R(u)$.
        \item \emph{Insert(d)}: $d$ is assigned to  the node returned by $Search(d)$.
        \item \emph{Delete(d)}: $d$ is removed from the node returned by $Search(d)$.
        \item \emph{Join(v)}: For all $u \in V$ the responsibility ranges
$R(u)$ are updated and data elements $d$, with $g(d)\in R(v)$ are moved
to $v$.
        \item \emph{Leave(v)}: For all $u \in V$ the responsibility ranges
$R(u)$ are updated and data elements $d$ assigned to $v$ are moved to nodes
$w$ such that $g(d)\in R(w)$.
        \item \emph{CapacityChange(v)}: For all $u \in V$ the responsibility
ranges $R(u)$ are updated and data elements $d$ not assigned to $v$, but with $g(d)\in R(v)$ are moved to $v$ while data elements $d'$ assigned to $v$
but with $g(d)\in R(w)$ are moved to nodes $w$.
\end{itemize}

Moreover, the authors showed that the fragmentation is relatively small for the  logarithmic
capacity function, with each node having in expectation a logarithmic number of intervals it is responsible for.
In the case of the linear function, it can be shown that this number is only constant in expectation.

In \cite{cone} the authors further present a data structure to efficiently support the described
operations in a centralized approach. For their data structure they showed that
there is an algorithm that determines for a data element $d$ the
corresponding node $u$ with $g(d) \in R(u)$ in expected time $\mathcal
O(\log n)$. The used data structure has a size of $\mathcal O(n)$ and the
joining, leaving and the capacity change of a node can be handled efficiently.

In the following we show that CONE-Hashing can also be realized by using a
distributed data structure. Further the following challenges have to be
solved. We need a suitable topology on the node set $V$ that supports an
efficient determination of the responsibility ranges $R(u)$ for each node
$u$ . The topology should also support an efficient \emph{Search(d)}
algorithm, i.e. for an \emph{Search(d)} query inserted at an arbitrary node
$w$, the node $v$ with $g(d) \in R(v)$ should be found. Furthermore a
\emph{Join(v)}, \emph{Leave(v)}, \emph{CapacityChange(v)} operation should
not lead to a high amount of data movements, (i.e. not more than the data
now assigned to $v$ or no longer assigned to $v$ should be moved,) or a high amount of structural changes ( i.e. changes in the topology built on
$V$). All these challenges will be solved by our CONE-DHT.

\subsection{The \emph{CONE}-DHT}\label{cone}

In order to construct a heterogeneous storage network in the distributed case, we have to deal with the challenges mentioned above.
For that, we introduce the  \emph{CONE}-graph, which is an overlay network that, as we show, can support efficiently a heterogeneous storage system.

\subsubsection{The network layer}

We define the \emph{CONE} graph as a graph $G^{CONE}=(V,E^{CONE})$, with $V$ being the hosts of our storage system.

For the determination of the edge set, we need following definitions, with respect to a node $u$:

 \begin{itemize}\itemsep0.1pt
\item $succ^+_1(u)=argmin\{h(v):h(v)>h(u) \wedge c(v)>c(u)\}$
is the next node at the right of $u$ with larger capacity, and we call it the first larger successor of $u$.
Building upon this, we define recursively the i-th larger successor of $u$ as:
$succ^+_i(u)=succ^+_1(succ^+_{i-1}(u)), \forall i>1$, and the union of all larger successors as
 $S^+(u)=\bigcup_{ i} succ^+_i(u)$.

\item The first larger predecessor of $u$ is defined as:
 $pred^+_1(u)=argmax\{h(v):h(v)<h(u) \wedge c(v)>c(u)\}$
i.e. the next node at the left of $u$ with larger capacity. The i-th larger predecessor of $u$ is:
 $pred^+_i(u)=pred^+_1(pred^+_{i-1}(u)), \forall i>1$, and the union of all larger predecessors as
 $P^+(u)=\bigcup_{ i} pred^+_i(u)$.

\item We also define the set of the smaller successors of $u$, $S^-(u)$, as the set of all nodes $v$, with $u=pred_1^+(v)$, and the set of the smaller predecessors of $u$, $P^-(u)$ as the set of all nodes $v$, such that $u=succ_1^+(v)$.

\end{itemize}

Now we can define the edge-set of a node in $G^{CONE}$.
 
\begin{definition}
$(u,v) \in E^{CONE}$ iff $v \in S^+(u) \cup P^+(u) \cup S^-(u) \cup P^-(u)$
\end{definition}

We define also the neighborhood set of $u$ as $N_u= S^+(u) \cup P^+(u) \cup S^-(u) \cup P^-(u)$.
In other words, $v$ maintains connections to each node $u$, if there does not exist another node with larger capacity than $u$ between $v$ and $u$ (see Figure ~\ref{example3}).
We will prove that this graph is sufficient for maintaining a heterogeneous storage network in a self-stabilizing manner and also that in this graph the degree is bounded logarithmically w.h.p..

\subsubsection{The data management layer}

We discussed above how the data is assigned to the different nodes. That is the assignment strategy we use for data in the \emph{CONE}-network.

In order to understand how the various data operations are realized in the network, we have to describe how each node maintains the knowledge about the data it
has, as well as the intervals it is responsible for.
It turns out that in order for a data item to be forwarded to the correct node, which is responsible for storing it, it suffices to contact the closest node (in terms of hash value) from the left to the data item's hash value.
That is because then, if the \emph{CONE} graph has been established, this node (for example node $u$ in  Figure ~\ref{example}) is aware of the responsible node for this data item. We call the interval between $h(u)$ and the hash value of $u$'s closest right node $I_u$. We say that $u$ is \emph{supervising} $I_u$.
We show the following theorem.


\begin{theorem}\label{theo:responsibility}
In $G^{CONE}$ a node $u$ knows all the nodes v with $R(v) \cap I_u \neq \emptyset $.
\end{theorem}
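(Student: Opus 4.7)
The plan is to fix a witness point $x \in R(v) \cap I_u$ and show $v \in N_u \cup \{u\}$ by case analysis on the position of $h(v)$ relative to $I_u = (h(u), h(u'))$, where $u'$ denotes $u$'s immediate right neighbor in hash order. The workhorse is a monotonicity observation for the capacity functions: since $C_w(x)$ is strictly increasing in $|x - h(w)|$ and strictly decreasing in $c(w)$ (for both $C^{lin}$ and $C^{log}$), whenever a node $w$ is strictly closer to $x$ than $v$ and satisfies $c(w) \geq c(v)$, we get $C_w(x) \leq C_v(x)$, contradicting $v$ being the unique minimizer at $x$. Hence $c(v) > c(w)$ for every such $w$. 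The cases $v = u$ and $v = u'$ are immediate, since the comparison of $c(u)$ and $c(u')$ places the smaller of the two into $S^-$ or $S^+$ of the larger.

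If $h(v) < h(u)$, every node $w$ with $h(v) < h(w) \leq h(u)$ is strictly closer to $x$ than $v$, so the monotonicity lemma forces $c(v) > c(w)$, making $v$ the strict capacity maximum on $[h(v), h(u)]$. I would then follow the chain $u,\ pred^+_1(u),\ pred^+_2(u),\ldots$, observing that each step strictly decreases $h$ while strictly increasing capacity, and that every intermediate node lies inside $(h(v), h(u))$ (since the new node's capacity strictly exceeds that of its predecessor yet must be strictly less than $c(v)$). As only finitely many nodes lie in this interval, the chain must terminate at $v$, yielding $v \in P^+(u)$.

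The symmetric case $h(v) > h(u')$ uses the same monotonicity argument on every $w$ with $h(u') \leq h(w) < h(v)$ to conclude that $v$ is the strict capacity maximum on $[h(u'), h(v)]$. I then split on $c(v)$ vs.\ $c(u)$: if $c(v) > c(u)$, an analogous successor chain $u,\ succ^+_1(u),\ succ^+_2(u),\ldots$ stays inside $[h(u), h(v)]$ and reaches $v$, yielding $v \in S^+(u)$; if $c(v) < c(u)$, one exploits the fact that $(h(u), h(u'))$ is empty by the definition of $I_u$, together with the capacity bound on $[h(u'), h(v))$, to conclude that $u$ is the closest node to the left of $v$ with capacity exceeding $c(v)$, i.e., $u = pred^+_1(v)$, so $v \in S^-(u)$. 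The main obstacle is this last subcase: unlike the others, the relation between $c(v)$ and $c(u)$ is not forced by a single application of the monotonicity lemma, since $u$ and $v$ sit on opposite sides of $x$; the conclusion must instead be drawn indirectly from the structural emptiness of $(h(u), h(u'))$ provided by the definition of $I_u$.
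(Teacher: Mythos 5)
Your proof is correct, and it rests on the same workhorse as the paper's own proof --- the domination observation that a node strictly closer to the witness point $x$ with larger capacity has strictly smaller cost --- but you run the argument in the opposite direction. The paper argues by contradiction and only for one side (the other ``w.l.o.g.''): if a node $t$ to the right of $u$ has $R(t)\cap I_u\neq\emptyset$ but $t\notin S^+(u)\cup S^-(u)$, there must exist $v$ with $h(u)<h(v)<h(t)$ and $c(v)>c(t)$, and since all of $I_u$ lies left of $h(v)$, $v$ dominates $t$ on $I_u$, a contradiction. Crucially, the paper takes for granted the structural equivalence ``$t\notin S^+(u)\cup S^-(u)$ iff some larger-capacity node lies strictly between $u$ and $t$,'' whereas you prove the positive direction of exactly this step: from capacity-maximality of $v$ over the intervening span you follow the $pred^+_1$/$succ^+_1$ chain, note that it can never skip $v$ because $v$ is always a candidate and all intermediate nodes have smaller capacity, and conclude termination at $v\in P^+(u)$ or $v\in S^+(u)$; you also isolate the subcase $c(v)<c(u)$, resolved via $u=pred^+_1(v)$ (so $v\in S^-(u)$) using the emptiness of $(h(u),h(u'))$ --- a case the paper never makes explicit, and correctly flagged by you as the one spot where the monotonicity lemma alone does not decide the capacity comparison because $u$ and $v$ sit on opposite sides of $x$. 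The paper's contrapositive buys brevity; yours buys an actual verification of the definitional membership step, at the cost of a three-way case analysis. One caveat applies equally to both proofs: the monotonicity premise treats $C_w(x)$ as increasing in $|x-h(w)|$, which is exact for $C^{lin}$ but for $C^{log}$ holds only under a symmetric reading of the distance; since the paper's proof makes the same idealization (its displayed inequality even has the signs of $x-h(v)$ and $x-h(t)$ reversed), this is not a gap relative to the paper.
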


\begin{proof}
We need to show that all these nodes $R(v) \cap I_u \neq \emptyset $ are in $ S^+(u) \cup P^+(u) $ $\cup S^-(u) \cup P^-(u)$. 
W.l.o.g. let us consider only the case of $S^+(u), S^-(u)$.
Indeed, there cannot be a node at the right of $u$ ($h(u)<h(t)$) that has a responsible interval in $u$'s supervising interval and that is not in $ S^+(u)$ or $ S^-(u)$.
We will prove it by contradiction.
Let $t$ be such a node. For $t$ not to be in $ S^-(u)$ or $ S^+(u)$ there must be at least one node $v$ larger (in terms of capacity) than $t$, which is closer to $u$ than $t$ ($h(u)<h(v)<h(t)$). Then $\forall x<h(v)$ it holds that $h(v)<h(t) \implies x-h(v)<x-h(t) \implies f(x-h(v))<f(x-h(t)) $, since $f$ is increasing. Moreover, since $c(v)>c(t)$ we have $\frac{1}{c(v)}f(x-h(v))<\frac{1}{c(v)}f(x-h(t))\implies C_v(x)<C_t(x)$, so $v$ dominates $t$ for $x<h(v)$. And since $h(u)<h(v)$, it cannot be that $t$ is responsible for an interval in $I_u$, since in that region $t$ is dominated (at least) by $v$.
This contradicts the hypothesis and the proof is completed.
\end{proof}

So, the nodes store their data in the following way. If a node $u$ has a data item that falls into one of its responsible intervals, it stores in addition to this item a reference to the node $v$  that is the closest from the left to this interval. Moreover, the subinterval $u$ thinks it is responsible for (in which the data item falls) is also stored (as described in the next section, when the node's internal variables are presented).
 In case the data item is not stored at the correct node, $v$ can resolve the conflict when contacted by $u$.

Now we can discuss the functionality of the data operations.
A node has operations for inserting, deleting and searching a datum in the CONE-network.

Let us focus on $\bf{searching}$ a data item. As shown above, it suffices to search for the left closest node to the data item's hash value.
We do this by using greedy routing. Greedy routing in the \emph{CONE}-network works as follows: If a search request wants to reach some
position $pos$ in $[0,1)$, and the request is currently at node $u$, then $u$ forwards $search(pos)$ to the node $v$ in $N_u$ that is closest to $pos$, until the closest node at the left of $pos$ is reached. Then this node will forward the request to the responsible node.
A more formal definition of the greedy routing follows:

\begin{definition}
The CONE Greedy routing strategy is defined as: If operation op is to be executed at position $pos$ in $[0,1]$ and op is currently at node $u$, then $u$ forwards op to the node $v$ such that $v=\argmax \left\{h(w): w \in u.S^* \cup \left\{u\right\} \wedge h(w)< pos\right\}$ if $pos>h(u)$ or $u$ forwards op to the node $v$ such that $v=\argmin \left\{h(w): w\in u.P^* \cup \left\{u\right\} \wedge h(w)> pos\right\}$ if $pos<h(u)$. If $h(v)=h(u)$ and $pos>h(u)$, then $pos \in I_u$ and $u$ forwards $op$ to the node responsible for the subinterval containing pos. If $h(v)=h(u)$ and $pos<h(u)$ then $u$ forwards op to $u.P^*[1]$ as $pos$ is in its supervised interval.
\end{definition}

 In that way we can route to the responsible node and then get an answer whether the data item is found or not, and so the searching is realized.
Note that the $\bf{deletion}$ of a data item can be realized in the same way, only that when the item is found, it is also deleted from the responsible node.
$\bf{Inserting}$ an item follows a similar procedure, with the difference that when the responsible node is found, the data item is stored by it.

Moreover, the network handles efficiently structural operations, such as the joining and leaving of a node in the network, or the change of the capacity of
a node. Since this handling falls into the analysis of the self-stabilization algorithm, we will discuss the network operations in Section ~\ref{alg}, where we also formally analyze the algorithm.

It turns out that a single data or network operation (i.e greedy routing) can be realized in a logarithmic number of hops in the \emph{CONE}-network, and this happens due  to
the structural properties of the network, which we discuss in the next section, where we also show that the degree of the \emph{CONE}-network is logarithmic.

\subsection{Structural Properties of a Cone Network}

In this section we show that the degree of a node in a stable CONE-network is bounded by $\mathcal O(\log n)$ w.h.p, and hence the information stored by each node (i.e the number of nodes which it maintains contact to, $|E_e(u)|$) is bounded by $\mathcal O(\log n + |\text{amount of data stored in a node}|)$ w.h.p..

First we show following lemma:

\begin{lemma}\label{log}
In a stable CONE network for each $u\in V$, $| S^+(u)|$ and $| P^+(u)|$ in $\mathcal O(\log n)$ w.h.p.
\end{lemma}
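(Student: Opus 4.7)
The plan is to reduce $|S^+(u)|$ to the classical problem of counting records (left-to-right maxima) in a uniformly random permutation, then apply a Chernoff bound followed by a union bound over all $n$ nodes.

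First I would fix a node $u$ and consider the set $W = \{w \in V : h(w) > h(u)\}$ of nodes to the right of $u$, with $m = |W| \le n-1$. Listing them in hash order as $w_1, \ldots, w_m$, an unfolding of the definition of $succ^+_i(u)$ shows that $w_i \in S^+(u)$ if and only if $c(w_i) > \max\{c(u), c(w_1), \ldots, c(w_{i-1})\}$. In particular, $|S^+(u)|$ is dominated by the number of left-to-right maxima of the bare capacity sequence $(c(w_1), \ldots, c(w_m))$, since any new record in the threshold-augmented sequence is also a record in the shorter one. This reduction lets me drop the threshold $c(u)$ from the analysis.

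Next, assuming the hash function $h$ is random enough that, conditional on $h(u)$, the ordering $w_1, \ldots, w_m$ is uniformly random among all $m!$ orderings of $W$, the capacity rank sequence of $(c(w_1), \ldots, c(w_m))$ is a uniformly random permutation of $m$ distinct values. By R\'enyi's classical theorem on records, the indicators $X_i = \mathbf{1}[c(w_i) > \max\{c(w_1), \ldots, c(w_{i-1})\}]$ are then independent Bernoulli random variables with $\Pr[X_i = 1] = 1/i$, so their sum has mean $H_m = O(\log n)$. I would then invoke a standard Chernoff bound for sums of independent Bernoullis to obtain $\Pr[|S^+(u)| \ge \alpha \log n] \le n^{-c-1}$ for any desired constant $c$, on choosing $\alpha$ large enough, and finally union-bound over the $n$ choices of $u$ to conclude that $|S^+(u)| = O(\log n)$ for every $u$ simultaneously w.h.p. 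The bound for $|P^+(u)|$ follows by the entirely symmetric argument applied to the nodes whose hash value is less than $h(u)$.

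The main obstacle I expect is justifying the uniform-random-permutation step: R\'enyi's record independence result requires the capacity ranks to form a uniformly random permutation, which in turn requires $h$ to be random enough---either a fully random function, or at least a $\Theta(\log n)$-wise independent hash family so that the Chernoff concentration still goes through with polynomially small failure probability. I would state this hypothesis on $h$ explicitly at the start of the proof, in line with the standard DHT literature; everything afterward---the reduction to records, the independence of the $X_i$, and the concentration---is then routine.
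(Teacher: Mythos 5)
Your proposal is correct, and it takes a genuinely different route from the paper. You both start from the same crucial structural observation---that $S^+(u)$ is exactly the set of running maxima (records) of the capacity sequence to the right of $u$, thresholded at $c(u)$---but you then diverge. The paper analyzes the chain $w_1, w_2, \ldots$ directly via a halving argument: it defines the candidate set $\hat{W}(w_i)$ of nodes with larger id and larger capacity than $w_i$, calls a step \emph{good} if the candidate set at least halves (which happens with probability $1/2$ under a uniformly random assignment of positions), and bounds the number of steps needed for $\log m$ good ones by a tail bound for the negative binomial distribution. You instead drop the threshold $c(u)$ (correctly noting that thresholded records are a subset of bare records), invoke R\'enyi's theorem that the record indicators of a uniformly random permutation are independent Bernoulli variables with $\Pr[X_i=1]=1/i$, and apply a Chernoff bound to their sum, whose mean is the harmonic number $H_m$. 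Your route buys sharper constants and the exact expectation $H_m \sim \ln n$ for free, and your explicit union bound over all $n$ nodes makes the ``simultaneously for every $u$'' reading of the lemma rigorous, a step the paper leaves implicit (it proves the bound per node and the union bound is only needed later for the degree theorem). The paper's route is more self-contained in that it needs no exact-independence structure of records, only the halving probability, which makes it somewhat more robust to how the randomness is modeled. Your hedging about $\Theta(\log n)$-wise independent hash families is a reasonable extra caveat but goes beyond the paper, which (like your main argument) implicitly assumes $h$ is fully random, as evidenced by its statement that positions are assigned uniformly at random; also note the paper's line ``$|\hat{W}(w_0)| = m = \mathcal{O}(\log n)$'' is evidently a typo for $m \le n$, so your cleaner bookkeeping with $m \le n-1$ is an improvement rather than a discrepancy.
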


\begin{proof}
For an arbitrary $u\in V$ let $W=\left\{w_1,w_2\cdots w_k\right\}\cup \left\{w_0 =u\right\}= S^+(u)\cup \left\{u\right\}$ and let $W$ be sorted by ids in ascending order, such that $h(w_i)<h(w_{i+1})$ for all $1\leq i<k$. Furthermore, let $\hat{W}(w_i)=\left\{w\in V:h(w)>h(w_i) \wedge c(w)>c(w_i)\right\}$ be the set of all nodes with larger ids and larger capacities than $w_i$.
So, the determination of $W$ is done by continuously choosing the correct $w_i$ out of $\hat{W}(w_{i-1})$, when $w_1,w_2,...w_{i-1}$ are already chosen.
In this process, each time a $w_i$ is determined, the number of nodes from which $w_{i+1}$ can be chosen is getting smaller, since the nodes at the left of $w_i$ as well as the nodes with smaller capacity than $w_i$ can be excluded.
We call the choice of $w_i=\hat{w_j}$ \emph{good}, if $|\hat{W}(w_{i-1})|>2|\hat{W}(w_i)|$, i.e. the number of remaining nodes in $\hat{W}(w_i)$ is (at least) halved. Let $|\hat{W}(w_{0}|=m=\mathcal O(\log n)$. Since the id/position for each node is assigned uniformly at random, we can easily see that Pr[$w_i$ is a good choice]$= \frac{1}{2}, \forall i\geq 1$.
Then after a sequence of $i$ choices that contains $\log m$ good choices the remaining set $\hat{W}(w_i)$ is the empty set. Thus there can not be more than $\log m$ good choices in any sequence of choices.
So, what we have now is a random experiment, that is described by the random variable $k$, that is equal to the number choices we must make, until we managed to have made $\log m$ good ones. Then the random variable $k$ follows the negative binomial distribution.
In order to bound the value of $k$ from above we apply the following tail bound for negative binomially distributed random variables shown in \cite{NegBi}, derived by using a Chernoff bound:

\begin{claim}
Let $Y$ have the negative binomial distribution with parameters $s$ and $p$, i.e. with probability $p$ there is a success and $Y$ is equal to the number of trials needed for $s$ successes. Pick $\delta \in [0,1]$ and set $l=\frac{s}{(1-\delta)p}$. Then $Pr[Y>l]\leq exp(\frac{-\delta^{2}s}{3(1-\delta)})$
\end{claim}

We apply this claim with $p=\frac{1}{2}$ and $s=\log m$ and we pick $\delta=\frac{7}{8}$. Then $Pr[Y>16\log m]\leq \exp(\frac{-\delta^{2}2\log m}{3(1-\delta)})<m^{- 2}$. Thus with probability at least $1-m^{- 2}$, $k=\mathcal O(\log m)$ as $m=\mathcal O(n)$ also $k=\mathcal O(\log n)$.
\end{proof}

\begin{lemma}
In a stable CONE network for each $u\in V$, $E[| S^-(u)|]$ and $E[ |P^-(u)|]$ are $\mathcal O(1)$ and $| S^-(u)|$ and $| P^-(u)|$ are $\mathcal O(\log n)$ w.h.p..
\end{lemma}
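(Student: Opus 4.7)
The plan is to combinatorially identify $S^-(u)$ with the left-to-right maxima of the capacity sequence of nodes to the right of $u$, and then transfer both bounds from standard facts about such maxima in uniformly random permutations.

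First I would establish the following combinatorial characterization: enumerate the nodes to the right of $u$ in increasing order of hash value as $v_1,v_2,\ldots$ and verify that $v_i \in S^-(u)$ iff $c(v_i) < c(u)$ and $c(v_i) > c(v_j)$ for every $j<i$. One direction is immediate from the definition of $pred_1^+$; the other follows because if no node strictly between $u$ and $v_i$ has capacity larger than $c(v_i)$, then $u$ is automatically the closest larger-capacity node on the left of $v_i$. A key consequence is that as soon as some index $t$ satisfies $c(v_t) > c(u)$, no later $v_k$ can belong to $S^-(u)$ (such a $v_k$ would need $c(v_k) > c(v_t) > c(u)$, contradicting $c(v_k) < c(u)$). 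Hence $|S^-(u)|$ equals the number of left-to-right maxima of $c(v_1),c(v_2),\ldots$ that appear strictly before position $t$, and is therefore bounded above by the total number of left-to-right maxima in the whole sequence.

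For the high-probability bound, I would use that since hash values are uniform, conditioning on which nodes lie to the right of $u$ the capacity sequence $c(v_1),c(v_2),\ldots$ is in uniformly random order. The number of left-to-right maxima of a random permutation of length $m$ is $\sum_{i=1}^m X_i$ for independent $X_i \sim \mathrm{Bernoulli}(1/i)$, and essentially the same negative-binomial Chernoff estimate already invoked in the proof of Lemma~\ref{log} (each new maximum halves, in expectation, the set of remaining candidates) then gives $O(\log n)$ with probability at least $1-n^{-c}$ for any constant $c>0$. For the expectation, I would apply linearity together with a symmetry argument: for each $v$ with $c(v) < c(u)$, setting $k_v = |\{w : c(w) > c(v)\}|$ (which contains $u$), the probability that $u$ is $v$'s immediate predecessor in the uniformly random order restricted to $\{v\}\cup\{w:c(w)>c(v)\}$ equals $1/(k_v+1)$, yielding $E[|S^-(u)|] = \sum_{v:c(v)<c(u)} 1/(k_v+1)$; the $O(1)$ estimate then follows by averaging, using that each node has at most one $pred_1^+$ so $\sum_u |S^-(u)| \le n-1$. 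The analogous statements for $P^-(u)$ follow by reflecting the roles of ``left'' and ``right''.

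The main obstacle I anticipate is writing the cutoff argument at $v_t$ cleanly, since it is what bridges the purely local definition of $S^-(u)$ (in terms of $pred_1^+$) with the global left-to-right-maxima count; once this reduction is in place, both the tail bound (via the negative-binomial inequality already established) and the symmetry-based computation of the expectation become routine.
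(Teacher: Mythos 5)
Your high-probability bound is correct, and it takes a genuinely different route from the paper. The paper never counts records directly: it lets $w$ be $u$'s direct right neighbor (the first element of $S^-(u)$), observes that the elements of $S^-(u)$ have strictly increasing capacities so that $S^-(u)\setminus\{w\}\subseteq S^+(w)$, and then simply invokes Lemma~\ref{log} applied to $w$. Your reduction of $S^-(u)$ to the left-to-right maxima of the capacity sequence right of $u$, cut off at the first node with capacity above $c(u)$, is sound (your characterization and the cutoff argument are exactly right), and the Bernoulli$(1/i)$ representation of record counts gives the $\mathcal O(\log n)$ w.h.p.\ bound; but note that this re-derives from scratch the same probabilistic core that Lemma~\ref{log} already packaged, whereas the paper's containment $S^-(u)\setminus\{w\}\subseteq S^+(w)$ lets it reuse that lemma verbatim. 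What your version buys is self-containedness and an exact distributional picture; what the paper's buys is brevity.

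Your expectation step, however, has a genuine gap --- and your own exact formula exposes it. From $E[|S^-(u)|]=\sum_{v:\,c(v)<c(u)}\frac{1}{k_v+1}$: if $u$ has capacity rank $r$ (rank $1$ being the largest), the values $k_v+1$ run over $r+1,\dots,n$, so $E[|S^-(u)|]=H_n-H_r$, which is $\Theta(\log n)$ for the largest-capacity node and is $\mathcal O(1)$ only when $r=\Omega(n)$. Your final appeal to ``averaging'' via $\sum_u|S^-(u)|\le n-1$ proves only that the expectation is $\mathcal O(1)$ \emph{on average over $u$}, not ``for each $u\in V$'' as the statement demands; no averaging can bridge that, because the fixed distinct capacities break the symmetry between nodes. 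To be fair, the paper's own proof commits the identical leap: from $\sum_{v}E[|S^-(v)|]=n$ it concludes ``for a node $u$, $E[|S^-(u)|]=1$.'' So your proposal is no weaker than the published argument on this point, but neither establishes the per-node expectation claim, and your formula $H_n-H_r$ shows that claim is in fact false as stated for nodes of near-maximal capacity. The per-node w.h.p.\ bound of $\mathcal O(\log n)$ --- the part actually used for Theorem~\ref{theo:degree} --- holds for every $u$ and is unaffected.
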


\begin{proof}
W.l.o.g. we consider only $E[| S^-(u)|]$ and $|S^-(u)|$ in the proof.
For each node $x$ being in an interval $S^-(u)$ it holds $u=pred^+_1(x)$. But since each node has (at most) one $pred^+_1$, the sum over all $S^-(v)$, $\forall v \in V$ must be (at most) $n$. So we have $\sum_{\forall v \in V} S^-(v)=n$, so  $E[\sum_{\forall v \in V} S^-(v)]=n$ $\Rightarrow$ $\sum_{\forall v \in V}E[ S^-(v)]=n $.
That means for a node $u$, $E[ S^-(u)]=1$.

Now we consider the second part of the statement.
Let $w$ be the direct right neighbor of $u$, i.e. the first (from the left) node in $S^-(u)$ ($S^-(u)[1]$). Then we can observe that every node in $S^-(u)$ (expect $w$) must be in $S^+(w)$. Let us assume a node $x$ is in $S^+(w)$  but not in $S^-(u)$, then there must be another node $y: h(u)<h(y)<h(x)$ and $c(y)>c(x)$, such that $y \in S^-(u)$. But then $y$ would be also in  $S^+(w)$ instead of $x$.
So, we contradicted this scenario. As a consequence  $S^-(u)/\{w\} \subset S^+(w)$, but we already shown that $|S^+(w)|<\log n$ w.h.p., from which follows that $|S^-(u)|<\log n$ w.h.p..
\end{proof}

Combining the two lemmas we get the following theorem.

\begin{theorem}\label{theo:degree}
The degree of each node in a stable CONE network is $\mathcal O(\log n)$ w.h.p.
\end{theorem}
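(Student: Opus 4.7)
The plan is to combine the two previous lemmas via a straightforward union bound. By the definition of the edge set in $G^{CONE}$, the neighborhood of any node $u$ is $N_u = S^+(u) \cup P^+(u) \cup S^-(u) \cup P^-(u)$, and hence its degree satisfies
\[
\deg(u) \;\le\; |S^+(u)| + |P^+(u)| + |S^-(u)| + |P^-(u)|.
\]
So it suffices to show that each of the four summands is $O(\log n)$ with high probability, which is precisely what Lemma~\ref{log} and the subsequent lemma deliver.

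First I would invoke Lemma~\ref{log} to bound $|S^+(u)|$ and $|P^+(u)|$ by $c\log n$, each with failure probability at most $n^{-\alpha}$ for a constant $\alpha$ that can be made arbitrarily large by choosing the constant in the $O(\log n)$ expression accordingly (this is transparent from the Chernoff-style tail bound used in the lemma). Next I would invoke the second lemma to bound $|S^-(u)|$ and $|P^-(u)|$ by $O(\log n)$, again each with inverse-polynomial failure probability; note that the second lemma itself relies on the first via the embedding $S^-(u)\setminus\{w\}\subseteq S^+(w)$, so its high-probability bound is inherited from the same kind of estimate.

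A single union bound over these four events then gives $\deg(u) = O(\log n)$ with failure probability at most $4n^{-\alpha}$ for that fixed node $u$. To obtain the statement uniformly for \emph{every} node in the network, I would apply a further union bound over the $n$ nodes, which costs an extra factor of $n$; choosing the hidden constant in the $O(\log n)$ bound large enough makes the overall failure probability at most $n^{-c}$ for any desired constant $c>0$, which is the w.h.p.\ guarantee required.

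There is no real obstacle here: the work has been done in the two preceding lemmas, and the only subtlety is making sure the constants in the two w.h.p.\ statements are chosen consistently so that a constant number of union bounds still leaves the overall probability polynomially small. The cleanest presentation is therefore very short: state the neighborhood decomposition, quote the two lemmas, and take the union bound.
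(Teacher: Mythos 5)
Your proposal is correct and matches the paper's argument: the paper proves the theorem simply by combining Lemma~\ref{log} with the lemma bounding $|S^-(u)|$ and $|P^-(u)|$, exactly the four-set decomposition you use. You merely spell out the union bounds (over the four sets and over the $n$ nodes) that the paper leaves implicit, which is a harmless and arguably welcome elaboration.
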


Additionally to the nodes in $ S^+(u)$, $ S^-(u)$, $ P^+(u)$ and $ P^-(u)$ that lead to the degree of $\mathcal O(\log n)$ w.h.p. a node $u$ only stores references about the closest nodes left to the intervals it is responsible for, where it actually stores data. A node $u$ stores at most one reference and one interval for each data item. Thus the storage only has a logarithmic overhead for the topology information and the following theorem follows immediately.

\begin{theorem}\label{theo:storage}
In a stable CONE network each node stores at most $\mathcal O(\log n + |\text{amount of data stored in a node}|)$ information w.h.p.
\end{theorem}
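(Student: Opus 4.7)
My plan is to deduce Theorem \ref{theo:storage} as an almost immediate corollary of Theorem \ref{theo:degree} combined with the data-storage convention introduced just before the theorem. The information held by a node $u$ splits cleanly into two parts: (i) \emph{topology information}, i.e. the identifiers of the nodes in $S^+(u)\cup P^+(u)\cup S^-(u)\cup P^-(u)$ together with a constant amount of bookkeeping per neighbor, and (ii) \emph{data-layer information}, namely the actual data items stored at $u$ together with, for each such item, one reference to the closest node from the left of the corresponding responsibility subinterval and the subinterval itself. I would prove the theorem by bounding these two contributions separately.

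For (i), I would simply invoke Theorem \ref{theo:degree}: in a stable CONE network the degree of $u$ is $\mathcal O(\log n)$ w.h.p., so the total number of identifiers stored for topological purposes is $\mathcal O(\log n)$ w.h.p., and each identifier (together with its constant-size auxiliary fields such as capacity or hash value) contributes $\mathcal O(1)$ words. For (ii), I would appeal to the storage convention described right before the theorem: for every data element actually held by $u$, the node keeps the element plus exactly one reference and one interval. Hence the overhead is $\mathcal O(1)$ per stored item, and the total size of part (ii) is $\mathcal O(|\text{amount of data stored in }u|)$ deterministically.

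Adding the two bounds yields the claimed $\mathcal O(\log n + |\text{amount of data stored in a node}|)$ bound, with the probability guarantee inherited from Theorem \ref{theo:degree}. The only subtle point—and the one I would be most careful to spell out—is that the data-layer bookkeeping is truly $\mathcal O(1)$ per item rather than proportional to the number of responsibility intervals of $u$: a node needs to remember an interval/reference pair only for those subintervals in which it actually holds data, not for every subinterval it supervises. Once that is made explicit, the rest is a one-line addition, so I do not expect any real technical obstacle beyond pointing to the correct earlier results.
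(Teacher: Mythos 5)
Your proof is correct and takes essentially the same route as the paper, which likewise obtains Theorem~\ref{theo:storage} immediately from Theorem~\ref{theo:degree} together with the observation that a node keeps at most one reference and one interval per data item it actually stores. Your explicit remark that the per-item overhead is constant---since references/intervals are kept only for subintervals in which data is actually held, not for all supervised subintervals---is precisely the point the paper makes informally in the sentence preceding the theorem.
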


Once the CONE network $G^{CONE}$ is set up, it can be used as an heterogeneous storage system supporting inserting, deleting and searching for data.
The CONE Greedy routing implies the following bound on the diameter:

\begin{lemma}\label{lem:routing}
CONE Greedy routing takes on a stable CONE network w.h.p. no more than a logarithmic number of steps, i.e. the diameter of a CONE network is $\mathcal O(\log n)$ w.h.p..
\end{lemma}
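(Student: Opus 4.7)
The plan is to bound the number of hops taken by greedy routing from an arbitrary source $u_0$ to an arbitrary target position $pos$ by $\mathcal O(\log n)$ w.h.p.; since every greedy hop uses an edge of $G^{CONE}$, this immediately bounds the diameter. Without loss of generality I assume $pos>h(u_0)$, fix the target node $t$ with $pos\in I_t$, and, along the greedy path $u_0,u_1,u_2,\ldots$, track the set $A_i=\{w\in V:h(u_i)<h(w)\le h(t)\}$ of still-relevant nodes.

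The first step is a structural claim: at every current node $u_i$, the capacity-maximizer $v_i^*:=\arg\max\{c(w):w\in A_i\}$ lies in $N_{u_i}$, which I would verify by splitting on $c(v_i^*)$ versus $c(u_i)$ (using that capacities are distinct by assumption). If $c(v_i^*)>c(u_i)$, then by maximality no node strictly between $u_i$ and $v_i^*$ has capacity exceeding $c(v_i^*)$, so iterating $\mathrm{succ}^+_1$ from $u_i$ yields a strictly capacity-increasing sequence confined to $(h(u_i),h(v_i^*)]$, which must terminate at $v_i^*$; hence $v_i^*\in S^+(u_i)$. If instead $c(v_i^*)<c(u_i)$, maximality of $v_i^*$ prevents any node in $(h(u_i),h(v_i^*))$ from exceeding $c(v_i^*)$, so $u_i=\mathrm{pred}^+_1(v_i^*)$ and thus $v_i^*\in S^-(u_i)$. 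Because the greedy rule selects the neighbor with the largest hash not exceeding $pos$, and $h(v_i^*)\le h(t)\le pos$, the next hop satisfies $h(u_{i+1})\ge h(v_i^*)$, which yields $A_{i+1}\subseteq\{w\in A_i:h(w)>h(v_i^*)\}$.

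The second step is a probabilistic progress bound. Since hash values are uniform i.i.d.\ and independent of the (fixed) capacities, conditional on the set $A_i$ the relative ranking of its elements by hash is a uniformly random permutation, so the rank of $v_i^*$ within $A_i$ is uniform in $\{1,\ldots,|A_i|\}$. Consequently $|A_{i+1}|$ is stochastically dominated by a uniform draw from $\{0,\ldots,|A_i|-1\}$, and the halving event $|A_{i+1}|\le|A_i|/2$ occurs with probability at least $1/2$ at each step. Since $|A_0|\le n$, only $\log_2 n$ halvings drive $|A_i|$ to zero, so the total hop count is dominated by the number of $\mathrm{Bernoulli}(1/2)$ trials needed to accumulate $\log_2 n$ successes; feeding this into the same negative-binomial Chernoff tail bound already used in the proof of Lemma~\ref{log} (with $s=\Theta(\log n)$ and $p=1/2$) yields the desired $\mathcal O(\log n)$ bound with probability at least $1-n^{-c}$ for any constant $c>0$.

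The main technical subtlety I anticipate is justifying the conditional independence of the successive halving events: both the identity of $A_i$ and the fact that greedy routing has reached $u_i$ depend on hashes revealed during earlier hops. A principle-of-deferred-decisions argument---revealing the hash of each node only at the moment it first becomes a candidate for the greedy rule, or first enters some $A_i$---should make the independence claim precise, after which the Chernoff estimate above goes through directly.
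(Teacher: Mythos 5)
Your structural claim that the window's capacity-maximizer $v_i^*$ always lies in $N_{u_i}$ is correct (both cases of your case split check out, given distinct capacities), and the probabilistic core is essentially sound. The genuine gap is that your key inequality $h(u_{i+1})\ge h(v_i^*)$ assumes the greedy rule selects among \emph{all} of $N_{u_i}$, whereas the paper's formal definition of CONE greedy routing restricts the candidates to $u.S^*\cup\{u\}$, where $u.S^*=u.S^-\cup\{u.succ^+_1\}$. In your ascent case $c(v_i^*)>c(u_i)$ your argument only places $v_i^*$ in $S^+(u_i)$; unless $v_i^*=succ^+_1(u_i)$ it is not a candidate at all, and since every node of $S^-(u_i)$ lies strictly between $u_i$ and $succ^+_1(u_i)$, the actual hop is $u_{i+1}=succ^+_1(u_i)$, possibly far to the left of $v_i^*$. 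For such hops the window maximizer stays put and your uniform-rank argument gives no halving guarantee: the shrink of $A_i$ is governed by the hash-position of the \emph{nearest} node with capacity above $c(u_i)$, and when many nodes of $A_i$ have such capacities the expected shrink per hop is a small fraction of $|A_i|$, so consecutive ascent hops can fail to halve for many steps. The repair is exactly the paper's observation: the ascent walks the $succ^+_1$-chain, which is contained in $S^+$ of the first ascent node, hence has length $\mathcal O(\log n)$ w.h.p. by Lemma~\ref{log}; once $h(succ^+_1(u_i))>pos$, every subsequent hop does go to the window maximizer (which then lies in $S^-(u_i)\subseteq u.S^*$), and your analysis applies.

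For comparison: the paper's proof needs no fresh probabilistic argument at all. It bounds the ascent by $|S^+(v)|$ and the descent by a chain in which each hop $x\to x'$ satisfies $pred^+_1(x')=x$, so the reversed descent is a prefix of the target's $P^+$-chain; both lengths are $\mathcal O(\log n)$ w.h.p. directly by Lemma~\ref{log}. Your halving/negative-binomial argument is a valid alternative for the descent phase, modulo one conditioning detail you should make explicit: the history also reveals that the target $t$ is the hash-maximum of $A_i$, so the rank of $v_i^*$ is uniform only over the remaining $|A_i|-1$ positions (the halving probability is still at least $1/2$, so nothing breaks). But it cannot substitute for Lemma~\ref{log} on the ascent, where the structural chain-length bound is precisely what is needed; as written, your proof establishes the lemma only for a greedy rule over full neighborhoods, not for the rule the paper actually defines.
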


\begin{proof}
This follows directly from Lemma \ref{log}, where we showed that each node $u$ has w.h.p. a logarithmic number of nodes in $ P^+(u)/ S^+(u)$, which means it has a logarithmic distance to the node with the greatest capacity, and vice versa, which means that the node with the greatest capacity has logarithmic distance to every node in the network. The proof for the CONE Greedy routing follows from a generalization of this observation. If an operation $op$ with position $pos \in I_u$ is currently at node $v$ w.l.o.g. we assume $h(u)>h(v)$, then $op$ is forwarded at most $\mathcal O(\log n)$ times w.h.p. (along nodes in $ S^+(v)$) to a node $w$ such that $w\in  S^+(v)$  and further $\mathcal O(\log n)$ times w.h.p. (along nodes in $ S^-(u)$) from $w$ to $u$.
\end{proof}

\begin{figure}[htb]
\caption{In this example, the case of the linear capacity functions is presented.
Concerning the intervals lying between $u$ and $v$,
we can see by the coloring which interval is assigned to which node (the one having the lowest capacity function value at that interval). According to the \emph{CONE}-graph, $u$
must be aware of all these nodes ($v,w$ and $x$).}

\includegraphics[scale=0.4]{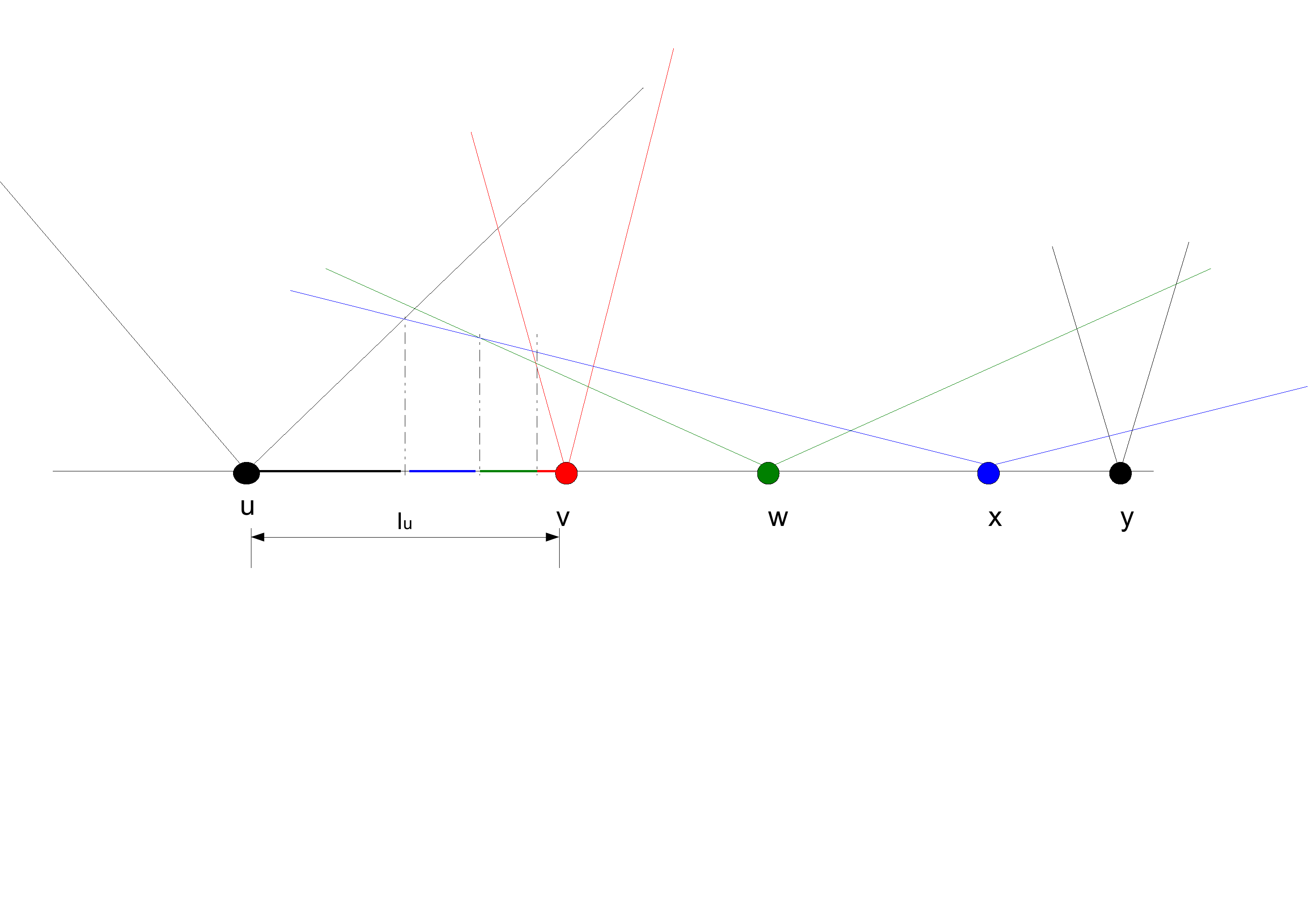}\label{example}
\end{figure}

\begin{figure}[htb]

\caption{In this example, the size of the capacity of a node is symbolized by the height of its green column (i.e. the larger the capacity the higher the column). So, for example in this case $u$ is aware of $v,w$ and $x$. In fact, $S^+(u)=\{w,x\},S^-(u)=\emptyset,P^+(u)=\emptyset, P^-(u)=\{v \}$   }

\includegraphics[scale=0.4]{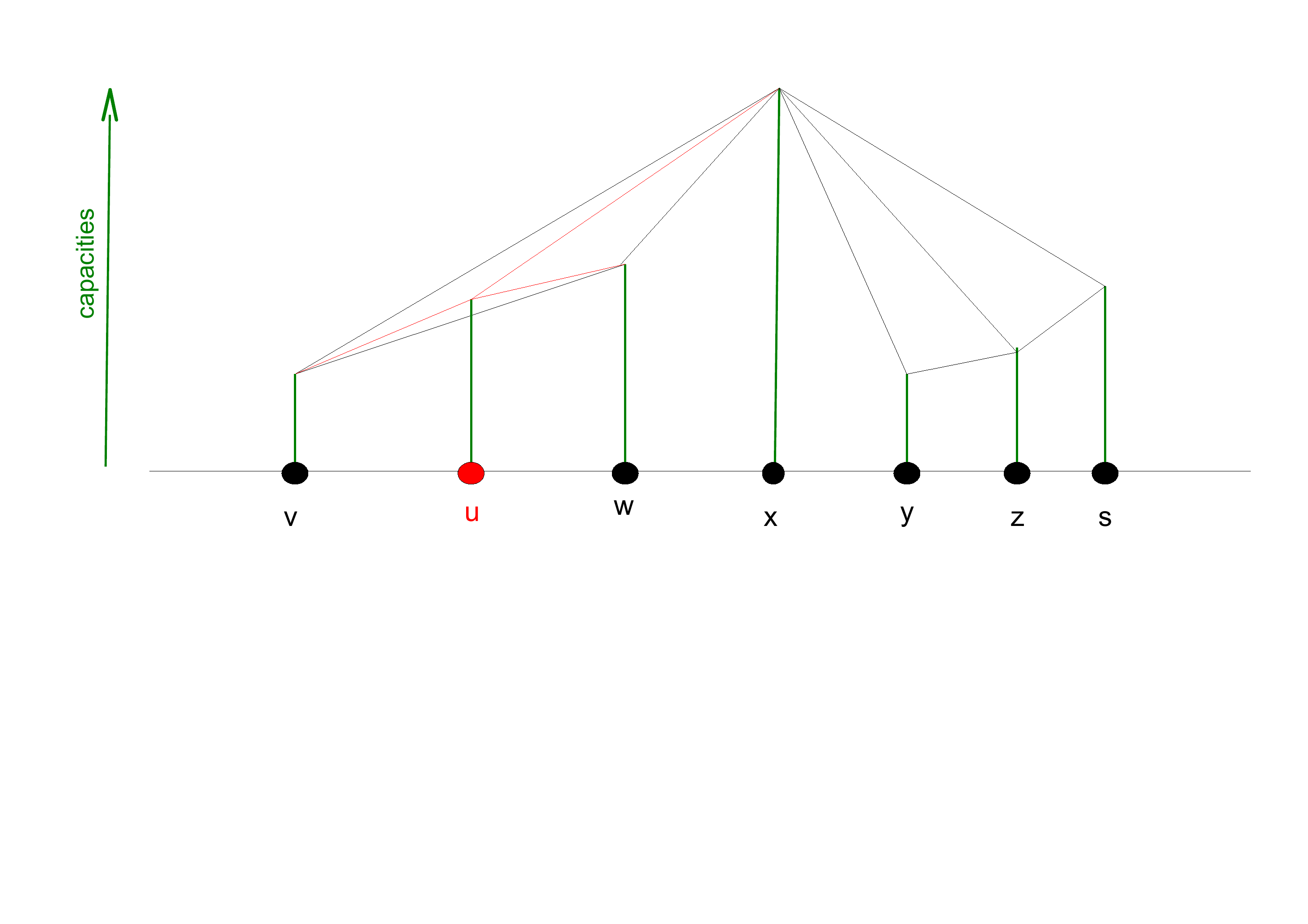}\label{example3}
\end{figure}
\section{Self-Stabilization Process}\label{alg}

\subsection{Topological Self-stabilization}
We now formally describe the problem of topological self-stabilization. In topological self-stabilization the goal is to state a protocol \emph{P} that \emph{solves} an overlay problem \emph{OP} starting from an initial topology of the set \emph{IT}. A protocol is \emph{unconditionally} self-stabilizing if \emph{IT} contains every possible state. Analogously a protocol is \emph{conditionally} self-stabilizing if \emph{IT} contains only states that fulfill some conditions. For topological self-stabilization we assume that \emph{IT} contains any state as long as $G^{IT}=\left(V,E^{IT}\right)$ is weakly connected, i.e. the combined knowledge of all nodes in this state covers the whole network, and there are no identifiers that don't belong to existing nodes in the network.
The set of target topologies defined in \emph{OP} is given by $OP=\left\{G^{OP}_e=\left(V, E^{OP}_e\right)\right\}$, i.e. the goal topologies of the overlay problem are only defined on explicit edges and $E^{OP}_i$ can be an arbitrary (even empty) set of edges. We also call the program states in \emph{OP} \emph{legal states}. We say a protocol \emph{P} that solves a problem \emph{OP} is topologically self-stabilizing if for \emph{P} \emph{convergence} and \emph{closure} can be shown. \emph{Convergence} means that \emph{P} started with any state in \emph{IT} reaches a legal state in \emph{OP}. \emph{Closure} means that \emph{P} started in a legal state in \emph{OP} maintains a legal state.
For a protocol \emph{P} we assume that there are no oracles available for the computation. In particular we assume there to be no \emph{connection oracle}, that can connect disconnected parts of the network, no \emph{identifier detector}, that can decide whether an identifier belongs to an existing node or not, and no \emph{legal state detector}, that can decide based on global knowledge whether the system is in a legal state or not.
With these assumptions our model complies with the \emph{compare-store-send} program model in \cite{NNS11:corona} in which protocols do not manipulate the internals of the nodes' identifiers. For our modified model the impossibility results of \cite{NNS11:corona} still hold such as Lemma 1 that states if the graph $G=(V,E=E_i \cup E_e)$ is initially disconnected, then the graph is disconnected in every state of the computation. Furthermore Theorem 1 states that if the goal topology is a single component a program only solves the problem if the initial graph is weakly connected.


\subsection{Formal problem definition and notation}\label{problem}


Now we define the problem we solve in this paper in the previously introduced notation. We provide a protocol \emph{P} that solves the overlay problem \emph{CONE} and is topologically self-stabilizing.


In order to give a formal definition of the edges in $E_e$ and in $E_i$ we firstly describe which internal variables are stored in each node $u$, i.e. which edges exist in $E_e$:

\begin{itemize}\itemsep0.1pt

\item $u.S^+=\left\{v\in N_u: h(v)>h(u) \wedge c(v)>c(u) \wedge \forall w\in N_u: h(v)>h(w)>h(u) \implies c(v)>c(w)\right\}$

\item $u.succ^+_1=\argmin \left\{h(v): v \in u.S^+ \right\}$: The first node to the right with a larger capacity than $u$

\item $u.P^+=\left\{v\in N_u: h(v)<h(u) \wedge c(v)>c(u) \wedge \forall w\in N_u: h(v)<h(w)<h(u) \implies c(v)>c(w)\right\}$

\item $u.pred^+_1=\argmax \left\{h(v): v \in u.P^+\right\}$: The first node to the left with a larger capacity than $u$

\item $u.S^-=\left\{v\in N_u: h(v)>h(u) \wedge c(v)<c(u) \wedge \forall w\in N_u: h(v)>h(w)>h(u) \implies c(v)>c(w)\right\}$

\item $u.P^-=\left\{v\in N_u: h(v)<h(u) \wedge c(v)<c(u) \wedge \forall w\in N_u: h(v)<h(w)<h(u) \implies c(v)>c(w)\right\}$

\item $u.S^*=\left\{u.S^- \cup \left\{u.succ^+_1\right\}\right\}$: the set of right neighbors that $u$ communicates with. We assume that the nodes are stored in ascending order so that $h(u.S^*[i])<h(u.S^*[i+1])$. If there are $k$ nodes in $u.S^*$ then $u.S^*[k]=u.succ^+_1$. 

\item $u.P^*=\left\{u.P^- \cup \left\{u.pred^+_1\right\}\right\}$: the set of left neighbors that $u$ communicates with. We assume that the nodes are stored in descending order so that $h(u.P^*[i])>h(u.P^*[i+1])$ If there are $k$ nodes in $u.P^*$ then $u.P^*[k]=u.pred^+_1$.

\item $u.DS$ the data set, containing all intervals $u.DS[i]=[a,b]$, for which $u$ is responsible and stores actual data $u.DS[i].data$.  
Additionally for each interval a reference $u.DS[i].ref$ to the supervising node is stored
\end{itemize}

Additionally each node stores the following variables :
\begin{itemize}\itemsep0.1pt

\item $\tau$: the timer predicate that is periodically true

\item $u.I_u$: the interval between $u$ and the successor of $u$. $u$ is supervising $u.I_u$.

\item $m$: the  message in $Ch_u$ that now received by the node.

\end{itemize}

\begin{definition}\label{def:valid}
We define a valid state as an assignment of values to the internal variables of all nodes so that the definition of the variables is not violated, e.g. $u.S^+$ contains no nodes $w$ with $h(w)<h(u)$ or $c(w)<c(u)$ or $h(u)<h(v)<h(w)$ and $c(v)>c(w)$ for any $v\in N_u$.
\end{definition}

Now we can describe the topologies in the initial states and in the legal stable state. Let $IT=\left\{G^{IT}=(V,E_{IT}=E^{IT}_e \cup E^{IT}_i):G^{IT}\text{ is weakly connected}\right\}$ and let $CONE=\left\{G^{C}=\left(V, E^{C}\right)\right\}$, such that for $E^{C}$ the following conditions hold:

\begin{itemize}
  \item $E^{C}=E_e-\left\{(u,v): v\in u.DS\right\}$
  \item $E^{C}$ is in a valid state
  \item $E^{C}=E^{CONE}$
\end{itemize}

Note that we assume $E_e$ to be a multiset, i.e in $E^{C}$ an edge $(u,v)$ might still exists, although $v \in u.DS$ if e.g. $v \in u.S^+$.
Further note that, in case the network has stabilized to a \emph{CONE}-network, it holds for every node that $u.S^+=S^+(u),u.P^+=P^+(u),u.S^-=S^-(u)$ and $u.P^-=P^-(u)$.

\subsection{Algorithm}\label{app:alg}
In this section we give a description of the the distributed algorithm. The
algorithm is a protocol that each node executes based on its own node and
channel state. The protocol contains periodic actions that are executed if
the timer predicate $\tau$ is true and actions that are executed if the node
receives a message $m$.
In the periodic actions each node performs a consistency check of its
internal variables, i.e. are all variables valid according to Definition
~\ref{def:valid}. If some variables are invalid, the nodes causing this
invalidity are delegated. By \emph{delegation} we mean that node $u$
delegates a node $v: h(v)>h(u)$ (resp. $h(v)<h(u)$) to the node $w'=\argmax \left\{h(w) : w \in
u.S^* \wedge h(w)<h(v)\right\}$ (resp. $w'=\argmin \left\{h(w): w \in u.P^* \wedge h(w)>h(v)\right\}$) by a message
$m=(build-triangle,v)$ to $w'$. The idea behind the delegation is to forward nodes closer to their correct position, so that the sorted list (and the \emph{CONE}-network) is formed.
In order for a node $u$ to maintain valid lists ($u.S^+,u.P^+$), it makes a periodic check of its lists with its neighbors in $u.S^-,u.P^-$, where the lists are compared, so that inconsistencies
are repaired. Moreover a node checks whether $u.S^-/u.P^-$ are valid and introduces to them their closest larger right/left neighbors (from $u$'s perspective).
Unnecessary (for these lists) nodes are delegated.
We show later in the analysis section that this process leads to the construction of the correct lists by each node and thus to the \emph{CONE}-network. 
Furthermore in the periodic actions each node introduces itself to its successor and predecessor $u.S^*[1]$ and $u.P^*[1]$ by a message $m=(build-triangle,u)$. Also each pair of nodes in $u.P^*$ and $u.S^*$ with consecutive ids is introduced to each other. $u$ also introduces the nodes $u.succ^+_1$ and $u.pred^+_1$ to each other by messages of type $build-triangle$. By this a \emph{triangulation} is formed by edges
$(u,u.pred^+_1), (u,u.succ^+_1), (u.succ^+_1,u.pred^+_1)$ (see Figure ~\ref{example3}).
To establish correct $P^+$ and $S^+$ lists in each node, a node $u$ sends its
$u.P^+$ (resp. $u.S^+$) list periodically to all nodes $v$ in $u.S^-$ (resp.
$u.P^-$) by a message $m=(list-update,u.S^+ \cup \left\{u\right\})$ (resp.
$m=(list-update,u.P^+\cup \left\{u\right\})$) to $v$.
The last action a node periodically executes is to send a message to each
reference in $u.DS$ to check whether $u$ is responsible for the data in the
corresponding interval $[a,b]$ by sending a message
$m=(check-interval,[a,b],u)$.

If the message predicate is true and $u$ receives a message $m$, the
action $u$ performs depends on the type of the message. If $u$ receives a
message $m=(build-triangle,v)$ $u$ checks whether $v$ has to be included in
it's internal variables $u.P^+$, $u.S^+$, $u.P^-$ or $u.S^-$. If $u$ doesn't
store $v$, $v$ is delegated. If $u$ receives a message
$m=(list-update,list)$, $u$ checks whether the ids in $list$ has to be
included in it's internal variables $u.P^+$, $u.S^+$, $u.P^-$ or $u.S^-$. If $u$ doesn't store a node $v$ in $list$, $v$ is delegated. If $u$ stores a node
$v$ in $u.S^+$ (resp. $u.P^+$) that is not in $list$, $v$ is also delegated as
it also has to be in the list of $u.pred^+_1$ (resp. $u.succ^+_1$). The remaining
messages are necessary for the data management.

If $u$ receives a message $m=(check-interval,[a,b],v)$ it checks whether $v$
is in $u.S^+$ or $u.P^+$ or has to be included, or delegates $v$. Then $u$
checks whether $[a,b]$ is in $u.I_u$ and if $v$ is responsible for $[a,b]$. If
not, $u$ sends a message $m=(update-interval,IntervalSet)$ to $v$ containing
a set of intervals in $[a,b]$ that $v$ is not responsible for and references
of the supervising nodes.
If $u$ receives a message
$m=(update-interval,IntervalSet)$ it forwards all data in intervals in
$IntervalSet$ to the corresponding references by a message
$m=(forward-data,data)$. If $u$ receives such a message it checks whether
the data is in its supervised interval $u.I_u$. If not $u$ forwards the data
according to a greedy routing strategy, if $u$ supervises the data it sends
a message $m=(store-data,data,u)$ to the responsible node. If $u$ receives
such a message it inserts the data, the interval and the corresponding
reference in $u.DS$. Note that no identifiers are ever deleted, but always
stored or delegated. This ensures the connectivity of the network.

In the following we give a description of the protocol executed by each node
in pseudo code.

\subsection{Pseudo code}


\begin{algorithm}[H]
Periodic actions including a consistency check, where all list $u.P^+,u.S^+,u.S^-,u.P^-$ are checked and invalid nodes are delegated like in the ListUpdate/BuildTriangle operation. Furthermore each node sends its lists $u.P^+,u.S^+$ to the  next smaller nodes and $u.S^+$ to its direct left neighbor. Finally in the BuildTriangle() $u$ introduces itself to its neighbors and neighbored nodes in $u.P^*$ and $u.S^*$ and $u.pred^+_1$ and $u.succ^+_1$ to each other and checks whether the information in $u.DS$ is still up to date.

\begin{algorithmic}
\State $\tau \rightarrow$ \Comment{periodic actions}
\State Consistency check for $u.P^+$, $u.S^+$, $u.S^-$, $u.P^-$
\ForAll{$w \in u.S^-$}
\State send m=(list-update,$u.P^+\cup\left\{u\right\}$) to $w$
\EndFor
\ForAll{$w \in u.P^-$}
\State send m=(list-update,$u.S^+\cup\left\{u\right\}$) to $w$
\EndFor
\State BuildTriangle()
\State CheckDataIntervals()
\end{algorithmic}
\end{algorithm}

\begin{algorithm}
\begin{algorithmic}

\Function{BuildTriangle}{node x}
\If{$x = \emptyset$} \Comment{periodic introduction of nodes and $u$ itself}
\ForAll{$w \in u.S^-$}
\State $v^{-}(w)=\argmax \left\{h(v): v\in u.S^* \wedge h(v)< h(w) \right\}$
\State send m=(build-triangle,w) to $v^{-}(w)$ and  m'=(build-triangle,$v^{-}(w)$) to $w$
\EndFor
\ForAll{$w \in u.P^-$}
\State $v^{+}(w)=\argmin \left\{h(v):v \in u.P^*\wedge h(v)>h(w)\right\}$
\State send m=(build-triangle,$w$) to $v^{+}(w)$ and  m'=(build-triangle,$v^{+}(w)$) to $w$
\EndFor
\ForAll{$w \in u.S^- \cup u.P^- \cup \left\{u.succ^+_1,u.pred^+_1\right\}$}
\State send m=(build-triangle,$u$) to $w$
\EndFor
\State send m=(build-triangle,$u.pred^+_1$) to $u.succ^+_1$ and  m'=(build-triangle,$u.succ^+_1$) to $u.pred^+_1$
\Else \Comment{demand action by a received node id}
\If{$c(x)>c(u) \wedge h(u)<h(x)<h(u.succ^+_1)$}
\State send m=(buildtriangle,$u.succ^+_1$) to $x$
\State $u.succ^+_1=x$
\ElsIf{$c(x)>c(u) \wedge h(u)>h(x)>h(u.pred^+_1)$}
\State send m=(buildtriangle,$u.pred^+_1$) to $x$
\State $u.pred^+_1=x$
\ElsIf {$h(x)>h(u)$}
\State calculate $S^-_{tmp}$ out of $u.S^-$ and $x$
\ForAll{$w\in (u.S^- \cup \left\{x\right\})-S^-_{tmp}$}
\State $v^{-}(w)=\argmax \left\{h(v):v\in S^-_{tmp} \cup \left\{succ^+_1\right\}\wedge h(v)<h(w)\right\}$
\State send m=(build-triangle,$w$) to $v^{-}(w)$
\EndFor
\State $u.S^-=S^-_{tmp}$
\ElsIf {$h(x)<h(u)$}
\State calculate $P^-_{tmp}$ out of $u.P^-$ and $x$
\ForAll{$w\in (u.P^- \cup \left\{x\right\})-P^-_{tmp}$}
\State $v^{+}(w)=\argmin \left\{h(v):v\in P^-_{tmp} \cup \left\{pred^+_1\right\}\wedge h(v)>h(w)\right\}$
\State send m=(build-triangle,$w$) to $v^{+}(w)$
\EndFor
\State $u.P^-=P^-_{tmp}$
\EndIf
\EndIf
\EndFunction
\end{algorithmic}
\end{algorithm}

\begin{algorithm}

\begin{algorithmic}
\Function{ListUpdate}{List}
\State $LList^+={z \in List: h(z)<h(u) \wedge c(z)>c(u)}$ \Comment{candidates for $u.P^+$}
\State $LList^-={z \in List: h(z)<h(u) \wedge c(z)<c(u)}$ \Comment{candidates for $u.P^-$}
\State $RList^+={z \in List: h(z)>h(u) \wedge c(z)>c(u)}$ \Comment{candidates for $u.S^+$}
\State $RList^-={z \in List: h(z)>h(u) \wedge c(z)<c(u)}$ \Comment{candidates for $u.S^-$}
\State calculate $P^+_{tmp}$ out of $u.P^+$ and $LList^+$ \Comment{calculate new lists and delegate all nodes not stored in the new lists}
\State $Z=(u.P^+-LList^+) \cup ((u.P^+ \cup LList^+)-P^+_{tmp})$
\If{$u.P^+\neq P^+_{tmp}$}
\ForAll{$z \in Z$}
\State send m=(build-triangle,z) to $P^+_{tmp}[1]$
\EndFor
\State $u.P^+=P^+_{tmp}$
\EndIf
\State calculate $S^+_{tmp}$ out of $u.S^+$ and $RList^+$
\State $Z=(u.S^+-RList^+) \cup ((u.S^+ \cup RList^+ )- S^+_{tmp})$
\If {$u.S^+\neq S^+_{tmp}$}
\ForAll{$z\in Z$}
\State send m=(build-triangle,$z$) to $S^+_{tmp}[1]$
\EndFor
\State $u.S^+=S^+_{tmp}$
\EndIf
\State calculate $P^-_{tmp}$ out of $u.P^-$ and $LList^-$
\ForAll{$w\in (u.P^- \cup LList^-)-P^-_{tmp}$}
\State $v^{+}(w)=\argmin \left\{h(v):v\in P^-_{tmp} \cup \left\{pred^+_1\right\}\wedge h(v)>h(w)\right\}$
\State send m=(build-triangle,$w$) to $v^{+}(w)$
\EndFor
\State $u.P^-=P^-_{tmp}$
\State calculate $S^-_{tmp}$ out of $u.S^-$ and $RList^-$
\ForAll{$w\in (u.S^- \cup RList^-)-S^-_{tmp}$}
\State $v^{-}(w)=\argmax \left\{h(v):v\in S^-_{tmp} \cup \left\{succ^+_1\right\}\wedge h(v)< h(w)\right\}$
\State send m=(build-triangle,$w$) to $v^{-}(w)$
\EndFor
\State $u.S^-=S^-_{tmp}$
\EndFunction
\end{algorithmic}
\end{algorithm}


\begin{algorithm}
A node checks for each interval it is responsible for, if this is really the case.

\begin{algorithmic}
\Function{CheckDataIntervals}{}
\ForAll{u.DS[i]}
\State send m=(check-interval,$[a,b]=u.DS[i]$,$u$) to $u.DS[i].ref$
\EndFor
\EndFunction
\end{algorithmic}
\end{algorithm}

\begin{algorithm}
A node receiving a check-interval message, checks if the node which sent it is really responsible for the interval [a,b].

\begin{algorithmic}
\Function{CheckInterval}{[a,b],x}
\If{$x \not\in u.P^+ \cup u.S^+ \cup u.S^-$}
\State BuildTriangle(x)
\EndIf
\State IntervalSet := $\emptyset$
\State i:=1
\If{$a <h(u)$}
\State IntervalSet[i]=$[a,h(u)] \cap [a,b]$ \Comment{The interval begins left of $u$, so $u$ can't be the supervising node for the whole interval}
\State IntervalSet[i].ref=$u.P^*[1]$
\State i:=i+1
\EndIf
\If{$b>u.S^*[1]$}
\State IntervalSet[i]=$[u.S^*[1],b]\cap [a,b]$ \Comment{The interval ends right of $u.S^*[1]$, so $u$ can't be the supervising node for the whole interval}
\State IntervalSet[i].ref=$u.S^*[1]$
\State i:=i+1
\EndIf
\State $[c,d]:=I_u(x)$          \Comment{$I_u(x)$ is  the subinterval of $u.I_u$ for which $x$ is responsible for}
\State [e,f]:=$([a,b]\cap u.I_u)/I_u(x)$
\If{$e<c$}
\State IntervalSet[i]=$[e,c]\cap [a,b]$ \Comment{$u$ as the supervising node, knows other nodes responsible for parts of the interval}
\State IntervalSet[i].ref=u
\State i:=i+1
\EndIf
\If{$f>d$}
\State IntervalSet[i]=$[d,f]\cap [a,b]$ \Comment{$u$ as the supervising node, knows other nodes responsible for parts of the interval}
\State IntervalSet[i].ref=u
\EndIf
\State send m=(update-interval,IntervalSet) to $x$
\EndFunction
\end{algorithmic}
\end{algorithm}

\begin{algorithm}
By receiving an update-interval message, a node updates the lists of intervals which it is responsible for, and forwards the data in the
deleted intervals to another node, who is possibly responsible.

\begin{algorithmic}
\Function{UpdateInterval}{IntervalSet}
\ForAll {$[a,b]\in IntervalSet$}
\ForAll{$[c,d]\in u.DS$}
\ForAll {$[e,f]\in\left\{[c,d]-[a,b]\right\}$}
\State l:=$|u.DS|$
\State u.DS[l+1]=[e,f]
\State u.DS[l+1].ref=[c,d].ref \Comment{references are set to the new supervising node}
\State u.DS:=$u.DS-\left\{[c,d]\right\}$
\EndFor
\ForAll {$data \in [c,d]\cap [a,b]$}
\State send m=forward-data(data) to  $[a,b].ref$ \Comment{data $u$ seems not to be responsible for or for that the reference changed is deleted}
\State delete(data)
\State BuildTriangle([a,b].ref) \Comment{references supervising nodes are forwarded to maintain connectivity}
\EndFor
\EndFor
\EndFor
\State UpdateDS() \Comment{Delete all intervals without data, forward the references of the deleted intervals, unite all consecutive intervals with the same reference}
\EndFunction
\end{algorithmic}
\end{algorithm}

\begin{algorithm}
By receiving a forward-data message, a node checks if it knows which node is responsible for the data it received, and sends a store-data message to it, in the other case it also forwards the data.

\begin{algorithmic}
\Function{ForwardData}{data}
\If {$data.id \not\in u.I_u$}
\If {$data.id \in [u.P^*[1]],u]$}
\State send m=(forward-data(data)) to $u.P^*[1]$
\Else
\State send m=(forward-data(data) to
\State $w: (h(u)<h(w)<data.id \vee h(u)>h(w)>data.id)\wedge  |data.id-h(w)| = \min_{y \in u.P^*\cup u.S^*}\{|data.id-h(y)| \},  $
\EndIf
\Else
\State send m=(store-data,data,$I_u(v)$,u) to $v: data.id\in I_u(v)$
\EndIf
\EndFunction
\end{algorithmic}
\end{algorithm}
\begin{algorithm}
Storing the data received from the node supervising the corresponding interval.

\begin{algorithmic}
\Function{StoreData}{data,interval,x}
\If{$\exists i: interval =u.DS.i$}
\State u.DS[i].data := $u.DS[i].data \cup data.id \in u.DS[i]$
\State BuildTriangle(u.DS[i].ref)
\State u.DS[i].ref=x
\Else
\State l:=$|u.DS|$
\State u.DS[l+1]=interval
\State u.DS[l+1].ref=x
\State u.DS[l+1].data=data
\EndIf
\EndFunction
\end{algorithmic}
\end{algorithm}

%
\begin{algorithm}
\begin{algorithmic}
\State $m \in Ch_u \text{ received by u} \rightarrow$ \Comment{demand actions depending on the received message}
\If{m=(list-update,List)}
\State ListUpdate(List)
\ElsIf{m=(build-triangle,x)}
\State BuildTriangle(x)
\ElsIf{m=(check-interval,[a,b],x)}
\State CheckInterval([a,b],x)
\ElsIf{m=(update-interval,IntervalSet)}
\State UpdateInterval(IntervalSet)
\ElsIf{m=(forward-data,data,boolean)}
\State ForwardData(data,boolean)
\ElsIf{m=(store-data,data,interval,x)}
\State StoreData(data,interval,x)
\EndIf

\end{algorithmic}
\end{algorithm}
\section{Correctness}
In this section we show the correctness of the presented algorithm. We do this by showing that by executing our algorithm any weakly connected network eventually converges to a CONE network and once a CONE network is formed it is maintained in every later state. We further show that in a CONE network the data is stored correctly.

\subsection{Convergence}
To show convergence we will divide the process of convergence into several phases, such that once one phase is completed its conditions will hold in every later program state. For our analysis we additionally define $E(t)$ as the set of edges at time $t$. Analogous $E_e(t)$ and $E_i(t)$ are defined.
We show the following theorem. 

\begin{theorem}\label{theo:convergence}
If $G=(V,E) \in IT$ at time $t$ then eventually at a time $t'>t$ $G^{CONE}\subset G_e(t')$.
\end{theorem}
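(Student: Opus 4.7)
The plan is to partition convergence into monotone phases and use weak fairness plus a ``never discard identifiers'' invariant to show each phase eventually completes and is not undone. The basic invariant I would maintain throughout is that every received or stored identifier is either kept in some internal variable or delegated onward in a $build\text{-}triangle$ message; combined with FIFO channels, eventual delivery, and the weak-fairness assumption on actions, this keeps $G(t)=(V,E_e(t)\cup E_i(t))$ weakly connected at every $t$, which is essential since the algorithm has no connection oracle and the impossibility results of \cite{NNS11:corona} forbid reconnecting disjoint components.

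Phase 1 (sorted ring). I would first show that eventually, for every $u$, the immediate hash-neighbors $u.S^*[1]$ and $u.P^*[1]$ are correct. The key observation is that a delegation of an identifier $v$ from a node $w$ either inserts $v$ into one of $w$'s lists (strictly nearer in hash to $v$'s correct neighbor than before) or forwards $v$ to a node $w'$ strictly between $w$ and $v$ on the hash line. Since there are only $n$ hash positions, each identifier is delegated at most $n$ times before being stored by its correct immediate neighbor, and weak fairness plus the periodic consistency check force this to happen in finite time for every identifier, including those initially sitting in garbled lists or in flight on channels.

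Phase 2 ($succ^+_1, pred^+_1$ and the larger-capacity lists). Once the sorted ring is stable, the periodic $BuildTriangle$ introductions along $u.S^*$ and $u.P^*$ bubble the nearest larger-capacity node up to each $u$, establishing the correct $u.succ^+_1$ and $u.pred^+_1$. Then the periodic $list\text{-}update$ messages propagate $S^+$ and $P^+$: because $S^+(v)=\{succ^+_1(v)\}\cup S^+(succ^+_1(v))$, the correct lists travel recursively outward from the node of globally maximum capacity; spurious entries present in a receiver's list but not in the sender's list are stripped by the delegation step inside $ListUpdate$, so every $u.S^+, u.P^+$ converges to exactly $S^+(u), P^+(u)$. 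The corresponding build-triangle messages then populate the symmetric ``downward'' $S^-$ and $P^-$ lists at each $pred^+_1(v)$ and $succ^+_1(v)$, yielding every edge required by the definition of $E^{CONE}$ and hence $G^{CONE}\subseteq G_e(t')$ for some $t'>t$.

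The main obstacle is Phase 1: proving monotone progress for the sorted ring from an arbitrary weakly connected initial state in which lists may contain spurious or misplaced nodes, channels may carry stale identifiers, and two concurrent delegations might compete for the same neighbor slot. My intended formal tool is a potential function equal to the sum, over all identifiers stored in any variable or pending in any channel, of the hash-distance from that identifier to its correct immediate-neighbor host; every delegation strictly decreases this potential while the ``never discard'' invariant prevents identifiers from vanishing, so weak fairness drives the potential to zero. Once Phase 1's invariant holds, it is preserved (closure of the ring) because later actions only introduce correct neighbors, and Phases 2 and 3 inherit a stable ring and close by analogous but simpler potential arguments on capacity rank along the ring.
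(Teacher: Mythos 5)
Your three-phase decomposition (connectivity preservation via the ``never discard'' invariant, then the sorted list, then recursive propagation of the $S^+/P^+$ lists inward from the maximum-capacity node and the induced $S^-/P^-$ entries) matches the paper's structure almost exactly, and your Phase~2 sketch is essentially the paper's Phase~3 induction using $S^+(v)=\{succ^+_1(v)\}\cup S^+(succ^+_1(v))$ and the periodic list-update and triangle messages. The genuine gap is in your Phase~1. The potential function you propose --- the sum over all stored and in-flight copies of identifiers of a hash-distance --- is not monotone under this protocol, because the periodic actions \emph{inject} potential at every timer tick: a node with a still-garbled list $u.S^+$ repeatedly sends that entire list to its $u.P^-$ neighbors, and BuildTriangle repeatedly creates fresh build-triangle messages, each a new copy of an identifier carrying positive potential. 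Your statement ``every delegation strictly decreases this potential'' is true copy-by-copy, but the global sum can grow between delegations, and since channels have unbounded capacity the sum need not even stay finite, so weak fairness alone does not drive it to zero. A second concrete flaw: a delegated copy of $v$ that gets \emph{stored} at an intermediate node (e.g.\ in $w.S^+$ for some far-away $w$ of smaller capacity --- a perfectly valid store under Definition~\ref{def:valid}) stalls there; your claim that each identifier is delegated at most $n$ times ``before being stored by its correct immediate neighbor'' does not follow, since storage can occur far from the correct position, and further progress then depends on \emph{new} edges being generated, not on continued delegation of that copy.

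The paper avoids both problems by replacing the global potential with a witness structure: for each pair $x,y$ that are consecutive in the target list, it fixes an undirected path in the triangulation graph $G^{Tri}$ connecting them and shows its \emph{range} (the hash-span $u_{max}-u_{min}$ of the path) strictly decreases over time. The two engines of that argument are Lemma~\ref{lem:Forward} (if $x$ knows two right nodes $y,z$ with $h(x)<h(y)<h(z)$, then eventually some $v$ strictly between $x$ and $z$ knows $z$, which is where stored-but-misplaced copies make progress via the periodic introductions inside BuildTriangle and the list updates) and Lemma~\ref{lem:Mirror} (every directed edge eventually spawns a reverse edge into the interval, via the periodic self-introduction to $u.S^*[1]$), and both are needed precisely to handle the stalled-copy and missing-reverse-edge situations your argument skips over. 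To repair your proof you would have to either prove analogues of these two lemmas and run the shrinking-range induction per consecutive pair, or redesign your potential so that it is indifferent to copy injection --- e.g.\ a potential defined on the set of \emph{distinct} node-to-node acquaintance relations rather than on message/variable copies; as written, the monotonicity claim on which your whole Phase~1 rests fails on the very first timer tick of a node with an incorrect list.
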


We divide the proof into 3 phases. First we show the preservation of the connectivity of the graph, then we show the convergence to the sorted list and eventually the convergence to the \emph{CONE}-network.

\subsubsection{Phase 1: Connectivity}
In the first phase we will show that the protocol keeps the network weakly connected and eventually forms a network that is connected by edges $(x,y)\in E_e$ such that $y\in x.P^+ \cup x.S^+ \cup x.S^- \cup x.P^-$ and edges $(x,y)\in E_i$ such that $m=(build-triangle,y) \in Ch_x$.

\begin{lemma}\label{lem:GConnectivity}
Any graph which is weakly connected due to edges in  $E_e \cup E_i$ stays weakly connected according to the given protocol, i.e if $E(t)$ is weakly connected then $\forall t'>t$, $E(t')$ is also weakly connected.
\end{lemma}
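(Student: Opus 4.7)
The plan is to prove closure of weak connectivity by induction on the number of atomic actions executed, with a case analysis over the action types. Addition of an edge (explicit or implicit) can only preserve weak connectivity, so I only need to examine the actions that \emph{remove} an edge, namely: receiving a message (which consumes an implicit edge) and overwriting an internal variable that held a neighbor (which removes an explicit edge). The driving observation is a \emph{delegation principle}: whenever the protocol would remove an edge $(u,v)$, it first places $v$ in the channel $Ch_w$ of some node $w$ that $u$ currently has as an explicit neighbor. The removed direct edge $(u,v)$ is then replaced by a path $u \to w \to v$ in $E_e \cup E_i$, and since all other pairs of vertices are unaffected, weak connectivity is preserved.

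I would execute the plan by walking through each procedure. Periodic actions only send messages (list-update, build-triangle, check-interval); they add implicit edges without overwriting any variable, so connectivity can only improve. For a receipt of (build-triangle, $x$), the implicit edge $(u,x)$ is removed, but the function either integrates $x$ into one of $u$'s lists (restoring $(u,x)$ as an explicit edge, possibly after displacing the previous $u.succ^+_1$ or $u.pred^+_1$, which is itself delegated to $x$), or computes $v^{-}(x)$/$v^{+}(x)$ from $u.S^* \cup \{u.succ^+_1\}$ (resp.\ $u.P^* \cup \{u.pred^+_1\}$) and sends (build-triangle, $x$) to that node. Because the delegation target is drawn from $u$'s current explicit state at the send point, $(u, v^{\pm}(x)) \in E_e$ at that moment, and the new message places $(v^{\pm}(x), x) \in E_i$, giving the desired substitute path. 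The ListUpdate case is analogous: any id removed from $u.P^+$, $u.S^+$, $u.P^-$, or $u.S^-$ during the update is delegated either to $P^+_{tmp}[1]/S^+_{tmp}[1]$ (which, by construction of the new list, is kept in $u$'s state) or to some $v^\pm(w)$ computed from the new lists; the pseudocode builds the $\emph{tmp}$ lists \emph{before} issuing the delegations, so the target is guaranteed to be in $u$'s explicit state at send time. The data messages (check-interval, update-interval, forward-data, store-data) either insert the referenced node into $u.DS$ (creating an explicit edge via $u.DS[i].\mathit{ref}$), relay it in a further message (keeping it in some channel), or pass it through BuildTriangle, whose connectivity-preserving behavior was just established.

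The main obstacle will be the meticulous bookkeeping in the ListUpdate and BuildTriangle branches: I must verify, for every concrete code path, that the delegation target is actually present in $u$'s internal variables at the instant the delegation message is sent, so that the replacement path $u \to w \to v$ exists at that atomic step. This amounts to checking that the temporary sets $P^+_{tmp}, S^+_{tmp}, P^-_{tmp}, S^-_{tmp}$ are computed and retained until after all delegation messages have been dispatched, and that the pivots $u.succ^+_1$, $u.pred^+_1$ are never overwritten before being used as delegation targets. Once this verification is complete for each procedure, induction on the action sequence yields the lemma.
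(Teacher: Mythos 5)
Your proposal is correct and takes essentially the same route as the paper: the paper's proof likewise performs a case analysis over every way an edge $(x,y)$ can exist (in $x.P^+ \cup x.S^+$, in $x.S^- \cup x.P^-$, as a $x.DS[i].ref$ reference, or inside a list-update, build-triangle, check-interval, update-interval, or store-data message in $Ch_x$) and shows each removal is compensated by delegating $y$ to a node $w$ currently in $x$'s explicit state, yielding the replacement path $(x,w) \in E_e$, $(w,y) \in E_i$. One small slip in your write-up --- the periodic action does overwrite variables, since its consistency check delegates invalid nodes --- is harmless because those delegations fall under exactly the delegation principle you stated.
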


\begin{proof}
We show that for each existing edge $(x,y)\in E(t)$ either the edge remains and $(x,y)\in E(t+1)$ or a path connecting $x,y$ exists. Obviously $x$ and $y$ stay weakly connected as long as an edge $(x,y)$ exists. We therefore assume $(x,y) \in E(t)$ and $(x,y)\notin E(t+1)$. If $y$ is stored in an internal variable of $x$ then there can be the following cases:

\begin{itemize}
  \item $y \in x.P^+ \cup x.S^+$ at time $t$, then $y$ is delegated to $x.pred^+_1$ (resp. $x.succ^+_1$) and $x$ and $y$ stay connected over the edges $(x,x.pred^+_1) \in E_e(t+1)$ and $(x.pred^+_1,y)\in E_i(t+1)$.
  \item $y \in x.S^- \cup x.P^-$ at time $t$, then $y$ is delegated to $v^{-}(y)=\argmax \left\{h(v) : v \in x.S^* \wedge h(v)< h(y) \right\}$ (resp. $v^{+}(y)$) and $x$ and $y$ stay connected.
  \item $y \in x.DS[i].ref$ at time $t$ then x has received an interval-update message with a new reference for the data in $x.DS[i]$ or the data is deleted. Then in both cases $x$ delegates $y$ to $v^{-}(y)$ (resp. $v^{+}(y)$) and $x$ and $y$ stay connected.
\end{itemize}

If $y$ is stored in an incoming message $m$ in $Ch_x$. When $m$ is received then there can be the following cases:

\begin{itemize}
  \item $y$ is in a list in a list-update message. Then either $y$ is stored in a new list $x.P^+,x.S^+,x.S^-,x.P^-$ or $y$ is delegated to $v^{-}(y)$ (resp. $v^{+}(y)$) and $x$ and $y$ stay connected.
  \item $y$ is the node sending a check-interval message. Then either $y$ is stored in $x.P^+, x.S^+$ or delegated to $v^{-}(y)$ (resp. $v^{+}(y)$) and $x$ and $y$ stay connected.
  \item $y$ is a reference in an interval-update message, then either $y$ is stored as a new reference for some data or if there is no corresponding data $y$ is delegated to $v^{-}(y)$ (resp. $v^{+}(y)$) and $x$ and $y$ stay connected.
  \item $y$ is a reference in an store-data message, then $y$ is stored as a new reference in $x.DS$.
  \item $y$ is the id in a build-triangle message, then $y$ is either stored in one of the lists $x.P^+,x.S^+,x.S^-,x.P^-$ or $y$ is delegated to $v^{-}(y)$ (resp. $v^{+}(y)$) and $x$ and $y$ stay connected.
\end{itemize}
\end{proof}

\begin{definition}\label{def:TriangulationGraph}
Let $E^{Tri}_e=\left\{(x,y):y\in x.P^+ \cup x.S^+ \cup x.P^- \cup x.S^-\right\}$ and  $E^{Tri}_i=$ \newline $\left\{(x,y):  m=(build-triangle,y)\in Ch_x\right\} $ we then define the graph $G^{Tri}=\left(V,E^{Tri}=E^{Tri}_e \cup E^{Tri}_i\right)$ as the \emph{triangulation graph}.
\end{definition}

\begin{lemma} \label{lem:TriConnectivity}
If $x$ and $y$ are connected in $G^{Tri}$ at time $t$ then they will be weakly connected at every time $t'>t$.
\end{lemma}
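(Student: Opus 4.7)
The plan is to prove Lemma~\ref{lem:TriConnectivity} in exact analogy with Lemma~\ref{lem:GConnectivity}, but restricted to the subgraph $G^{Tri}$. It suffices to show that a single computation step preserves $G^{Tri}$-connectivity: for every edge $(x,y)\in E^{Tri}(t)$ that disappears between $t$ and $t+1$, I exhibit either the same edge or a short path in $E^{Tri}(t+1)$ joining $x$ and $y$. Induction on the step count and transitivity of reachability then gives the claim for all $t'>t$.

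The case analysis splits on whether $(x,y)$ is explicit or implicit. For $(x,y)\in E^{Tri}_e(t)$, I inspect the three places where $y$ can leave one of $x.P^+,x.S^+,x.S^-,x.P^-$: the periodic consistency check, the demand action $\textsc{BuildTriangle}(x)$ on incoming build-triangle messages, and $\textsc{ListUpdate}$. In each of these, the protocol always delegates the removed node $y$ via a build-triangle message to a specific intermediary $v$: for $y\in x.P^+\cup x.S^+$ this intermediary is $x.pred^+_1$ or $x.succ^+_1$, and for $y\in x.S^-\cup x.P^-$ it is $v^-(y)$ or $v^+(y)$, chosen from $x.S^*\cup\{x.succ^+_1\}$ or $x.P^*\cup\{x.pred^+_1\}$. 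Since $x.S^*\subseteq x.S^-\cup\{x.succ^+_1\}$ and $x.P^*\subseteq x.P^-\cup\{x.pred^+_1\}$, the intermediary $v$ lies in one of the four lists of $x$ at time $t+1$, giving $(x,v)\in E^{Tri}_e(t+1)$, while the build-triangle message carrying $y$ gives $(v,y)\in E^{Tri}_i(t+1)$. Thus the two-hop path $x\to v\to y$ lives entirely in $G^{Tri}(t+1)$.

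For $(x,y)\in E^{Tri}_i(t)$, i.e., a build-triangle message containing $y$ is consumed from $Ch_x$, the reception triggers $\textsc{BuildTriangle}(y)$. Either $y$ is inserted into one of $x$'s four lists, in which case $(x,y)\in E^{Tri}_e(t+1)$, or $y$ is forwarded to an intermediary $v^-(y)$ or $v^+(y)$ by a new build-triangle message; the latter is handled exactly as in the previous case. Combining both cases, any vanishing $G^{Tri}$-edge is replaced by a $G^{Tri}$-path of length at most two, so $G^{Tri}$-reachability is never destroyed at a single step. A straightforward induction yields that if $x$ and $y$ are connected in $G^{Tri}(t)$ they remain connected in $G^{Tri}(t')$ for every $t'>t$.

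The main obstacle is the bookkeeping of the intermediary $v$: one has to verify that at the moment of delegation, $v$ is genuinely present in the lists used to define $E^{Tri}_e(t+1)$, and not itself simultaneously being delegated away. This is resolved by the fact that the delegation rules always select $v$ from the \emph{new} temporary lists $S^-_{tmp},P^-_{tmp},P^+_{tmp},S^+_{tmp}$ (or from the updated $x.succ^+_1,x.pred^+_1$) that are stored into $x$'s variables in the same atomic action, so $v$ is in $x$'s lists at time $t+1$ by construction. The rest is routine pseudocode inspection.
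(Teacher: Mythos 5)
Your proposal is correct and follows essentially the same route as the paper's own proof: a case analysis over explicit edges (whether $y$ sits in $x.P^+\cup x.S^+$ or in $x.S^-\cup x.P^-$) and implicit edges (a pending build-triangle message), showing in each case that $y$ is either kept or delegated to an intermediary $v$ still held in $x$'s lists, so the vanishing edge is replaced by the two-hop path $(x,v)\in E^{Tri}_e(t+1)$, $(v,y)\in E^{Tri}_i(t+1)$. Your added observation that the delegation target is drawn from the freshly computed lists ($S^-_{tmp}$, etc., or the updated $x.succ^+_1/x.pred^+_1$) within the same atomic action is a point the paper leaves implicit, but it does not change the argument.
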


\begin{proof}
Again we will consider every possible edge $(x,y)$ in $G^{Tri}$ and show that $x$ and $y$ stay weakly connected.

If $(x,y) \in E^{Tri}_e(t)$ then there can be the following cases:

\begin{itemize}
  \item $y \in x.P^+ \cup x.S^+$ at time $t$, then either $y$ is delegated to $x.pred^+_1$ (resp. $x.succ^+_1$) by a build-triangle message send to $x.pred^+_1$ (resp. $x.succ^+_1$) and $x$ and $y$ are connected in $G^{Tri}(t+1)$ by $(x,x.pred^+_1) \in E^{Tri}_e(t+1)$ and $(x.pred^+_1,y) \in E^{Tri}_i(t+1)$, or $y$ is stored in $x.P^+$ and $(x,y)\in E^{Tri}_e(t+1)$.
  \item $y \in x.S^- \cup x.P^-$ at time $t$, then either $y$ is delegated to $v^{-}(y)=\argmax \{h(v) : v \in x.S^* \wedge h(v)< h(y) \}$ (resp. $v^{+}(y)$) or $y$ is also stored in $x.S^-$ (resp. $x.P^-$) at time $t+1$. By the same arguments as above $x$ and $y$ are connected in $G^{Tri}(t+1)$.
\end{itemize}

If $(x,y) \in E^{Tri}_i(t)$ then $m=(build-triangle,y)\in Ch_x$. If $x$ processes $m$, then either $x$ is stored in $x.P^+, x.S^+, x.S^-,x.P^-$ and $(x,y)\in E^{Tri}_e(t+1)$ or $y$ is delegated to $v^{-}(y)=\argmax \{h(v): v \in x.S^* \wedge h(v)<h(y) \}$ (resp. $v^{+}(y)$) and $x$ and $y$ are connected in $G^{Tri}(t+1)$.
\end{proof}

\begin{lemma}\label{lem:TriConnectivity2}
If $G$ is weakly connected then eventually $G^{Tri}$ will be weakly connected.
\end{lemma}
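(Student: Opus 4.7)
My plan is to combine three ingredients already in place: Lemma \ref{lem:GConnectivity} keeps $G$ weakly connected for all future times, Lemma \ref{lem:TriConnectivity} guarantees that once $G^{Tri}$-connectivity between two nodes is established it persists, and weak fairness guarantees that every message is eventually received and every periodic action eventually fires. Given these, it suffices to show that every edge $(x,y)$ in $G \setminus G^{Tri}$ is eventually replaced, or shadowed, by a short $G^{Tri}$-path between $x$ and $y$; combined with Lemma \ref{lem:GConnectivity}, this yields weak connectivity of $G^{Tri}$ at some later time.

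The first concrete step is to classify the non-$G^{Tri}$ edges. Inspecting Definition \ref{def:TriangulationGraph} together with the internal variables of each node, the explicit edges in $E_e \setminus E^{Tri}_e$ arise only from entries of the form $y = x.DS[i].ref$, while the implicit edges in $E_i \setminus E^{Tri}_i$ arise only from node ids carried by list-update, check-interval, update-interval, forward-data, or store-data messages. The argument then splits into an implicit case and an explicit case.

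For the implicit case I would essentially reuse the case analysis from Lemma \ref{lem:TriConnectivity}: by fairness each such message is eventually processed, and inspection of the pseudocode shows that every node id $y$ transported by a non-build-triangle message is, upon processing, either stored in one of the receiver's lists $P^+, S^+, P^-, S^-$ (yielding $(x,y) \in E^{Tri}_e$) or delegated via a build-triangle message to some $w \in x.S^* \cup x.P^*$ (yielding $(w,y) \in E^{Tri}_i$); in the delegation subcase the chosen intermediate $w$ already lies in $x.P^+ \cup x.S^+ \cup x.P^- \cup x.S^-$, so $x$, $w$, $y$ form a length-two path in $G^{Tri}$. For the explicit case $y = x.DS[i].ref$, the periodic CheckDataIntervals action at $x$ fires by weak fairness on $\tau$ and sends a check-interval message carrying $x$ to $y$; applying the implicit case to this message, now from the receiver side, yields a $G^{Tri}$-path between $y$ and $x$.

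The step I expect to be the most delicate is handling the fact that processing update-interval and store-data messages can create fresh DS references, i.e. new edges in $E_e \setminus E^{Tri}_e$, so that the cascade of conversions does not terminate in one pass. The resolution is that every new DS reference is itself monitored by the same periodic check-interval mechanism, so the argument applies recursively, and by weak fairness no new introduction can indefinitely postpone the eventual processing of the corresponding check-interval message. With this recursion in hand, every $G$-edge present at time $t$ is eventually shadowed by a $G^{Tri}$-path, so combined with Lemma \ref{lem:GConnectivity} the triangulation graph $G^{Tri}$ becomes weakly connected at some time $t' > t$, and stays so forever by Lemma \ref{lem:TriConnectivity}.
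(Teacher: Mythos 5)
Your high-level strategy coincides with the paper's own proof: restrict attention to edges in $E - E^{Tri}$ (justified by Lemma~\ref{lem:TriConnectivity}), classify them into explicit edges arising from $x.DS[i].ref$ and implicit edges arising from node ids carried in non-build-triangle messages, and use fairness plus the ``stored or delegated'' dichotomy to convert each into a $G^{Tri}$-connection; your reduction of store-data/update-interval references back to the DS-reference case is exactly the paper's ``as already shown'' step. Your implicit case is sound, and you even make explicit something the paper leaves tacit, namely that the delegation target $w$ lies in $x.S^* \cup x.P^*$ and hence in $x$'s lists, so $x$, $w$, $y$ form a length-two path in $G^{Tri}$. (A cosmetic point: forward-data messages carry only a datum, not a node identifier, so that case in your classification is vacuous.)

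There is, however, one concrete missing subcase in your explicit case. You argue that for $y = x.DS[i].ref$ the periodic $CheckDataIntervals$ at $x$ eventually fires and sends a check-interval message carrying $x$ to $y$. This is only guaranteed if $y$ is \emph{still} the stored reference when $\tau$ next fires: processing an update-interval or store-data message at $x$ in the meantime can delete the data or overwrite $x.DS[i].ref$, in which case no check-interval message addressed to $y$ is ever sent. Your ``delicate step'' paragraph addresses the wrong side of this event --- it certifies the freshly created reference (a new non-$G^{Tri}$ edge, monitored recursively), but not the vanished old edge $(x,y)$, which is precisely the edge whose shadowing you must establish. The paper's proof closes this by observing that the protocol never silently drops a reference: when the reference to $y$ is replaced or its data deleted, $x$ \emph{delegates} $y$ (StoreData calls BuildTriangle on the old reference before overwriting it, and UpdateDS forwards the references of deleted intervals), which immediately places a build-triangle message carrying $y$ into a channel of a node in $x.S^* \cup x.P^*$ and hence creates an edge of $E^{Tri}_i$ on a path from $x$ to $y$. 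Adding this dichotomy --- either the reference survives until the periodic check (your argument), or it is delegated at the moment it is removed (direct $G^{Tri}$ edge) --- repairs the proof and, incidentally, makes your recursion terminate after one level rather than needing an open-ended fairness argument.
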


\begin{proof}
Again we consider every edge $(x,y)$ in $G$ and show that eventually $x$ and $y$ will be connected in $G^{Tri}$. Note that we already showed in Lemma \ref{lem:TriConnectivity}, that nodes connected in $G^{Tri}$ stay connected in $G^{Tri}$. Therefore we only have to consider those edges in $E-E^{Tri}$.

If $(x,y)\in E_e(t) - E^{Tri}_e(t)$ there can be the following case:

$y \in x.DS[i].ref$ at time $t$ and $x$ has received an interval-update message with a new reference for the data in $x.DS[i]$ or the data is deleted. Then in both cases $x$ delegates $y$ to $v^{-}(y)$ (resp. $v^{+}(y)$) and $x$ and $y$ are connected in $G^{Tri}$. If $x$ does not delegate $y$, then $x$ eventually sends an check-interval message to $y$. Then either $x \in y.P^+ \cup y.S^+$ or $y$ delegates $x$ and $x$ and $y$ are weakly connected in $G^{Tri}$. If $y$ is stored in an incoming message in $Ch_x$ then there can be the following cases:

\begin{itemize}
  \item $y$ is in a list in a list-update message. Then either $y$ is stored in a new list $x.P^+,x.S^+,x.S^-,x.P^-$ or $y$ is delegated to $v^{-}(y)$ (resp. $v^{+}(y)$) and $x$ and $y$ are weakly connected in $G^{Tri}$.
  \item $y$ is a reference in an interval-update message, then either $y$ is stored as a new reference for some data or if there is no corresponding data $y$ is delegated to $v^{-}(y)$ (resp. $v^{+}(y)$) and $x$ and $y$ eventually are weakly connected in $G^{Tri}$.
  \item $y$ is a reference in an store-data message, then $y$ is stored as a new reference in $x.DS$. And as already shown $x$ and $y$ are eventually weakly connected.
\end{itemize}
\end{proof}

Combining the Lemmas ~\ref{lem:GConnectivity}, ~\ref{lem:TriConnectivity} and ~\ref{lem:TriConnectivity2} leads to the following theorem:
\begin{theorem}\label{theo:phase1}
If $G$ is weakly connected at time $t$, then for some time $t'>t$ $G^{Tri}$ will be weakly connected at every time $t''>t'$.
\end{theorem}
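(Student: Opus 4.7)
The plan is to derive Theorem \ref{theo:phase1} directly by chaining the three preceding lemmas, with Lemma \ref{lem:GConnectivity} serving as the background invariant and Lemmas \ref{lem:TriConnectivity} and \ref{lem:TriConnectivity2} providing the two halves of the argument (``eventually reach connectivity'' and ``never lose it once reached'').

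First I would invoke Lemma \ref{lem:GConnectivity}: since $G(t)$ is weakly connected, $G(t'')$ is weakly connected for every $t'' \geq t$. This guarantees the hypothesis of Lemma \ref{lem:TriConnectivity2} continues to hold at every later time, so we may apply that lemma to obtain some time $t' > t$ at which $G^{Tri}(t')$ is weakly connected. Implicitly this uses the weak fairness assumption of the computational model, since the proof of Lemma \ref{lem:TriConnectivity2} relies on pending \textsc{check-interval}, \textsc{list-update}, \textsc{interval-update} and \textsc{store-data} messages eventually being processed so that their payload nodes migrate from $E \setminus E^{Tri}$ into $E^{Tri}$.

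Next, to lift weak connectivity at the single time $t'$ to \emph{every} later time $t'' > t'$, I would apply Lemma \ref{lem:TriConnectivity} pairwise. Concretely, for any two nodes $x, y \in V$, since $G^{Tri}(t')$ is weakly connected there is an undirected path $x = z_0, z_1, \ldots, z_k = y$ in $G^{Tri}(t')$. Each edge $(z_{j-1}, z_j)$ (or its reverse) lies in $G^{Tri}(t')$, so by Lemma \ref{lem:TriConnectivity} the endpoints $z_{j-1}$ and $z_j$ remain weakly connected in $G^{Tri}(t'')$ for every $t'' > t'$. Concatenating the resulting $k$ sub-paths produces a walk from $x$ to $y$ in the undirected version of $G^{Tri}(t'')$. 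Since this holds for every pair $(x,y)$, the graph $G^{Tri}(t'')$ is weakly connected.

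The proof is essentially a bookkeeping combination, so I do not expect any genuine obstacle; the only care needed is the order of quantifiers. Lemma \ref{lem:TriConnectivity2} only guarantees \emph{existence} of a moment of connectivity, and on its own would allow $G^{Tri}$ to oscillate in and out of a connected state. The monotonicity supplied by Lemma \ref{lem:TriConnectivity}, applied to each pair of nodes separately rather than to the global graph at once, is what rules this out and converts the one-shot statement into the ``every $t'' > t'$'' conclusion demanded by the theorem.
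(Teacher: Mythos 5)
Your proposal is correct and follows essentially the same route as the paper, which proves Theorem \ref{theo:phase1} precisely by combining Lemmas \ref{lem:GConnectivity}, \ref{lem:TriConnectivity} and \ref{lem:TriConnectivity2} (the paper states the combination without spelling out the details). Your pairwise application of Lemma \ref{lem:TriConnectivity} along a path in $G^{Tri}(t')$, turning the one-shot connectivity from Lemma \ref{lem:TriConnectivity2} into persistence for all $t''>t'$, is exactly the bookkeeping the paper leaves implicit.
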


\subsubsection{Phase 2: Linearization}
For the rest of the analysis we assume that all variables of each node are valid according to definition ~\ref{def:valid}, i.e. we assume that each node has performed one consistency check. In this phase we show that eventually all nodes form a sorted list. We therefore define another subtopology

$G^{List}=(V,E^{List}=E^{List}_e \cup E^{List}_i)$ with $E^{List}_e=\left\{(x,y)\in E_e:(y=x.P^*[1] \vee y=x.S^*[1])\right\}$ and $E^{List}_i=\left\{(x,y)\in E_i:m=(build-triangle,y)\in Ch_x\right\}$.

In the end $G^{List*}_e$ with $E^{List*}_e=\{(x,y)\in E_e:(y=x.P^*[1]\- \vee y=x.S^*[1]) \wedge x.P^*[1]= \argmax_{ v \in V} \left\{h(v) : h(v)<h(x) \right\}
\wedge x.S^*[1]=\-\argmin_{ v \in V} \left\{h(v) : h(v)>h(x)\right\}\}$ shall be formed.

\begin{theorem}\label{theo:phase2}
If $G^{Tri}$ is weakly connected eventually $G^{List}_e$ will be strongly connected and $G^{List}_e=G^{List*}_e$.
\end{theorem}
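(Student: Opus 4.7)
The plan is to show that the combination of periodic delegation and the triangulation performed in \textsc{BuildTriangle} drives every node's $P^*[1]$ and $S^*[1]$ to its true immediate predecessor/successor in the hash order, and that once this happens the property is preserved. Since Lemma~\ref{lem:TriConnectivity} tells us that connectivity in $G^{Tri}$ is maintained, and Phase~1 has already guaranteed that $G^{Tri}$ stays weakly connected, we may work throughout under the assumption that at every time any two nodes are connected by a path of explicit $P^{\pm}/S^{\pm}$ edges and in-channel build-triangle messages.

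First I would prove an \emph{edge shortening lemma}: for every node $u$ and every $v\in u.S^+\cup u.S^-$ at time $t$, if there exists a node $w$ with $h(u)<h(w)<h(v)$, then at some later time either $u.S^*[1]$ points strictly closer to $u$ than $h(v)$, or $w$ itself becomes known to $u$ (via a build-triangle message) and the stored value is updated accordingly. The key mechanism is that a \textsc{BuildTriangle}$(x)$ action on any node $y$ with $h(u)<h(y)<h(x)$ and with $u$ already in $y.P^*$ causes $y$ to delegate $u$ toward $x$ via $v^{+}(u)$, and symmetrically $u$ learns about $y$ whenever it receives any message carrying $y$'s identifier. Combined with the periodic introductions $y$ sends to its own $P^*[1]/S^*[1]$ and to consecutive pairs in $P^-$/$S^-$, this guarantees information about intermediate nodes must eventually propagate.

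Next I would define the potential function
\[
\Phi(t)=\sum_{u\in V}\bigl(\mathrm{rank}_u(h(u.S^*[1]))-\mathrm{rank}_u(h(u))-1\bigr)+\sum_{u\in V}\bigl(\mathrm{rank}_u(h(u))-\mathrm{rank}_u(h(u.P^*[1]))-1\bigr),
\]
where $\mathrm{rank}_u(\cdot)$ is the rank in the global sorted order of hash values. $\Phi\ge 0$, and $\Phi=0$ exactly when $G^{List}_e=G^{List*}_e$. Using the shortening lemma together with weak fairness of enabled actions, I would argue that whenever $\Phi>0$ at least one node holds a non-tight $P^*[1]$ or $S^*[1]$, and the corresponding delegation/triangulation action strictly decreases $\Phi$ in finitely many steps. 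Since $\Phi$ is integer-valued and bounded, this yields convergence to a state in which every $u.S^*[1]$ is the true successor and every $u.P^*[1]$ the true predecessor. Strong connectivity of $G^{List}_e$ then follows because the doubly-linked list through $P^*[1]$ and $S^*[1]$ visits every node exactly once.

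The main obstacle I expect is justifying the shortening step rigorously: one must exclude the case in which information about the intermediate node $w$ is forever trapped at nodes that never communicate directly with $u$. To close this, I would chase a shortest path in $G^{Tri}$ from $u$ to $w$ and argue, edge by edge, that the forwarding rules (periodic \textsc{BuildTriangle}, \textsc{ListUpdate}, and the handling of incoming build-triangle messages) either consume one link of the path or strictly reduce the hash-rank distance, so after finitely many fairness-guaranteed action executions the path collapses to a direct edge from some node in the $u$-to-$w$ chain. Once this propagation lemma is established, closure is immediate: in a state with $G^{List}_e=G^{List*}_e$ no delegation action fires (each consistency check succeeds, every received build-triangle identifier is either already stored or lies strictly beyond a correct $P^*[1]/S^*[1]$ and is forwarded without modifying the first pointer), so $\Phi$ remains $0$ forever.
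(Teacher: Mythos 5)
Your high-level architecture matches the paper's: a local ``shortening'' lemma plus a global monotone measure driving all first pointers to their true predecessor/successor, with closure following from monotonicity. The potential $\Phi$ is a reasonable reformulation of the paper's endgame (the pointers $u.S^*[1]$, $u.P^*[1]$ indeed only ever move strictly closer under this protocol, since a stored closest neighbor is only ever displaced by a learned node strictly between). The closure remark is also fine. But the proof has a genuine gap exactly where you flag it: the propagation step is the entire technical content of the paper's proof, and your sketch for it does not work as stated. First, ``consume one link of the path'' is not a sound measure: delegation never consumes links --- when $x$ delegates $z$ it replaces the edge $(x,z)$ by the pair $(x,v)$, $(v,z)$, so path \emph{length} can grow without bound, and ``hash-rank distance'' along a shortest path is not well defined as a decreasing quantity either. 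The paper's decreasing measure is the \emph{range} of an undirected connecting path, $range(p)=u_{max}-u_{min}$ (the span between the extremal nodes on the path), which is insensitive to length growth and strictly shrinks when the extremal node is bypassed.

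Second, your edge-by-edge chase ignores directionality. Edges in $G^{Tri}$ are directed (who holds whose identifier), and delegation moves an identifier only toward its position; on an undirected path the edge incident to the extremal node $u_{min}$ may point \emph{into} $u_{min}$ (i.e., $(w,u_{min})\in E^{Tri}$ with $h(u_{min})<h(w)$), and no delegation fired by $w$ or by $u_{min}$ on that edge ever informs $u_{min}$ of a node above it. The paper needs a dedicated reversal mechanism, Lemma~\ref{lem:Mirror}, which exploits the periodic self-introduction of each node to its current $S^*[1]$/$P^*[1]$ to manufacture an edge in the opposite direction, and combines it with Lemma~\ref{lem:Forward} (edges over a known intermediate node get re-routed through it) and Lemma~\ref{lem:RConnectivity} (consecutive members of $x.S^*$ stay connected through intermediate nodes) to raise $u_{min}$ to $u_{min}.S^*[1]$ and symmetrically lower $u_{max}$. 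Without an analogue of the mirror lemma, your shortening lemma fails precisely in the case you worried about --- information about $w$ trapped behind wrongly-oriented edges --- so the decrease of $\Phi$ is not established. In short: correct skeleton, but the load-bearing lemma is missing, and the two measures you propose in its place (path length, rank distance) provably do not decrease under the protocol's delegation rule.
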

 
Before we can show the theorem we show some helping lemmas.

\begin{lemma}\label{lem:RConnectivity}
Eventually all nodes $x.S^*[i]$ and $x.S^*[i+1]$ (resp. $x.P^*[i]$ and $x.P^*[i+1]$) with $i<|x.S^*|$ will be connected and stay connected in every state after over nodes $w: h(x.S^*[i])<h(w)<h(x.S^*[i+1])$.
\end{lemma}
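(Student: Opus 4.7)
The plan is to show that periodic build-triangle introductions, combined with the monotone nature of delegation, eventually create and thereafter preserve a path of edges in $G^{Tri}$ between each pair $x.S^*[i]$, $x.S^*[i+1]$ whose intermediate nodes all lie in the hash-value range $(h(x.S^*[i]),h(x.S^*[i+1]))$; the argument for $x.P^*$ is symmetric.

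First I would recall that whenever $x$ executes the periodic BuildTriangle() with empty argument, for each $w \in x.S^-$ it sends a message $m=(build\text{-}triangle,w)$ to $v^-(w) = \argmax\{h(v): v \in x.S^* \wedge h(v) < h(w)\}$. By construction this exactly introduces each consecutive pair $x.S^*[i], x.S^*[i+1]$ to one another, and by periodicity of $\tau$ and weak fairness such a message is generated and processed infinitely often. Hence at some moment $t$ an implicit edge from $a:=x.S^*[i]$ to $b:=x.S^*[i+1]$ (or conversely) exists in $G^{Tri}$.

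Next I would analyze what $a$ does upon receiving this message. Inspecting the BuildTriangle($b$) procedure with $h(b)>h(a)$, one of two things happens: either $b$ lands in one of $a$'s lists $a.S^+, a.S^-$ or becomes $a.succ^+_1$, producing an explicit edge $(a,b)$; or $b$ is delegated by a new build-triangle message to some node $a' \in a.S^-_{tmp} \cup \{a.succ^+_1\}$ with $h(a) < h(a') < h(b)$. In the latter case $a$ retains an explicit edge to $a'$ (because $a'$ sits in one of its lists at that instant) and has created a fresh implicit edge $(a',b)$. I would then induct on $|\{v \in V : h(a) < h(v) < h(b)\}|$: in the base case this set is empty and delegation has nowhere to go except to $b$ itself, producing the explicit edge directly; in the inductive step, $a'$ receives the build-triangle message about $b$ with strictly fewer intermediate nodes in $(h(a'),h(b))$, so by hypothesis $a'$ and $b$ become connected in $G^{Tri}$ through nodes whose hash values all lie in $(h(a'),h(b)) \subset (h(a),h(b))$. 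Concatenating with the edge $(a,a')$ yields the desired path.

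For persistence I would invoke Lemma~\ref{lem:TriConnectivity}, augmented with the observation that every delegation step in the protocol sends an identifier $y$ only to a node whose hash value lies strictly between the delegating node's own hash value and $h(y)$. Thus if any explicit edge $(u,v)$ on the established path is later removed, $v$ is re-delegated to a node $u'$ with $h(u) < h(u') < h(v)$, keeping $u'$ within $(h(x.S^*[i]), h(x.S^*[i+1]))$; the intermediate-node invariant is therefore preserved across all subsequent states. The main obstacle will be formulating this invariant precisely enough that the inductive step goes through uniformly over mixed chains of explicit and implicit edges, since intermediate build-triangle messages may be re-delegated at any later time; the key is that every rewrite happens within a strictly smaller hash-value interval, which blocks any drift outside $(h(x.S^*[i]), h(x.S^*[i+1]))$ and guarantees termination after finitely many rewrites.
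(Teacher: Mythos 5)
Your proposal is correct and follows essentially the same route as the paper's proof: both rest on the periodic BuildTriangle introduction of consecutive pairs in $x.S^*$ together with the invariant that any delegation of an identifier $y$ lands at a node with hash value strictly between the delegating node and $h(y)$, applied inductively to keep the connecting path inside $(h(x.S^*[i]), h(x.S^*[i+1]))$. Your version merely makes explicit what the paper compresses into ``by using this argument inductively,'' namely the well-founded induction on the number of nodes strictly inside the interval and the persistence of the invariant under later re-delegations.
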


\begin{proof}
In the periodic action $x$ executes $Build-Triangle()$, in which $x$ introduces every pair of nodes $x.S^*[i],x.S^*[i+1]$ to each other. The connecting path only changes if w.l.o.g. $x.S^*[i]$ delegates $x.S^*[i+1]$, but then $x.S^*[i]$ can delegate $x.S^*[i+1]$ only to a node $v$ with $h(x.S^*[i])<h(v)<h(x.S^*[i+1])$. By using this argument inductively $x.S^*[i]$ and $x.S^*[i+1]$ stay connected in every state after over nodes $h(x.S^*[i])<h(w)<h(x.S^*[i+1])$.
\end{proof}

\begin{lemma}\label{lem:Forward}
If $(x,y) \in E^{Tri}$ and $(x,z) \in E^{Tri}$ and $h(x)<h(y)<h(z)$ (resp. $h(x)>h(y)>h(z)$) then eventually $(v,z)\in E^{Tri}$ with $h(x)<h(v)<h(z)$ (resp. $h(x)>h(v)>h(z)$) and $x$ and $z$ are connected over nodes $w : h(x)<h(w)\leq h(v)$.
\end{lemma}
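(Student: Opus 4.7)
The plan is a case analysis on how the edges $(x,y)$ and $(x,z)$ are realised in $G^{Tri}$ (explicit versus implicit) together with tracking what happens to $z$ through the next few protocol steps. By the symmetry stated in the lemma I treat only the case $h(x)<h(y)<h(z)$; the other case is identical with $S^*$ and $v^-$ replaced by $P^*$ and $v^+$.

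First I would handle the implicit case. If $(x,z)\in E^{Tri}_i$, there is a message $m=(\text{build-triangle},z)$ sitting in $Ch_x$; by eventual delivery it is processed in finitely many steps. The demand branch of BuildTriangle either installs $z$ into one of $x$'s lists (which reduces to the explicit case below) or forwards a build-triangle$(z)$ message to $v'=\argmax\{h(v):v\in x.S^*\wedge h(v)<h(z)\}$. Provided $x.S^*$ already contains a node strictly to the right of $x$, this target satisfies $h(x)<h(v')<h(z)$ and gives $(v',z)\in E^{Tri}_i$. The existence of such an intermediate node will be obtained by applying the same reasoning to the edge $(x,y)$: $y$ is either already in $x.S^*$ (if $y\in x.S^-$ or $y=x.\text{succ}^+_1$) or forces one of $x$'s other lists to contain a node in $(h(x),h(z))$ whose subsequent delegation lands in $x.S^*$.

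For the explicit case $z\in x.S^+\cup x.S^-$, I would split on the list that holds $z$. If $z\in x.S^-$ or $z=x.\text{succ}^+_1$, then in the periodic BuildTriangle action $x$ sends build-triangle$(z)$ to $v^-(z)$, so (using the intermediate node from the $y$-side) we obtain an implicit edge $(v^-(z),z)\in E^{Tri}_i$ with $h(x)<h(v^-(z))<h(z)$. If $z\in x.S^+\setminus\{x.\text{succ}^+_1\}$, I would use the periodic list-update: $x$ sends $x.S^+\cup\{x\}$ (which contains $z$) to every $w\in x.P^-$, and then by applying the corresponding rules at each such $w$ (either storing $z$ or delegating it, and re-emitting via its own list-updates and build-triangles) the knowledge of $z$ is pushed back into the interval $(h(x),h(z))$ through the symmetric $P^-/P^+$ and $S^+/S^-$ exchanges, eventually placing $z$ in the list or channel of some node in $x.S^-$.

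For the connectivity claim I would observe that every individual delegation in the chain above sends a message from a node to another whose hash lies strictly between the sender and the payload (by construction of $v^-$, $v^+$, and the right branch of BuildTriangle). Composing these one-hop moves gives a path from $x$ to the eventual bearer $v$ of $z$ whose interior vertices all have hash in $(h(x),h(v)]$, and Lemma~\ref{lem:TriConnectivity} guarantees that this path persists in every later state. The main obstacle I expect is the subcase $z\in x.S^+\setminus\{x.\text{succ}^+_1\}$: no periodic action of $x$ transmits $z$ directly to a node to its right with hash below $h(z)$, so the argument must trace the round trip of $z$ through the list-update exchanges with $x.P^-$ and verify carefully that it terminates with $z$ landing on a node strictly between $x$ and $z$, rather than being bounced out past $z$ or propagating indefinitely along a chain of delegations.
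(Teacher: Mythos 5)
Your case analysis follows the paper's proof for most branches: the immediate-delegation case, the implicit case (a pending $(\text{build-triangle},z)$ message that is eventually processed and either stored or delegated), and the case $z\in x.S^-$ or $z=x.succ^+_1$, where --- exactly as in the paper --- you exploit the second edge $(x,y)$ to argue that $x.S^*$ eventually contains a node strictly inside $(h(x),h(z))$, so that the periodic BuildTriangle introduction to $v^-(z)$ lands strictly between $x$ and $z$. Your connectivity bookkeeping (composing one-hop delegations, each of which stays inside the interval) is also the paper's argument.

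However, in the one subcase you yourself flag as the main obstacle, $z\in x.S^+\setminus\{x.succ^+_1\}$, you pick the wrong mechanism and the argument genuinely does not close. You propose tracing $z$ through $x$'s \emph{outgoing} list-updates to $x.P^-$; but those recipients lie to the \emph{left} of $x$, so $z$'s identifier first moves away from the target interval, and you would then have to control an unbounded round trip of delegations --- which, as you concede, you cannot. The paper instead uses the \emph{incoming} list-update from $v=x.succ^+_1$: since $x\in v.P^-$, $v$ periodically sends $v.S^+\cup\{v\}$ to $x$, and when $x$ processes this message there are only two outcomes, both immediate. If $z\in v.S^+$, the explicit edge $(v,z)$ already exists in $E^{Tri}$. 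If $z\notin v.S^+$, the ListUpdate rule delegates every entry of $x.S^+$ that is missing from the received list to $S^+_{tmp}[1]=v$, i.e.\ $x$ sends $m=(\text{build-triangle},z)$ to $v$, creating the implicit edge $(v,z)$. In either outcome $(v,z)\in E^{Tri}$ with $h(x)<h(v)<h(z)$, because $v=x.succ^+_1$ is the leftmost element of $x.S^+$. So the missing idea is precisely the reciprocal $S^+$-list exchange with the first larger successor, combined with the rule that stale $x.S^+$ entries are delegated to $x.succ^+_1$; with that observation your remaining cases assemble into the paper's proof, and no round trip through $x.P^-$ is needed at all.
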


\begin{proof}
If $x$ delegates $z$ to a node $v$ then obviously $h(x)<h(v)<h(z)$ and $x$ and $z$ are connected over $w=v$.
In case $z$ is not delegated $z$ is stored in $x.S^+$ or $x.S^-$ (resp. $x.P^+$ or $x.P^-$) or in a message $m=(build-triangle,z) \in Ch_x$. If $z$ is stored in $x.S^-$ and  $(x,y) \in E^{Tri}$ and $h(x)<h(y)<h(z)$ then either $y \in x.S^-$ or $m'=(build-triangle,y) \in Ch_x$. Eventually $m'$ is processed by $x$ and either $y$ is delegated, then there is another node $v'\in x.S^- : h(x)<h(v')<h(y)<h(z)$ or $y$ is stored in $x.S^-$. Thus eventually $z \in x.S^-$ and another node $v'\in x.S^-$ such that $h(x)<h(v')<h(z)$. From all such nodes $v'\in x.S^-$ such that $h(x)<h(v')<h(z)$ $x$ introduces $z$ to $v=\argmax \left\{h(w): w \in x.S^- \wedge h(w)<h(z) \right\}$ and $(v,z)\in E^{Tri}$ with $h(x)<h(v)<h(z)$ and by the same arguments as above $x$ and $z$ stay connected over nodes $h(x)<h(w)\leq h(v)$.
If $z \in x.S^+$ and $z= x.succ^+_1$ the same analysis as for $z\in x.S^-$ can be applied. If $z \in x.S^+$ and $z\neq x.succ^+_1$ eventually $x$ will receive the $v.S^+$ list of $v=x.succ^+_1$. If $z \in v.S^+$ then $(v,z)\in E^{Tri}$ with $h(x)<h(v)<h(z)$ and by the same arguments as above $x$ and $z$ stay connected over nodes $h(x)<h(w)<h(z)$. Otherwise $x$ sends a message $m=(build-triangle,z)$ to $v$ and again $(v,z)\in E^{Tri}$ with $h(x)<h(v)<h(z)$ and by the same arguments as above $x$ and $z$ stay connected over nodes $h(x)<h(w)<h(v)$.
If $m=(build-triangle,z) \in Ch_x$, then eventually $x$ processes $m$ and either stores $z$ in $x.S^+$ or $x.S^-$ and we can apply one of the cases above or $z$ is delegated.
\end{proof}

\begin{lemma}\label{lem:Mirror}
If $(x,z) \in E^{Tri}$ with $h(x)<h(z)$ then eventually $(z,y)\in E^{Tri}$ with $h(x)<h(y)<h(z)$ and $x$ and $y$ are connected over nodes $w: h(x)<h(w)<h(y)$.
\end{lemma}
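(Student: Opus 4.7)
The plan is to exploit the symmetric introduction mechanism of the periodic $BuildTriangle()$ routine at $x$, which mirrors the forward delegation used in Lemma \ref{lem:Forward}. The key observation is that for every $w \in x.S^-$, $BuildTriangle()$ computes $v^{-}(w) = \argmax\{h(v) : v \in x.S^* \wedge h(v) < h(w)\}$ and sends $(build\text{-}triangle, v^{-}(w))$ to $w$; this is exactly a back-edge from $w$ to a node in $x.S^*$ lying strictly between $h(x)$ and $h(w)$. The proof then amounts to showing that the hypothesis $(x,z) \in E^{Tri}$ eventually puts $z$ in a position where this dispatch applies to $z$.

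First I would reduce to the case where $(x,z) \in E^{Tri}_e$ with $z$ stored in one of $x$'s right-lists. If $(x,z) \in E^{Tri}_i$, the pending $build\text{-}triangle$ message is eventually processed (weak fairness); either $z$ is stored directly in $x.S^+ \cup x.S^-$ (reducing to the explicit case), or $z$ is delegated to some $v$ with $h(x) < h(v) < h(z)$. By Lemma \ref{lem:TriConnectivity} the new edge $(v,z) \in E^{Tri}$ is preserved, and I recurse on $(v,z)$; finiteness of $V$ guarantees termination, and the chain of delegation targets automatically furnishes the required intermediate-hash path connecting $x$ to the eventual ancestor.

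The main case is $z \in x.S^-$. On the next tick of $\tau$, $x$ executes $BuildTriangle()$ and dispatches $(build\text{-}triangle, v^{-}(z))$ to $z$. Setting $y := v^{-}(z)$, we obtain $(z,y) \in E^{Tri}_i$, and $h(x) < h(y) < h(z)$ follows from the definition of $v^{-}$ together with $v^{-}(z) \in x.S^*$. Since $y \in x.S^* \subseteq x.S^- \cup \{x.succ^+_1\}$, the edge $(x,y) \in E^{Tri}_e$ is direct, so the required path from $x$ to $y$ through nodes with hash in $(h(x), h(y))$ is the empty path and the connectivity condition holds vacuously. The remaining cases $z = x.succ^+_1$ and $z \in x.S^+ \setminus \{x.succ^+_1\}$ are analogous: the former uses the explicit introduction of $x.pred^+_1$ to $x.succ^+_1$ together with the consecutive-pair introductions within $x.S^*$ to provide the back-edge, and the latter uses that the periodic list-update messages carry $x.S^+$ down the chain of larger successors until $z$ ends up in the $S^-$ list of some closer node $v'$, where the main case applies to the pair $(v',z)$.

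The main obstacle is the degenerate subcase $v^{-}(z) = \emptyset$, i.e.\ $z$ is currently the leftmost element of $x.S^*$ even though by hypothesis some $y^* \in V$ has $h(x) < h(y^*) < h(z)$. Here I would appeal to the weak connectivity of $G^{Tri}$ established in Theorem \ref{theo:phase1}: $y^*$ is reachable from $x$ in $G^{Tri}$, and following that connecting path, repeated $build\text{-}triangle$ delegations eventually introduce some node with hash strictly between $h(x)$ and $h(z)$ into $x.S^*$, reducing the problem to the main case. Pinning this down formally, without circularly invoking the very linearization that Mirror is a building block towards, is the delicate step of the argument.
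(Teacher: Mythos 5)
Your non-degenerate main case (using the pairwise introduction $m'=(build\text{-}triangle,v^{-}(z))$ sent to $z\in x.S^-$) is fine, and your reduction via delegation and recursion mirrors the paper's iterated use of Lemma~\ref{lem:Forward}. But there is a genuine gap exactly at the case you flag as delicate, and that case is not peripheral: every branch of your reduction funnels into it. After all delegations, the generic terminal situation is that $z$ is the \emph{leftmost} known right-neighbor of the current edge holder $v'$, i.e.\ $z=v'.S^*[1]$, where $v^{-}(z)$ is undefined. Your fallback --- appealing to weak connectivity of $G^{Tri}$ to argue that some node with hash strictly between $h(x)$ and $h(z)$ eventually enters $x.S^*$ --- fails for two reasons. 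First, no such node need exist in $V$ at all: the lemma does not presuppose one, and the paper's own proof in fact allows the mirrored edge to point back to $x$ itself (take $v'=x$ with zero applications of Lemma~\ref{lem:Forward}; the strict inequality $h(x)<h(y)$ in the statement is a slight inaccuracy --- what the proof delivers, and what the proof of Theorem~\ref{theo:phase2} uses, is $h(x)\le h(y)<h(z)$). Second, even when intermediate nodes do exist, nothing short of the full linearization argument guarantees that $x$ ever learns of one, and as you note yourself, invoking that here would be circular, since Lemma~\ref{lem:Mirror} is an ingredient of the proof of Theorem~\ref{theo:phase2}.

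The missing idea is a different mechanism of the periodic $BuildTriangle()$: each node $u$ introduces \emph{itself} to its direct neighbors, sending $m=(build\text{-}triangle,u)$ to all $w\in u.S^-\cup u.P^-\cup\{u.succ^+_1,u.pred^+_1\}$, which in particular includes $u.S^*[1]$. In the paper's proof, once the iteration of Lemma~\ref{lem:Forward} terminates at a node $v'$ with no known node between $v'$ and $z$, either $z=v'.S^*[1]$ already or $m=(build\text{-}triangle,z)\in Ch_{v'}$, after whose processing $v'.S^*[1]$ is set to $z$; the self-introduction of $v'$ then creates the back-edge $(z,v')\in E^{Tri}$ with $h(x)\le h(v')<h(z)$, and the required connectivity of $x$ and $v'$ over intermediate nodes is exactly what the iterated applications of Lemma~\ref{lem:Forward} accumulated. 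With this mechanism substituted for your connectivity appeal (and the inequality read as $h(x)\le h(y)$), your decomposition goes through; without it, the proof does not close. A minor additional nit: Lemma~\ref{lem:TriConnectivity} preserves \emph{connectivity} in $G^{Tri}$, not individual edges, so "the new edge $(v,z)$ is preserved" should be weakened accordingly in your recursion step.
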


\begin{proof}
If $(x,z) \in E^{Tri}$ and $(x,y) \in E^{Tri}$ and $h(x)<h(y)<h(z)$ we can apply Lemma ~\ref{lem:Forward} and eventually $(v,z)\in E^{Tri}$ with $h(x)<h(v)<h(z)$ (resp. $h(x)>h(v)>h(z)$) and $v$ and $z$ are connected over nodes $w:$ $h(x)<h(w)<h(z)$ in every state after. Now if $(v,y')\in E^{Tri}$ with $h(v)<h(y')<h(z)$ we might again apply the lemma. Obviously we only can apply Lemma ~\ref{lem:Forward} a finite number of times until there is a node $v'$ such that $(v',z)\in E^{Tri}$ and there is no $(v',y')\in E^{Tri}$ with $h(v')<h(y')<h(z)$ and $x$ and $z$ are connected over nodes $h(x)<h(w)<h(v')$. Then either $z=v'.S^*[1]$ or $m=(build-triangle,z) \in Ch_{v'}$. If $z=v'.S^*[1]$ then eventually $v'$ will introduce itself to $z$ by a message $m'=(build-triangle,v')$, then $(z,v')\in E^{Tri}$ with $h(x)<h(v')<h(z)$ and $x$ and $v'$ are connected over nodes $w:$ $h(x)<h(w)<h(v')$.
Otherwise as soon as $v'$ processes $m'$ $v'.S^*[1]$ is set to $z$ and the same arguments as in the first case hold.
\end{proof}

Before we prove the theorem we introduce some additional definitions.

\begin{definition}
In the directed graph we define an \emph{undirected path} $p$ as a sequence of edges $(v_0,v_1),(v_1,v_2),$ \newline $\cdots , (v_{k-1},v_k$), such that $\forall i \in \left\{1,\cdots ,k\right\}:(v_i,v_{i-1})\in E^{Tri} \vee (v_{i-1},v_i)\in E^{Tri}$. Let $u_{min}=\argmin \left\{h(v): v \in p\right\}$ and $u_{max}=\argmax \left\{h(v): v \in p\right\}$ then the \emph{range of a path} $range(p)$ is given by $range(p)=u_{max}-u_{min}$.
\end{definition}

Now we are ready to prove Theorem ~\ref{theo:phase2}.

\begin{proof}
Let $x$ and $y$ be a pair of nodes connected in $G^{List*}_e$; i.e. w.o.l.g. $x=\argmax \left\{h(v) : v \in V \wedge v<y\right\}$ and $y=\argmin \left\{h(v) : v \in V \wedge h(v)>h(x) \right\}$. Then as $G^{Tri}$ is weakly connected there is an undirected path connecting $x$ and $y$. Let $p(t)$ be such a path at time $t$. We show that there is a path $p(t')$ with $t'>t$ that connects $x$ and $y$ weakly such that $range(p(t))>range(p(t'))$. Let $u_{min}$ and $u_{max}$ be the smallest and greatest node on the path $p(t)$ that limit the range of $p(t)$. Then $u_{min}$ is connected to nodes $w_1$ and $w_2$. If $(w_1,u_{min})\in E^{Tri}$ and $(u_{min},w_1)\notin E^{Tri}$. Then according to Lemma ~\ref{lem:Mirror} eventually $(u_{min},v_1)\in E^{Tri}$ and $w_1$ and $v_1$ are connected over nodes $w'$ such that $h(v_1)<h(w')<h(w_1)$. The same holds for $w_2$. Thus eventually $(u_{min},v_1)\in E^{Tri}$ and $(u_{min},v_2)\in E^{Tri}$ and $w_1$ and $v_1$ are connected over nodes $w'$ with $h(v_1)<h(w')<h(w_1)$ and $w_2$ and $v_2$ are connected over nodes $w'': h(v_2)<h(w'')<h(w_2)$. Then either $v_1=v_2$ and we can construct another path connecting $x$ and $y$ over $w_1$ and $w_2$ with $u'_{min}=v_1=v_2$, and $h(u'_{min})>h(u_{min})$, otherwise $h(v_1)<h(v_2)$ or $h(v_2)<h(v_1)$. W.l.o.g. we assume $h(v_1)<h(v_2)$. Then according to Lemma ~\ref{lem:Forward} eventually $(v'_2,v_2)\in E^{Tri}$ and $v'_2$ and $u_{min}$ are connected over nodes $w'' : h(u_{min})<h(w'')<h(v'_2)$. Either $v_1=u_{min}.S^*[1]$ or also according to Lemma ~\ref{lem:Forward}  $(v'_1,v_1)\in E^{Tri}$ and $v'_1$ and $u_{min}$ are connected over nodes $w'': h(u_{min})<h(w'')'<h(v'_1)$. Note that according to the proof of Lemma ~\ref{lem:Forward} $v'_1$ and $v'_2$ have to be in $u_{min}.S^*$ at the time the edge $(v'_2,v_2)\in E^{Tri}$ resp. $(v'_1,v_1)\in E^{Tri}$ is created. Then according to Lemma ~\ref{lem:RConnectivity} $v'_1$ and  $v'_2$ are also connected to $u_{min}.S^*[1]$. Thus again we can construct another path connecting $x$ and $y$ over $w_1$ and $w_2$ with $u'_{min}=u_{min}.S^*[1]\wedge h(u_{min}.S^*[1])>h(u_{min})$. The same arguments can be used symmetrically to show that $u_{max}$ can be decreased. Thus eventually a connecting path can be found with a strict smaller range and by applying these arguments a finite number of times $(x,y)\in E^{Tri}$ and $(y,x) \in E^{Tri}$. Then if $y \in x.S^*$ $x.S^*[1]=y$ otherwise $m=(build-triangle,y) \in Ch_x$ will eventually be processed and $x.S^*[1]$ is set to $y$. By the same arguments eventually $x=y.P^*[1]$. As this holds for every pair $x$, $y$ in $G^{List*}_e$, eventually $G^{List}_e=G^{List*}_e$.
\end{proof}

\subsubsection{Phase 3: From the sorted list to the CONE-network}

In this section we show that once the network has stabilized into a sorted list, it eventually also stabilizes into the legal cone-network, that means,
each node $u$ maintains a correct set of neighbors, so the lists $u.P^+,u.S^+,u.P^-,u.S^-$ maintain the correct nodes, so for example the list $u.P^+$ maintains
the nodes in $P^+(u)$. For all the following lemmas and theorems in this section we assume $G^{List}_e=G^{List*}_e$.

We will first do the proof for the sets $u.S^-$ and $u.S^+$.
The following lemma will be helpful.

\begin{lemma}\label{induct1}
If every node at the right (with larger id) of a node $u$ knows its correct closest larger right node $succ^+_1(u)$ (stored in $u.succ^+_1$), then for all nodes $x$ which are in the correct right internal neighborhood  of $u$, $S^-(u)$, it holds that $u$ will eventually learn $x$ (and store it in $u.S^-$).
\end{lemma}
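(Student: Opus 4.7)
I would induct on the hash ordering of $S^-(u) = \{x_1, \ldots, x_k\}$ with $h(x_1) < \cdots < h(x_k)$, showing that $u$ eventually stores every $x_i$ in $u.S^-$. For the base case, the established sorted list (Phase 2) gives $u.S^*[1]$ equal to the true immediate right neighbor of $u$. A short case analysis shows this neighbor must be $x_1$: a strictly intermediate $v$ between $u$ and $x_1$ with $c(v) > c(u)$ would dominate $x_1$ from the left, contradicting $pred^+_1(x_1) = u$; and $c(v) < c(u)$ would itself place $v$ in $S^-(u)$ strictly before $x_1$, contradicting leftmostness. Since $c(x_1) < c(u)$ whereas $u.succ^+_1$ stores a node of capacity larger than $c(u)$, the slot $u.S^*[1] = x_1$ must reside in $u.S^-$.

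For the inductive step, assume $x_1, \ldots, x_i \in u.S^-$ and aim for $x_{i+1}$. The pivotal sub-claim is $succ^+_1(x_i) = x_{i+1}$. Suppose otherwise; the case $h(succ^+_1(x_i)) > h(x_{i+1})$ is immediately excluded because $x_{i+1}$ itself (with $c(x_{i+1}) > c(x_i)$) would be a closer candidate. So take $z = succ^+_1(x_i)$ with $h(x_i) < h(z) < h(x_{i+1})$. I would verify that every intermediate $w$ between $u$ and $z$ has $c(w) < c(z)$: for $w$ left of $x_i$ this follows from $x_i \in S^-(u)$, giving $c(w) < c(x_i) < c(z)$; for $w$ right of $x_i$ it follows from $z = succ^+_1(x_i)$, again giving $c(w) \le c(x_i) < c(z)$. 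Hence $z \in S^-(u)$, contradicting the consecutiveness of $x_i$ and $x_{i+1}$ in $S^-(u)$. Thus, by the lemma's hypothesis applied to $x_i$, $x_i.succ^+_1 = x_{i+1}$.

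Next I would trace the message flow produced by the protocol. In its periodic action $u$ sends $(list-update, u.P^+ \cup \{u\})$ to $x_i \in u.S^-$; when $x_i$'s ListUpdate merges this into $x_i.P^+$, no node between $u$ and $x_i$ can compete for the closest slot, since validity combined with $x_i \in S^-(u)$ forbids any intermediate of capacity exceeding $c(x_i)$. Hence $x_i.pred^+_1$ stabilizes to $u$. At the next BuildTriangle call $x_i$ sends a build-triangle message carrying $x_i.succ^+_1 = x_{i+1}$ to $x_i.pred^+_1 = u$. When $u$ processes it, the branch for $h(x_{i+1}) > h(u)$ and $c(x_{i+1}) < c(u)$ triggers the recomputation of $u.S^-$ from $u.S^- \cup \{x_{i+1}\}$, and since $x_{i+1} \in S^-(u)$ validity preserves $x_{i+1}$ in the result. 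Weak fairness and FIFO delivery ensure all these steps complete in finite time.

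The main obstacle I anticipate is the sub-claim $succ^+_1(x_i) = x_{i+1}$: it is the single algebraic fact that converts the $succ^+_1$-hypothesis on nodes right of $u$ into information directly relayable to $u$, and it requires the careful case split on whether intermediates lie left or right of $x_i$. A secondary concern is confirming that $x_i.pred^+_1$ really stabilizes to $u$ after one list-update exchange, which follows because validity plus $x_i \in S^-(u)$ rules out any competitor closer than $u$ in $x_i$'s merged $P^+$ candidate set.
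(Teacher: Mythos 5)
Your proof is correct and takes essentially the same route as the paper's: an induction along $S^-(u)$ in ascending hash order, with the base case supplied by the sorted list and the inductive step resting on the sub-claim $succ^+_1(x_i)=x_{i+1}$, so that $x_i$'s periodic $BuildTriangle$ introduces $u$ (as $x_i.pred^+_1$) and $x_{i+1}$ (as $x_i.succ^+_1$) to each other. You additionally spell out two steps the paper leaves implicit---the careful verification that $succ^+_1(x_i)=x_{i+1}$ and that $x_i.pred^+_1$ indeed stabilizes to $u$---which fills in rather than changes the argument.
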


\begin{proof}
We will prove it by induction over $x$ (in ascending order of their $h(x)$ values).

\textbf{Induction basis:} $x$ is the direct right neighbor of $u$.  
In that case $u$ already knows $x$, since we assumed the presence of the sorted list, and the statement holds.

\textbf{Inductive step:} If $u$ knows the next node to the left of $x$ (let this be $y$) which is in the right internal neighborhood of $u$, then $u$ eventually learns $x$.
In this case, $x$ is the closest larger right node of $y$. That is because $x$ must be larger (in terms of capacity) than $y$, since else $x$ would not
be in $S^-(u)$. By hypothesis, $y$ knows about $x$ (so $y.succ^+_1=x$). So, when $y$ conducts its periodic $BuildTriangle$ call, it will introduce $u$ and $x$ to each other (as they are $y.pred^+_1$ and $y.succ^+_1$ respectively)
and $u$ will learn about $x$.
\end{proof}

Now we show that eventually all nodes learn their correct right larger-node lists.

\begin{lemma}\label{S^+}
Once the list has been established and $G^{List}_e=G^{List*}_e$, then eventually every node $u$ learns its correct right larger-node list $S^+(u)$ (and stores it in $u.S^+$).
\end{lemma}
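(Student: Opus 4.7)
The plan is to induct on the position rank from right to left (over nodes sorted by $h$-value), proving that each node $u$ eventually satisfies $u.S^+ = S^+(u)$. For the rightmost node there are no further nodes to the right, so $S^+(u) = \emptyset$, and by the validity assumption (which forbids $u.S^+$ from containing nodes of smaller $h$-value) the list is already empty. For the inductive step, fix $u$ and assume $v.S^+ = S^+(v)$ eventually holds for every $v$ with $h(v) > h(u)$. When $S^+(u) = \emptyset$, the same validity argument as in the base case forces $u.S^+ = \emptyset$; otherwise let $v_1 = succ^+_1(u)$, so that $S^+(u) = \{v_1\} \cup S^+(v_1)$.

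The key mechanism is the periodic list-update message: each node $w$ sends $w.S^+ \cup \{w\}$ to every node in $w.P^-$. By definition of $v_1$ we have $u \in P^-(v_1)$. Hence, once $v_1$ has recorded $u$ inside $v_1.P^-$ and its own $v_1.S^+$ has converged (the latter guaranteed by the inductive hypothesis), the list-update message that reaches $u$ carries exactly $\{v_1\} \cup S^+(v_1) = S^+(u)$. The ListUpdate handler then recomputes $S^+_{tmp}$ from $u.S^+ \cup RList^+$, so every element of $S^+(u)$ is inserted; any pre-existing spurious entry $w \in u.S^+ \setminus S^+(u)$ is dominated by some member of $S^+(u)$ delivered in the message and is therefore placed outside $S^+_{tmp}$, hence delegated to $S^+_{tmp}[1]$ and removed from $u.S^+$.

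The central obstacle is establishing that $v_1.P^-$ really does come to contain $u$. This is the symmetric counterpart of Lemma \ref{induct1}: once every node to the left of $v_1$ knows its correct $pred^+_1$, $v_1$ collects its full $P^-(v_1)$ from the build-triangle introductions its left-neighbors send out. To close the loop I would strengthen the induction so that, alongside $u.S^+ = S^+(u)$, one simultaneously tracks convergence of $u.succ^+_1$ and $u.pred^+_1$, invoking the completely symmetric argument (right-to-left for $S^+$, $S^-$, $succ^+_1$ and left-to-right for $P^+$, $P^-$, $pred^+_1$). With this joint invariant, the sorted list $G^{List*}_e$ seeds the immediate neighbors, BuildTriangle lifts each newly learned node into the appropriate $S^\pm$ or $P^\pm$ slot, and the list-update messages propagate the capacity hierarchy outward from each local maximum, yielding $u.S^+ = S^+(u)$ in the limit.
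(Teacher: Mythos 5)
Your overall skeleton matches the paper's: induction over the nodes in descending order of $h$, the decomposition $S^+(u)=\{v_1\}\cup S^+(v_1)$ with $v_1=succ^+_1(u)$, and the periodic list-update from $v_1$ to the nodes in $v_1.P^-$ as the delivery mechanism, with dominated spurious entries cleaned out by delegation. You also correctly isolate the crux: why does $u$ ever end up in $v_1.P^-$? But your resolution of that crux has a genuine gap. You invoke the mirror of Lemma~\ref{induct1} at $v_1$, whose hypothesis is that \emph{every node to the left of $v_1$} knows its correct $pred^+_1$ --- in particular $u$ itself and nodes to the left of $u$, none of which are covered by your descending induction hypothesis. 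Your proposed repair, a ``joint invariant'' tracking $succ^+_1$ and $pred^+_1$ simultaneously, is circular as stated: the $S^+$-step at $u$ needs $pred^+_1$-correctness at nodes left of $v_1$; under the mirrored argument, their $P$-side convergence in turn rests on $S$-side facts ($S^-$/$succ^+_1$ correctness at their right neighbors), and those nodes need not lie to the right of $u$ --- the dependency chain can return to $u$ itself. You exhibit no well-founded ordering for the joint induction, and it is not clear one exists along the lines you sketch.

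The paper closes the loop entirely inside the one-sided descending induction, using two observations you are missing. First, the induction hypothesis gives every node right of $u$ its correct $succ^+_1$, so Lemma~\ref{induct1} applies to $u$ itself and yields $u.S^-=S^-(u)$ eventually; the rightmost node $v\in S^-(u)$ satisfies $succ^+_1(v)=succ^+_1(u)=v_1$ (otherwise $succ^+_1(v)$ would itself lie in $S^-(u)$), and since $v.pred^+_1=u$ and $v.succ^+_1=v_1$ (the former self-establishing because $u=pred^+_1(v)$ by definition of $S^-(u)$ and $u$ introduces itself to $v$, the latter from the hypothesis), $v$'s periodic BuildTriangle introduces $u$ and $v_1$ to each other. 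Second, once $u$ sets $u.succ^+_1=v_1$, its periodic self-introduction reaches $v_1$, and $v_1$ retains $u$ in $v_1.P^-$ \emph{without any left-side induction}: $u\in P^-(v_1)$ holds by definition (every node strictly between $u$ and $v_1$ has capacity below $c(u)$), so no valid candidate can ever evict $u$ from that list. This local, definition-driven argument is exactly what dissolves the circularity your appeal to the full $P$-side development creates; the remainder of your argument (the content of the list-update message and the delegation of dominated entries) then goes through essentially as you wrote it.
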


\begin{proof}
We will prove this by induction over the nodes $u$ (in descending order of their $h(u)$ values).

\textbf{Induction basis:} $u$ does not have any closest larger right node $succ^+_1(u)$. The statement is obliviously true.

\textbf{Inductive step:} If every node at the right of $u$ in the list, knows its correct right larger-node list, then eventually $u$ will learn its correct right larger-node list $S^+(u)$.

From the induction hypothesis, every node at right of $u$ knows its correct right larger-node list (so also its correct closest larger right node), so according to Lemma ~\ref{induct1}, $u$ will eventually learn its correct right internal neighborhood (and store it in $u.S^-$). Let $v$ the node being the most right one in $u.S^-$. It is obvious that the  closest larger right node of $v$, $succ^+_1(v)$, is also the closest larger right node of $v$, since otherwise $succ^+_1(v)$ would also be in $u.S^-$. By the inductive hypothesis, $v$ knows this node, and will introduce it to $u$ by its periodic $BuildTriangle$ call. So, once $u$ learns $succ^+_1(v)$ (and as a consequence $succ^+_1(v)$ learns $u$ after $u$'s periodic $BuildTriangle$ call, $succ^+_1(v)$ will also send to $u$ its right larger-node list ($u.S^+$), through its periodic $list-update$ message, which (together with $succ^+_1(v)$) is the correct right larger-node list of $u$.
\end{proof}

\begin{lemma}\label{S^-}
If $G^{List}_e=G^{List*}_e$, then eventually every node $u$ learns its correct right internal neighborhood $S^-(u)$ (and stores it in $u.S^-$).
\end{lemma}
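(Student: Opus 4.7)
The plan is to combine Lemma~\ref{S^+} with Lemma~\ref{induct1}. By Lemma~\ref{S^+}, once the sorted list has stabilized, eventually every node $v$ learns its correct right larger-node list $S^+(v)$, and in particular its correct first larger successor $succ^+_1(v)$, which sits in $v.succ^+_1$ as the first element of that list. Applying this to every node $v$ with $h(v)>h(u)$ yields a state after which, in every later state, each such $v$ has the correct value of $v.succ^+_1$. This is precisely the hypothesis of Lemma~\ref{induct1}, so that lemma implies that every $x\in S^-(u)$ is eventually inserted into $u.S^-$ and remains there.

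For the converse inclusion, I would show that no spurious node can persist in $u.S^-$. Suppose $v\in u.S^-$ but $v\notin S^-(u)$. By the definition of $S^-(u)$, there must exist a node $w$ with $h(u)<h(w)<h(v)$ and $c(w)>c(v)$. Because the nodes at the right of $u$ have already stabilized their $S^+$/$S^-$ lists (by Lemmas~\ref{S^+} and~\ref{induct1} applied to the node immediately to the right of $u$ in the sorted list), and because the periodic $BuildTriangle$ and $list\text{-}update$ messages propagate larger-capacity neighbors along the sorted list, $u$ eventually receives an introduction to such a $w$. Once $w\in N_u$, the membership condition for $u.S^-$ in Definition~\ref{def:valid} is violated by $v$, so the next periodic consistency check delegates $v$ to a closer node. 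Hence eventually $u.S^-\subseteq S^-(u)$, and together with the first part we conclude $u.S^-=S^-(u)$.

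The main obstacle is the second part. The first part follows almost directly by chaining Lemma~\ref{S^+} and Lemma~\ref{induct1}, but the \emph{no spurious nodes} claim requires carefully tracking how information about a dominating $w$ reaches $u$. It relies on the stabilization already achieved for right-hand nodes, combined with the fact that the build-triangle introductions between consecutive entries of $u.S^*$ and the $list\text{-}update$ messages forwarded along $S^-$/$S^+$ edges guarantee that every intermediate large-capacity node between $u$ and any current member of $u.S^-$ is eventually learned by $u$, at which point the consistency check prunes the invalid entry.
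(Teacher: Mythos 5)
Your first paragraph is, essentially verbatim, the paper's entire proof: the paper disposes of this lemma in two sentences by observing that Lemma~\ref{S^+} supplies, for every node, the correct $S^+$ list and hence the correct $succ^+_1$, which is exactly the hypothesis of Lemma~\ref{induct1}, applied then to every node $u$. Your second part — pruning spurious entries so that $u.S^-\subseteq S^-(u)$ — goes beyond what the paper proves here: the paper's argument only yields the containment $S^-(u)\subseteq u.S^-$ and silently leaves the converse to the standing assumption that every node has performed a consistency check (Definition~\ref{def:valid}), even though validity is only local to $N_u$ and so does not by itself rule out a spurious $v$ whose dominating node $u$ has never heard of. Your instinct that this needs an argument is therefore sound, and it matters later (the closure proof of Theorem~\ref{theo:closure} uses that the internal neighborhoods are exactly correct). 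Your justification can, however, be tightened to avoid the vague appeal to propagation along the sorted list: given spurious $v\in u.S^-$, let $w$ be the node of maximal capacity strictly between $u$ and $v$, so $c(w)>c(v)$; if $c(w)<c(u)$ then $w\in S^-(u)$ and Lemma~\ref{induct1} already guarantees $u$ learns $w$, while if $c(w)>c(u)$ then $h(succ^+_1(u))\leq h(w)<h(v)$ and Lemma~\ref{S^+} guarantees $u$ learns $u.succ^+_1$ in that gap; in either case, once this node enters $N_u$, the membership condition for $u.S^-$ is violated and the next periodic consistency check delegates $v$. In short: same skeleton as the paper for the inclusion the paper proves, plus a correct (and with the above case split, fully rigorous) treatment of a direction the paper glosses over.
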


\begin{proof}
By Lemma ~\ref{S^+}, there is a point where every node knows its right larger-node list $S^+(u)$ .
This is the hypothesis of Lemma ~\ref{induct1} for all nodes $u$, so by using this lemma for every node $u$ we derive the proof.
\end{proof}

\begin{theorem}\label{theo:phase3}
If $G^{List}_e=G^{List*}_e$ then eventually, every node $u$ learns its correct internal neighborhood $S^-(u), P^-(u)$ , as well as its correct larger-node lists $S^+(u),P^+(u)$.
\end{theorem}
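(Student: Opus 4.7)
The plan is to obtain the $P$-side statements by a direct symmetric analogue of the $S$-side arguments already carried out in Lemmas~\ref{induct1}, \ref{S^+}, and \ref{S^-}, and then simply combine the four conclusions. Concretely, I would first state a mirror of Lemma~\ref{induct1}: if every node to the left of $u$ knows its correct closest larger left neighbor $\pred^+_1$, then $u$ eventually learns every node in $P^-(u)$. The proof is the reflection of the earlier one: the leftmost node of $u.P^-$ is already given by the sorted list, and inductively, if $y$ is the next node in $P^-(u)$ that is already known to $u$, then the following member $x$ of $P^-(u)$ (further to the left) must satisfy $\pred^+_1(y)=x$ because any larger node strictly between $y$ and $x$ would itself belong to $P^-(u)$; since $y$ stores $x$ and $u=y.\succ^+_1$, the periodic $\textsc{BuildTriangle}$ call at $y$ introduces $u$ and $x$ to each other.

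Next I would mirror Lemma~\ref{S^+} to conclude that eventually every node $u$ learns its correct $P^+(u)$. The induction is over nodes in \emph{ascending} order of $h(u)$: the base case is the leftmost node, which has no left larger neighbor. For the inductive step, I use the mirrored Lemma~\ref{induct1} to deduce that $u$ eventually learns the correct $P^-(u)$; letting $v$ be its leftmost element, $\pred^+_1(v)$ is also $\pred^+_1(u)$ (any candidate in between would sit in $u.P^-$), and by induction $v$ already knows it. Through the periodic $\textsc{BuildTriangle}$ at $v$, node $u$ learns $\pred^+_1(u)$, and then the periodic $(list\text{-}update,v.P^+\cup\{v\})$ message from $\pred^+_1(u)$ delivers the full correct $P^+(u)$ list. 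Finally, mirroring Lemma~\ref{S^-} (applying the mirrored Lemma~\ref{induct1} once every node knows $P^+$) gives the correctness of $P^-(u)$.

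To finish the proof of Theorem~\ref{theo:phase3}, I would invoke Lemmas~\ref{S^+} and \ref{S^-} for the $S$-side, and their $P$-side mirrors just described, and observe that all four convergence claims hold simultaneously because the periodic actions of every node keep executing forever under weak fairness; hence there is a state from which $u.S^+=S^+(u)$, $u.S^-=S^-(u)$, $u.P^+=P^+(u)$, and $u.P^-=P^-(u)$ for all $u$.

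The main (and essentially only) obstacle is verifying that the symmetry between the $S$-side and $P$-side arguments really is perfect, i.e., checking the pseudocode to confirm that every periodic action or message used in the $S$-side proofs ($\textsc{BuildTriangle}$ introductions of $\pred^+_1,\succ^+_1$ pairs, the $list\text{-}update$ containing $u.P^+$ sent to $u.S^-$ and vice versa, and delegation rules) has an exact left/right counterpart. Since the code is explicitly symmetric in $P$ and $S$, no new idea is required beyond a careful reversal of directions in each lemma.
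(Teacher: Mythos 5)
Your proposal matches the paper's proof exactly: the paper likewise derives the $S^+(u),S^-(u)$ claims from Lemmas~\ref{S^+} and~\ref{S^-} and disposes of $P^+(u),P^-(u)$ ``by symmetry,'' which is precisely the mirrored chain of Lemmas~\ref{induct1}, \ref{S^+}, \ref{S^-} you spell out. Your version is simply a more explicit working-out of that symmetry (correctly noting, e.g., that a larger-capacity node strictly between consecutive members of $P^-(u)$ would itself lie in $P^-(u)$), so no new idea is needed and the argument is sound.
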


\begin{proof}
We already showed that for the right part of the neighborhood (for $S^+(u)$ and $S^-(u)$) by lemmas ~\ref{S^+} and ~\ref{S^-}.
By symmetry (i.e. by using symmetric proofs for the left part) it also holds for $P^+(u)$ and $P^-(u)$.
\end{proof}


Combining Theorem ~\ref{theo:phase1}, Theorem ~\ref{theo:phase3} and Theorem ~\ref{theo:phase3} we can show that Theorem ~\ref{theo:convergence} holds, and by our protocol each weakly connected network converges to a CONE network.

\subsubsection{Closure and Correctness of the data structure}

We showed that from any initial state we eventually reach a state in which the network forms a correct CONE network. We now need to show that in this state the explicit edges remain stable and also that each node stores the data it is responsible for.

\begin{theorem}\label{theo:closure}
If $G_e=G^{CONE}$ at time $t$ then for $t'>t$ also $G_e=G^{CONE}$.
\end{theorem}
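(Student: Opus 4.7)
The plan is to show that in a stable CONE state every enabled action, periodic or message-driven, leaves the four neighbor lists $u.S^+, u.S^-, u.P^+, u.P^-$ of every node unchanged. Since $E^C$ is derived from exactly these lists and $u.DS$-edges are by definition excluded from $E^C$, this is precisely closure.

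The first step is to dispatch the periodic $\tau$-action. The consistency check is a no-op because each node's state is valid in the sense of Definition~\ref{def:valid}. For the list-update messages, note that $u$ sends $u.P^+\cup\{u\}$ to every $v\in u.S^-$; since $v\in S^-(u)$ is equivalent to $pred^+_1(v)=u$, we have $P^+(v)=\{u\}\cup P^+(u)$, so the transmitted list equals $v.P^+$ exactly. The ListUpdate handler then splits the list by hash and capacity relative to $v$ and re-derives $P^+_{tmp}$ from $v.P^+\cup LList^+ = v.P^+$, so no entry is ever removed or delegated. The symmetric argument handles the $u.P^-$ recipient of $u.S^+\cup\{u\}$.

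Next we would verify the build-triangle messages emitted by BuildTriangle(). Every identifier $x$ introduced in the periodic action is either $u$ itself, $u.pred^+_1$, $u.succ^+_1$, or a consecutive-pair member of $u.P^*\cup u.S^*$. In each sub-case a short capacity/hash argument shows the receiver $w$ must already have $x$ in one of its CONE-neighbor lists (for instance, when $s=u.succ^+_1$ receives $p=u.pred^+_1$, either $c(p)>c(s)$ and $p=pred^+_1(s)\in P^+(s)$, or $c(p)<c(s)$ and $p\in P^-(s)$ because $succ^+_1(p)=s$). Hence the first two branches of the BuildTriangle handler, which would overwrite $w.succ^+_1$ or $w.pred^+_1$, cannot fire without contradicting the stable correctness of those pointers; the remaining branches recompute $S^-_{tmp}$ or $P^-_{tmp}$ from a valid list together with an $x$ that already fits, returning the same list. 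Any re-introduction message this spawns is handled inductively by the same case analysis and terminates because each delegation strictly decreases the hash-distance from the carrier to the target.

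Finally, the data-layer messages CheckInterval, UpdateInterval, StoreData, and ForwardData manipulate only $u.DS$ and the references therein; their BuildTriangle calls are covered by the step above, and their reference edges do not count toward $E^C$. The main obstacle we anticipate is the informal line ``calculate $P^-_{tmp}$ out of $u.P^-$ and $x$'' inside BuildTriangle: to make the argument airtight one must read it as ``insert $x$ into $u.P^-$ and then discard every element violating Definition~\ref{def:valid}'' and check, by the capacity/hash case analysis, that the output coincides with $u.P^-$ whenever the input was CONE-valid and $x$ satisfied the CONE relation to $u$. The most delicate sub-case is $c(x)>c(u)$ with $h(x)<h(u.pred^+_1)$, where the delegation destination must be forced to be $u.pred^+_1$ so that the inductive argument along the delegation chain closes.
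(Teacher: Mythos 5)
Your proposal is correct in substance, but it takes a noticeably different route from the paper. The paper's proof is a compact structural argument: $G_e$ changes only when an explicit edge is added or deleted; an edge $(u,v)$ with, say, $h(u)<h(v)$ (so $v\in S^+(u)\cup S^-(u)$) is deleted only when $u$ learns of some $w$ with $h(u)<h(w)<h(v)$ and $c(w)>c(v)$; by the correctness of all internal neighborhoods in the stable state (Theorem~\ref{theo:phase3}) no such $w$ exists in $V$, and since identifiers of non-existent nodes never occur, no in-flight message can deliver one -- hence no deletion, and by the same criterion no insertion. Your proof expands this single delegation criterion into an exhaustive handler-by-handler no-op verification, and the individual claims you make do check out: $P^+(v)=\{u\}\cup P^+(u)$ for $v\in S^-(u)$ is exactly why the periodic list-update payload coincides with the recipient's correct list; your triangulation sub-case ($c(p)>c(s)$ forces $p=pred^+_1(s)$, otherwise $p\in P^-(s)$) is right; and the forced delegation to $u.pred^+_1$ is correct because $P^-(u)$ lies strictly between $pred^+_1(u)$ and $u$. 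What the paper's abstraction buys is immunity to pseudocode details (your worry about how to read ``calculate $P^-_{tmp}$ out of $u.P^-$ and $x$'' simply never arises); what your version buys is an actual audit of the code that the paper's one-paragraph proof implicitly trusts. One narrowing you should repair: in the BuildTriangle analysis you only treat an $x$ that ``already fits'' the receiver's lists, but the theorem's hypothesis constrains only the explicit edges at time $t$, so channels may still carry arbitrary stale (real) identifiers; you need the easy complementary case that an $x$ satisfying no CONE relation to the receiver is dropped from the recomputed list and merely delegated, which touches no explicit edge. Note also that termination of the delegation chains, which you argue via decreasing hash-distance, is not actually needed for closure -- endlessly circulating messages would only populate $E_i$, which is irrelevant to $G_e=G^{CONE}$.
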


\begin{proof}
The graph $G_e=(V,E_e)$ only changes if the explicit edge set is changed. So if we assume that $G_e=G^{CONE}$ at time $t$ and for $t'>t$ also $G_e\neq G^{CONE}$ then we added or deleted at least one explicit edge. Let $(u,v)\in E_e$ at time $t$. W.l.o.g. we assume $h(u)<h(v)$. Either $v \in S^+(u)$ or $v\in S^-(u)$. In both cases the edge is only deleted if $u$ knows a node $w$ with $h(u)<h(w)<h(v)$ and $c(w)>c(v)$ as following from Theorem ~\ref{theo:phase3} all internal neighborhoods are correct in $G^{CONE}$ there can not be such a node $w$. By the same argument also no new edges are created. Thus $G_e=G^{CONE}$ at time $t'$.
\end{proof}

So far we have shown that by our protocol eventually a CONE network is formed. It remains to show that also by our protocol eventually each node stored the data it is responsible for.

\begin{theorem}\label{theo:datastructure}
If $G_e=G^{CONE}$ eventually each node stores exactly the data it is responsible for.
\end{theorem}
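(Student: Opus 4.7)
The plan is to argue in two stages: first, that any misplaced data is eventually detected and migrated toward its correct holder via the periodic check-interval mechanism; second, that once a data item sits at the correct responsible node it remains there, yielding convergence to a state where each node stores exactly the data it is responsible for.

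First I would exploit the fact that, by Theorem \ref{theo:closure}, the topology is frozen at $G^{CONE}$, so every supervising node $u$ has, by Theorem \ref{theo:responsibility}, accurate knowledge of all nodes whose responsibility range intersects $I_u$; in particular $u$ can compute the correct subinterval of $I_u$ assigned to any other node. Periodically every node $v$ runs CheckDataIntervals, which sends a check-interval message for every $v.DS[i]=[a,b]$ to its stored reference $v.DS[i].ref$. The recipient compares $[a,b]$ against $I_u$ and against the true responsibility decomposition of $I_u$, and returns an update-interval message itemising every portion of $[a,b]$ that does \emph{not} belong to $v$, together with the correct supervising reference for each such portion. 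On receiving this reply, $v$ deletes the offending data from $v.DS$ and issues forward-data messages toward the indicated supervisors; by Lemma \ref{lem:routing} and the greedy routing rule, such a message eventually reaches the node supervising the target position in $[0,1)$, and that supervisor then emits a store-data message to the unique node in its neighbourhood that is genuinely responsible for the subinterval. The recipient of store-data inserts the item, the interval, and the correct reference into its own $DS$.

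Next I would establish closure of the correct configuration: if $d$ currently resides at its true responsible node $w$ with reference pointing to the true supervisor $w.P^*[1]$, then the periodic check-interval from $w$ to $w.P^*[1]$ produces an empty IntervalSet (every condition in CheckInterval's reply is vacuous), so no forwarding is triggered and the entry in $w.DS$ is preserved. Hence the target state is a fixed point of the data-management actions, and once reached it is maintained.

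The main obstacle will be ruling out pathological trajectories in which data cycles indefinitely or in which spurious copies left in channels or stale $DS$ entries never get cleaned up. To handle this I would introduce a potential function $\Phi$ equal to the number of (item, location) records across node-state and channel-state in which the item is not yet sitting at its unique correct responsible node with a correct supervising reference. Each completed check-interval/update-interval exchange strictly decreases $\Phi$ for every incorrectly placed item touched, because the misplaced data is removed at $v$ and replaced by an in-transit forward-data message headed, via a finite greedy path, to its true supervisor; the subsequent store-data installs the item at the unique correct node, where, as just shown, it is never again evicted. Weak fairness guarantees that every enabled periodic action is eventually executed and every message in a channel is eventually processed, so $\Phi$ strictly decreases until it reaches $0$, at which point each node stores exactly the data it is responsible for.
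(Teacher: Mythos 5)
Your proposal is correct and follows essentially the same route as the paper's proof: periodic detection of misplaced data via the check-interval/update-interval exchange (justified by Theorem~\ref{theo:responsibility}), greedy forwarding of evicted data until it reaches the supervising node, and a final store-data message installing it at the responsible node, where it then stays. One minor quibble: your claim that each completed check-interval/update-interval exchange strictly decreases $\Phi$ is imprecise, since $\Phi$ as you define it also counts in-transit channel records, so the exchange merely converts a misplaced $DS$ entry into an in-transit record and $\Phi$ actually drops only when the store-data message installs the item --- but this is harmless because each greedy path is finite (Lemma~\ref{lem:routing}) and your closure observation guarantees correctly placed items are never evicted, so the conclusion stands.
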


\begin{proof}
According to Theorem ~\ref{theo:responsibility} each node knows which node is responsible for parts of the interval it supervises. In our described algorithm each node $u$ checks whether it is responsible for the data it currently stores by sending a message to the node $v$ that $u$ assumes to be supervising the corresponding interval. If $v$ is supervising the interval and $u$ is responsible for the data, then $u$ simply keeps the data. If $v$ is not supervising the data or $u$ is not responsible for the data then $v$ sends a reference to $u$ with the id of anode that $v$ assumes to be supervising the interval. Then $u$ forwards the data to the new reference and does not store the data. By forwarding the data by Greedy Routing it eventually reaches a node supervising the corresponding interval, this node then tells the responsible node to store the data. Thus eventually all data is stored by nodes that are responsible for the data.
\end{proof}

\section{External Dynamics}\label{ops}

Concerning the network operations in the network, i.e. the joining of a new node, the leaving of a node and the capacity change of a node, we show the following:

\begin{theorem}\label{theo:dynam}
In case a node $u$ joins a stable CONE network, or a node $u$ leaves a stable CONE network 
or a node $u$ in a stable CONE network changes its capacity, we show that in any of these three cases $\mathcal O(\log^2 n)$ structural changes in the explicit edge set are necessary to reach the new stable state.
\end{theorem}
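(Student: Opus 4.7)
The plan is to treat each of the three external events separately, showing in each case that (i) only $O(\log n)$ nodes see changes to their explicit edge sets and (ii) each such node undergoes at most $O(\log n)$ structural changes, so that the product yields the claimed $O(\log^2 n)$ bound.

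For the \emph{join} of a node $u$, I would first use Lemma~\ref{log} together with the companion bound on $|S^-(u)|$ and $|P^-(u)|$ to conclude that each of the four lists $u.S^+, u.P^+, u.S^-, u.P^-$ that $u$ has to build has size $O(\log n)$ w.h.p.\ (this is exactly Theorem~\ref{theo:degree} applied to the new node). That already accounts for $O(\log n)$ new outgoing edges from $u$. Next I would argue that the pre-existing nodes whose lists must be modified are exactly the members of $u$'s CONE neighborhood together with those neighbors' chain-predecessors that now have $u$ inserted into one of their $S^+$ or $P^+$ chains; by the same high-probability arguments used in Lemma~\ref{log} this affected set has size $O(\log n)$ w.h.p. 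For each affected $v$, the insertion of $u$ causes $u$ to be added at one position of one of $v$'s chain lists and causes at most $O(\log n)$ previously valid entries (those now shadowed by $u$ because $u$ lies between them in the hash order with larger capacity) to be removed; again the chain is of length $O(\log n)$ w.h.p. Multiplying the two factors yields the $O(\log^2 n)$ bound for a join.

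For the \emph{leave} of $u$, the analysis is symmetric. The only explicit edges that must be removed are those incident to $u$ (at most $O(\log n)$ of them w.h.p.) and those edges in other nodes' chain lists where $u$ was an entry. Each neighbour $v$ that had $u$ in its $S^+$ or $P^+$ chain replaces $u$ by the next larger chain entry drawn from its own $O(\log n)$-sized chain, for at most $O(\log n)$ changes per neighbour and $O(\log^2 n)$ in total.

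For the \emph{capacity change} of $u$, I would treat it as a leave of $u$ with the old capacity immediately followed by a join with the new capacity; combining the two bounds above still gives $O(\log^2 n)$.

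The main obstacle is the careful argument for (i): showing that the propagation of a single join/leave through the $succ^+_1$- and $pred^+_1$-trees reaches only $O(\log n)$ nodes w.h.p. A naive count would include every node whose $S^+$ chain eventually passes through $u$, which could be large in adversarial settings; the proof must leverage the random hash assumption and the same chain-halving argument used in Lemma~\ref{log} to cut the recursion off after $O(\log n)$ steps w.h.p.
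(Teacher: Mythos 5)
Your proposal follows essentially the same route as the paper: it treats join, leave, and capacity change separately, bounds both the number of affected nodes and the number of changed list entries per affected node by $O(\log n)$ w.h.p.\ via the degree bound of Theorem~\ref{theo:degree}, multiplies the two factors to get $O(\log^2 n)$, handles the leave case symmetrically, and reduces the capacity change to the other two cases exactly as the paper does. The obstacle you flag in your last paragraph is dispatched in the paper's proof not by a new chain-halving argument but by the observation that every node $x$ whose list changes must store $u$ in its lists (and every removed or added entry $w$ must be stored by $u$), so that both counting factors follow directly from Theorem~\ref{theo:degree}.
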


We show the statement by considering the 3 cases separately.

\subsection{Joining of a node}

When a new node $u$ enters the network, it does so by maintaining a connection to another node $v$, which is already in the network. $u$ is forwarded due to the $BuildTriangle$ and $ListUpdate$ rules in the network until it reaches its right position, as it takes part in the linearization procedure.

\begin{theorem}
If a node $u$ joins a stable CONE network $\mathcal O(\log^2 n)$ structural changes in the explicit edge set are necessary to reach the new stable state.
\end{theorem}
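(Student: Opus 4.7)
The plan is to decompose the cost of $u$'s arrival into three complementary contributions: (i) guiding $u$ to its correct position in the sorted list, (ii) populating $u$'s own neighborhood lists once it is in place, and (iii) updating the lists of the existing neighbors that now see $u$. Since the pre-join network is stable, each existing node already satisfies the definitions in Section~\ref{problem}, and every list has size $\mathcal{O}(\log n)$ w.h.p.\ by Theorem~\ref{theo:degree}.

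For stage (i), I would track the BuildTriangle delegations that $u$ experiences after it enters via its single initial connection. Reusing the greedy-routing argument of Lemma~\ref{lem:routing}, a delegation either halves the remaining hash-distance to $u$'s true position or places $u$ correctly, so after $\mathcal{O}(\log n)$ hops $u$ is seated as the direct successor of $u.P^*[1]$ and direct predecessor of $u.S^*[1]$. Each such hop produces $\mathcal{O}(1)$ explicit edge creations and deletions, contributing $\mathcal{O}(\log n)$ structural changes.

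For stage (ii), once $u$ is linearized, its adjacent sorted neighbors forward their $P^+$ and $S^+$ lists via the periodic list-update messages, and triangle formation installs the smaller successors/predecessors of $u$ into $u.S^-$ and $u.P^-$. Since $|N_u|=\mathcal{O}(\log n)$ w.h.p.\ in the new stable state, at most $\mathcal{O}(\log n)$ new edges incident to $u$ are created. For stage (iii), each neighbor $v\in N_u$ revises its own lists: $u$ may be inserted into $v.S^+$ or $v.P^+$ (possibly displacing chain members now shielded by $u$), and $v.S^-$ or $v.P^-$ may hand off some elements via BuildTriangle to $u$. Each of $v$'s four lists is of length $\mathcal{O}(\log n)$ w.h.p., so $v$ incurs $\mathcal{O}(\log n)$ edge changes; summing over the $\mathcal{O}(\log n)$ affected neighbors yields the dominant $\mathcal{O}(\log^2 n)$ term.

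The hard part will be stage (iii), specifically arguing that the updates do not cascade past $N_u$. The delicate point is that, although a far-away $w$ might appear to need $u$ appended to its chain $w.S^+$, the protocol only propagates chain information along direct $S^-$/$P^-$ edges, and the CONE definition forces any node whose chain genuinely changes to lie in $N_u$. I would formalize this via an inductive ``locality of change'' lemma: show that if $v\notin N_u$ then for every list of $v$ the defining predicate in Section~\ref{problem} is unchanged by $u$'s arrival, because any would-be witness to a change would itself have to sit in a position that makes it a member of $N_u$ instead of $N_v$. Once that containment is established, the three-stage count gives the claimed $\mathcal{O}(\log^2 n)$ bound.
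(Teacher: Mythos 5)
Your proposal is correct and follows essentially the same route as the paper: the dominant $\mathcal O(\log^2 n)$ term is obtained by the same double count---$\mathcal O(\log n)$ nodes $x$ whose lists change (each of which must store $u$ in the new stable state) times $\mathcal O(\log n)$ displaced nodes $w$ (each of which $u$ must store), both bounds coming from Theorem~\ref{theo:degree} w.h.p.---and your ``locality of change'' lemma is precisely the paper's (implicit) justification that no node outside $N_u$ is affected. The only cosmetic difference is in your stage (i): the paper notes that delegations of $u$'s id travel as implicit (channel) edges, so only the single explicit edge $(u,v)$ is temporary, which is slightly sharper than your $\mathcal O(1)$-per-hop charge but immaterial to the final bound.
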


\begin{proof}
We show that there is at most a constant number of temporary edges, i.e. edges that are not in the stable state.
$u$ stores $v$ in its internal variables $u.P^*$ or $u.S^*$ as $v$ is the only node $u$ knows. In the periodic BuildTriangle() $u$ sends a message to $v$ containing its own id creating an implicit edge $(v,u)$. Now there can be two cases: Either $u$ is in $v$' lists $v.P^+$, $v.S^+$, $v.P^-$, $v.S^-$ in a stable state then $v$ stores $u$'s identifier or $u$ is not stored and delegated to another node $w$ creating the implicit edge $(w,u)$. Thus only explicit edges pointing to $u$ are created that are in the stable state and only the explicit edge $(u,v)$ is temporary. So far we have shown that according to Theorem ~\ref{theo:phase3} and ~\ref{theo:degree} at most $\mathcal O(\log n)$ edges are created w.h.p. that point to $u$ or start at $u$. But by the join of $u$ to the network also edges that have been in the stable state not longer exist in the new stable state. E.g. let $w\in x.S^+$ and $h(x)<h(u)<h(w)$ and $c(u)>c(w)$ then $w$ is not longer stored in $x.S^+$ as soon as $u$ is integrated in the network. According to ~\ref{theo:degree} there is at most $\mathcal O(\log n)$ w.h.p. such nodes $x$, as each node $x$ has to store $u$ in its lists, and also at most $\mathcal O(\log n)$ w.h.p. nodes $w$, as $u$ has to store each $w$ in its lists. Therefore there are at most $\mathcal O(\log^2 n)$ edges that have to be deleted.
\end{proof}

\subsection{Leaving of a node}

Once a node decides it wants to leave the network, it disconnects itself from its neighbors. In the case it is the node with the greatest capacity, it introduces its two direct neighbors to each other before doing so. In that way, connectivity is still guaranteed (at least in the stable state).

\begin{algorithm}
\begin{algorithmic}
\Function{Leave}{}
\ForAll{$data \in u.DS$}
\State  send m=forward-data(data,false) to $u.S^-[1]$
\State delegate the references in $u.DS$
\EndFor
\If {$u.P^+= \emptyset \wedge u.S^+=\emptyset$}
\State send m=(buildtriangle, $u.P^-[1]$) to $u.S^-[1]$
\EndIf
\State delete all connections, leave network
\EndFunction
\end{algorithmic}
\end{algorithm}

After the leaving, the network must stabilize again.
This means that $u.S^-[1]$ and $u.P^-[1]$ must connect to each other. Lets consider $u.P^-[1]$. Since it won't have a direct right neighbor after the leaving of $u$, the linearization process will take place again until $u.P^-[1]$ learns $u.S^-[1]$.

\begin{theorem}
If a node $u$ leaves a stable CONE network $\mathcal O(\log^2 n)$ structural changes in the explicit edge set are necessary to reach the new stable state.
\end{theorem}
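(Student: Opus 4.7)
My plan mirrors the join analysis, splitting the counting into two pieces: (i) the explicit edges incident to $u$ that must vanish once $u$ disconnects, and (ii) the edges among the surviving nodes that must be created or destroyed so that the remaining network becomes the CONE graph on $V\setminus\{u\}$. The upper bound $\mathcal{O}(\log^2 n)$ will come out of (ii); piece (i) will only contribute an $\mathcal{O}(\log n)$ additive term. Throughout I will use the fact that both before and after $u$'s departure the network under consideration is a stable CONE network, so Theorem~\ref{theo:degree} applies to bound degrees in either.

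For piece (i), Theorem~\ref{theo:degree} gives $|N_u|=\mathcal{O}(\log n)$ w.h.p.\ in the pre-leave state, and by symmetry the number of nodes $x$ with $u\in x.S^+\cup x.S^-\cup x.P^+\cup x.P^-$ is also $\mathcal{O}(\log n)$ w.h.p. Every such edge (in either direction) is deleted when $u$ leaves, contributing at most $\mathcal{O}(\log n)$ structural changes. The data-forwarding step in the Leave procedure only redirects data to $u.S^-[1]$ and delegates the references in $u.DS$; these affect $E_i$ rather than $E_e$, and moreover the explicit references stored in $u.DS$ are incident to $u$ and already counted.

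For piece (ii), the key observation is that a node $x$ needs to update its $S^\pm/P^\pm$ lists only if $u$ appeared in one of them in the old stable state, and there are at most $\mathcal{O}(\log n)$ such $x$ w.h.p.\ by the preceding paragraph. For each such $x$ I claim the symmetric difference between its old list and its new list after $u$ leaves is $\mathcal{O}(\log n)$ w.h.p. For $S^+$ and $P^+$ this is immediate: the new list differs from the old one by at most the entries ``shadowed'' by $u$ that now become visible; both the old and the new list have size $\mathcal{O}(\log n)$ w.h.p.\ by Theorem~\ref{theo:degree} (applied once to $V$ and once to $V\setminus\{u\}$, which is itself a CONE instance). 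For $S^-$ and $P^-$ the same bound holds by the corresponding lemma on $|S^-(x)|,|P^-(x)|$. Summing over the $\mathcal{O}(\log n)$ affected nodes gives at most $\mathcal{O}(\log^2 n)$ explicit edges added or removed, matching the target.

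The main obstacle is to verify that the protocol actually realises this count, rather than churning through additional intermediate explicit edges on its way to the new stable state. Since the Leave procedure explicitly introduces $u.P^-[1]$ to $u.S^-[1]$ before $u$ disconnects, the remaining graph stays weakly connected, so by Theorem~\ref{theo:convergence} it converges to the CONE graph on $V\setminus\{u\}$. I would then argue, as in the join analysis, that every delegation either places a node into its final $S^\pm/P^\pm$ slot or forwards it along an edge that already belongs to the new stable topology; thus every intermediate explicit edge created during delegation is either itself in $G^{CONE}$ on $V\setminus\{u\}$ or is one of the $\mathcal{O}(\log n)$ edges incident to $u$ being cleaned up. Combined with piece (ii), this yields the claimed $\mathcal{O}(\log^2 n)$ bound on explicit-edge modifications w.h.p.
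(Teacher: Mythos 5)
Your proposal is correct and follows essentially the same route as the paper: $\mathcal{O}(\log n)$ deleted explicit edges incident to $u$ via Theorem~\ref{theo:degree}, plus newly created edges among the survivors bounded by the product of $\mathcal{O}(\log n)$ affected nodes $x$ (those with $u$ in their lists) and $\mathcal{O}(\log n)$ newly visible nodes $w$ per list, with your concern about intermediate churn corresponding to the paper's remark that the argument is analogous to the join case, where temporary edges are shown to be few. One minor inaccuracy that does not affect the count: the Leave procedure introduces $u.P^-[1]$ to $u.S^-[1]$ only when $u.P^+ = \emptyset \wedge u.S^+ = \emptyset$ (i.e., $u$ has the largest capacity); otherwise connectivity is already preserved through $u$'s remaining larger neighbors.
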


\begin{proof}
The proof is analogous to the proof in the case of a joining node. Obviously according to ~\ref{theo:degree} w.h.p. $\mathcal O(\log n)$ edges are deleted that start at or point to the leaving node $u$. By deleting $u$ further edges have to be created. E.g. let $w\in u.S^-$ and $u \in x.S^+$ and $c(u)>c(w)$ then $w$ might now be stored in $x.S^+$ or $x.S^-$ and the edge $(x,w)$ has to be created. Again according to ~\ref{theo:degree} there are w.h.p. at most $\mathcal O(\log n)$ such nodes $x$ and  $\mathcal O(\log n)$ such nodes $w$, thus in total at most $\mathcal O(\log^2 n)$ edges have to be created.
\end{proof}


\subsection{Capacity Change}
If the capacity of a single node $u$ in a stable CONE network decreases we can apply the same arguments as for the leaving of a node, as some nodes might now be responsible for intervals that $u$ was responsible for. Additionally $u$ might have to delete some ids in $u.S^- \cup u.P^-$ and add ids in $u.S^+ \cup u.P^+$. If a node increases its capacity we can apply the same arguments as for the joining of a node, as some nodes might not longer be responsible for intervals that $u$ is now responsible for. Additionally $u$ might have to add some ids in $u.S^- \cup u.P^-$ and delete ids in $u.S^+ \cup u.P^+$. Thus the following theorem follows.

\begin{theorem}
If a node $u$ in stable CONE network changes its capacity $\mathcal O(\log^2 n)$ structural changes in the explicit edge set are necessary to reach the new stable state.
\end{theorem}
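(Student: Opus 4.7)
The plan is to reduce the capacity-change case to the join and leave cases already handled, by observing that a capacity change at $u$ affects the $\emph{CONE}$ graph only through $u$'s relative ordering (by capacity) with its neighboring nodes in the id space, and by exploiting the symmetric structure of $S^+, P^+, S^-, P^-$ around $u$. So I would begin by splitting into the two natural subcases: $c(u)$ strictly decreases, and $c(u)$ strictly increases (w.l.o.g. we may assume the change is small enough to treat as a single atomic event, since larger changes can be decomposed).

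For the \emph{capacity decrease} case, I would argue that from the point of view of any other node $x$, decreasing $c(u)$ can only (i) remove $u$ from $x.S^+$ or $x.P^+$ (and possibly insert it into $x.S^-$ or $x.P^-$), or (ii) unmask nodes $w$ with $h(u)$ between $h(x)$ and $h(w)$ that were previously dominated by $u$ but are now visible in $x$'s $S^+/S^-$ or $P^+/P^-$ lists. Symmetrically $u$'s own lists may change: some nodes currently in $u.S^-\cup u.P^-$ now have larger capacity than $u$ and must migrate to $u.S^+\cup u.P^+$. By the same counting argument used in the Leave proof: the number of nodes $x$ that stored $u$ in any of their four lists is bounded by the degree bound of Theorem~\ref{theo:degree}, which is $\mathcal O(\log n)$ w.h.p., and each such $x$ introduces at most $\mathcal O(\log n)$ new neighbors from the pool unmasked around $u$ (again bounded by the degree bound applied to the new stable state, which still satisfies Theorem~\ref{theo:degree} w.h.p.). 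This gives $\mathcal O(\log^2 n)$ edge changes.

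For the \emph{capacity increase} case, I would run the Join argument in the opposite direction: $u$ now dominates some nodes it did not previously dominate, so edges $(x,w)$ with $w\in S^+(x)$ or $w\in S^-(x)$ such that $u$ lies between $x$ and $w$ and newly has $c(u)>c(w)$ must be replaced by edges $(x,u)$; symmetrically $u$ must insert new entries into $u.S^-\cup u.P^-$ and remove entries from $u.S^+\cup u.P^+$. The number of affected ``outer'' nodes $x$ is the number of nodes that need to store $u$ in one of their lists in the new stable state, which is $\mathcal O(\log n)$ w.h.p. by Theorem~\ref{theo:degree}; each such $x$ undergoes $\mathcal O(\log n)$ edge changes by the same bound applied before the change. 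Once the affected lists have been updated the self-stabilization protocol drives the system to the new legal state, with no additional structural churn beyond the edges counted, by the closure argument (Theorem~\ref{theo:closure}) applied in the new configuration.

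The main obstacle I anticipate is making rigorous that the ``before'' and ``after'' degree bounds may be combined: the $\mathcal O(\log n)$ w.h.p. bound holds for each individual stable state, but a union of the two neighborhoods could in principle be larger. I would handle this by noting that the union of degrees of $u$ across the two stable states is still $\mathcal O(\log n)$ w.h.p., because both configurations use the same random hash assignment $h:V\to[0,1)$ and the analysis of Lemma~\ref{log} depends only on this assignment and on the relative capacity order of nodes in an interval — changing one capacity value only alters a constant number of comparisons per node, so the geometric-decay argument in the proof of Lemma~\ref{log} still yields $\mathcal O(\log n)$ w.h.p. for both $S^+(u), P^+(u)$ computed with either value of $c(u)$. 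With this observation the $\mathcal O(\log n)\times\mathcal O(\log n)=\mathcal O(\log^2 n)$ product bound carries through for both subcases, completing the proof.
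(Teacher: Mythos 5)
Your proposal takes essentially the same route as the paper, which likewise proves this theorem by reducing a capacity decrease to the leave argument and a capacity increase to the join argument, with the identical $\mathcal O(\log n)\times\mathcal O(\log n)$ counting of affected nodes $x$ and neighbors $w$ via the degree bound of Theorem~\ref{theo:degree}. Your additional care in arguing that the neighborhoods before and after the change are each $\mathcal O(\log n)$ w.h.p.\ under the same hash assignment is a refinement the paper leaves implicit, but it does not alter the approach.
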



\section{Conclusion and Future Work}

We studied the problem of a self-stabilizing and heterogeneous overlay network and gave an algorithm of solving that problem, and by doing this we used an efficient network structure. We proved the correctness of our protocol, also concerning the functionality of the operations done in the network, data operations and node operations. This is the first attempt to present a self-stabilizing method for a heterogeneous overlay network and it works efficiently regarding the information stored in the hosts. Furthermore our solution provides a low degree, fair load balancing and polylogarithmic updates cost in case of joining or leaving nodes.
In the future we will try to also examine heterogeneous networks in the two-dimensional space and consider heterogeneity in other aspects than only the capacity, e.g. bandwidth, reliability or heterogeneity of the data elements.



\newpage



%


\begin{thebibliography}{10}
\bibliographystyle{plain}

\bibitem{D74:self-stab}
Edsger~W. Dijkstra.
\newblock {Self-stabilizing systems in spite of distributed control.}
\newblock {\em Commun. ACM}, 17:643-644, November 1974.

\bibitem{cone}
C. Schindelhauer, G. Schomaker.
\newblock Weighted distributed hash tables.
\newblock In {\em SPAA '05},Pages 218 - 227 , 2005.

\bibitem{Giakkoupis}
G. Giakkoupis, V. Hadzilacos.
\newblock A Self-Stabilization Process for Small-World Networks.
\newblock In {\em PODC '05}, 2005

\bibitem{brink}
A. Brinkmann, K. Salzwedel, and C. Scheideler.
\newblock Compact, adaptive placement schemes for
non-uniform distribution requirements.
\newblock In {\em SPAA '02}, Pages 53-62, 2002.

\bibitem{brink2}
A. Brinkmann, K. Salzwedel, and C. Scheideler.
\newblock Efficient, distributed data placement strategies for
storage area networks.
\newblock In {\em SPAA '00}, pages 119-128, 2000.

\bibitem{conhash}
D. Karger, E. Lehman, T. Leighton, M. Levine, D. Lewin, and R. Panigrahy.
\newblock Consistent hashing and
random trees: Distributed caching protocols for
relieving hot spots on the World Wide Web.
\newblock In {\em STOC '97}, Pages 654-663 , 1997.

\bibitem{RF+01:can}
S. Ratnasamy, P. Francis, M. Handley, R. Karp, and S. Shenker.
\newblock A scalable content-addressable network.
\newblock In {\em SIGCOMM}, pages 161--172, 2001.

\bibitem{RD01:pastry}
A. I.~T. Rowstron and P. Druschel.
\newblock Pastry: Scalable, decentralized object location, and routing for
  large-scale peer-to-peer systems.
\newblock In {\em Middleware}, pages 329--350, 2001


\bibitem{SM+01:chord}
I. Stoica, R. Morris, D. Karger, M. Frans Kaashoek and H. Balakrishnan.
\newblock Chord: A scalable peer-to-peer lookup service for internet
  applications.
\newblock In {\em SIGCOMM}, pages 149--160, 2001.


\bibitem{CF05:stab-ring}
C. Cramer and T. Fuhrmann.
\newblock Self-stabilizing ring networks on connected graphs.
\newblock In {\em Technical report}, University of Karlsruhe (TH),  Fakultaet fuer Informatik, 2005-5.


\bibitem{DK08}
S. Dolev and R. I. Kat.
\newblock HyperTree for self-stabilizing peer-to-peer systems.
\newblock In {\em Distributed Computing}, 20(5), pages 375--388, 2008.


\bibitem{DT09}
S. Dolev and N. Tzachar.
\newblock Empire of colonies: Self-stabilizing and self-organizing distributed
  algorithm.
\newblock {\em Theor. Comput. Sci.}, 410(6-7):514--532, 2009.

\bibitem{DT10}
S. Dolev and N. Tzachar.
\newblock Spanders: distributed spanning expanders.
\newblock In {\em SAC '10}, pages 1309--1314, 2010.

\bibitem{JR+09:skip+}
R. Jacob, A.~W. Richa, C. Scheideler, S. Schmid, and H. T{\"a}ubig.
\newblock A distributed polylogarithmic time algorithm for self-stabilizing
  skip graphs.
\newblock In {\em PODC '09}, pages 131--140, 2009.

\bibitem{JRS+09:delaunay}
R. Jacob, S. Ritscher, C. Scheideler, and S. Schmid.
\newblock A self-stabilizing and local delaunay graph construction.
\newblock In {\em ISAAC '09}, pages 771--780, 2009.

\bibitem{KKS11}
S. Kniesburges, A. Koutsopoulos, and C. Scheideler.
\newblock Re-chord: a self-stabilizing chord overlay network.
\newblock In {\em SPAA '11}, pages 235--244, 2011.


\bibitem{NNS11:corona}
R. Nor, M. Nesterenko, and C. Scheideler.
\newblock Corona: A stabilizing deterministic message-passing skip list.
\newblock In {\em SSS '11}, pages 356--370, 2011.

\bibitem{ORS07:lin}
M. Onus, A.~W. Richa, and C. Scheideler.
\newblock Linearization: Locally self-stabilizing sorting in graphs.
\newblock In {\em ALENEX '07} pages 99--108, 2007.



\bibitem{SR05:ring}
A. Shaker and D.~S. Reeves.
\newblock Self-stabilizing structured ring topology p2p systems.
\newblock In {\em P2P '05}, pages 39--46, 2005.

\bibitem{smallworld}
S. Kniesburges, A. Koutsopoulos, C. Scheideler.
\newblock A Self-Stabilization Process for Small-World Networks.
\newblock In {\em IPDPS '12}, Pages 1261-1271, 2012.

\bibitem{NegBi}
N. Harvey.
\newblock CPSC 536N: Randomized Algorithms, Lecture 3.
\newblock In {\em University of British Columbia}, Pages 5, 2011-12.


\bibitem{supernode}
H. Shena, Cheng-Zhong Xub.
\newblock Hash-based proximity clustering for efficient load balancing in heterogeneous DHT networks.
\newblock In {\em J. Parallel Distrib. Comput. 68},  686-702, 2008.

\bibitem{rao}
A. Rao, K. Lakshminarayanan, S. Surana, R. Karp and I. Stoica.
\newblock Load balancing in structured P2P systems.
\newblock In {\em IPTPS 03},  2003.



\bibitem{godfrey}
P. B. Godfrey and I. Stoica.
\newblock Heterogeneity and Load Balance in Distributed Hash Tables.
\newblock In {\em IEEE INFOCOM},  2005.

\bibitem{skewccc+}
M. Bienkowski, A. Brinkmann, M. Klonowski and M. Korzeniowski, Miroslaw.
\newblock SkewCCC+: a heterogeneous distributed hash table.
\newblock In {\em OPODIS'10}, pages 219-234, 2010.


\bibitem{rio}
J.~R. Santos and R. Muntz.
\newblock Performance Analysis of the RIO Multimedia Storage System with Heterogeneous Disk Configurations.
\newblock In {\em ACM Multimedia Conference}, pages 303-308, 1998.

\bibitem{slicing}
A. Miranda, S. Effert, Y. Kang, E.~L. Miller, A. Brinkmann and T. Cortes.
\newblock Reliable and randomized data distribution strategies for large scale storage systems.
\newblock In {\em HiPC '11}, pages 1-10, 2011.

\bibitem{broadscale}
S.-Y. Didi Yao, C. Shahabi, and R. Zimmermann.
\newblock BroadScale: Efficient scaling of heterogeneous storage systems.
\newblock In {\em Int. J. on Digital Libraries, vol. 6}, pages 98-111, 2006.


\bibitem{redShare}
A. Brinkmann, S. Effert, F. Meyer auf der Heide and C. Scheideler.
\newblock Dynamic and Redundant Data Placement.
\newblock In {\em ICDCS '07}, pp.29, 2007.

\bibitem{diskarray}
T. Cortes and J. Labarta.
\newblock Taking advantage of heterogeneity in disk arrays.
\newblock In {\em J. Parallel Distrib. Comput. 63}, pages 448-464, 2003.

\bibitem{spread}
M. Mense and C. Scheideler.
\newblock SPREAD: An adaptive scheme for redundant and fair storage in dynamic heterogeneous storage systems
\newblock In {\em SODA '08}, 2008.

\bibitem{AS03:skip}
J. Aspnes and G. Shah.
\newblock Skip graphs.
\newblock In {\em SODA '03}, pages 384--393, 2003.

\bibitem{AS04:hyperring}
B. Awerbuch and C. Scheideler.
\newblock The hyperring: a low-congestion deterministic data structure for
  distributed environments.
\newblock In {\em SODA '04}, pages 318--327, 2004.

\bibitem{BK+04:Pagoda}
A. Bhargava, K. Kothapalli, C. Riley, C. Scheideler, and M.
  Thober.
\newblock Pagoda: A dynamic overlay network for routing, data management, and
  multicasting.
\newblock In {\em SPAA '04}, pages 170-179, 2004.

\bibitem{HJ+03:skipnet}
N. J.~A. Harvey, M.~B. Jones, S. Saroiu, M. Theimer, and
  A. Wolman.
\newblock Skipnet: a scalable overlay network with practical locality
  properties.
\newblock In {\em USITS'03}, pages 9-9, 2003. 
  
\bibitem{KSW05:self-repair}
F. Kuhn, S. Schmid, and R. Wattenhofer.
\newblock A self-repairing peer-to-peer system resilient to dynamic adversarial
  churn.
\newblock In {\em IPTPS '05}, pages 13--23, 2005.

\bibitem{MNR02:viceroy}
D. Malkhi, M. Naor, and D. Ratajczak.
\newblock Viceroy: a scalable and dynamic emulation of the butterfly.
\newblock In {\em PODC '02}, pages 183-192, 2002.
  
\bibitem{NW07:continous-discrete}
M. Naor and U. Wieder.
\newblock Novel architectures for p2p applications: The continuous-discrete
  approach.
\newblock {\em ACM Transactions on Algorithms}, 3(3), 2007.


\bibitem{Plaxton}
C. G. Plaxton, R. Rajaraman, and A.~W. Richa.
\newblock Accessing nearby copies of replicated objects in a distributed
environment.
\newblock In {\em SPAA '97}, pages 311-320, 1997.

\end{thebibliography}
\end{document}